\newtheorem{restate}{Theorem}
\newcommand{\eat}[1]{}
\newtheorem{definition}{Definition}
\newtheorem{lemma}{Lemma}
\newtheorem{theorem}{Theorem}
\begin{document}
%
\title{Parallel Stream Processing Against Workload Skewness and Variance}

\author
{\IEEEauthorblockN{Junhua Fang$^{\dag}$ ~~  Rong Zhang$^{\dag}$ ~~ Tom Z.J.Fu$^{\ddag}$ ~~ Zhenjie Zhang$^{\ddag}$ ~~ Aoying Zhou$^{\dag}$ ~~	Junhua Zhu$^{\diamond}$}
	\\
	$^{\dag}$Institute  for Data Science and Engineering, Software Engineering Institute,\\East China Normal University, Shanghai, China\\
	\{jh.fang,rzhang,ayzhou\}@sei.ecnu.edu.cn\\        
	$^{\ddag}$Advanced Digital Sciences Center, Illinois at Singapore Pte. Ltd.\\
	\{tom.fu,zhenjie\}@adsc.com.sg\\
	$^{\diamond}$Huawei Technologies Co. Ltd. \{junhua.zhu\}@outlook.com\\
}


\maketitle

\begin{abstract}
Key-based workload partitioning is a common strategy used in parallel stream processing engines, enabling effective key-value tuple distribution over worker threads in a logical operator. While randomized hashing on the keys is capable of balancing the workload for key-based partitioning when the keys generally follow a static distribution, it is likely to generate poor balancing performance when workload variance occurs on the incoming data stream. This paper presents a new key-based workload partitioning framework, with practical  algorithms to support dynamic workload assignment for stateful operators. The framework combines hash-based and explicit key-based routing strategies for workload distribution, which specifies the destination worker threads for a handful of keys and assigns the other keys with the hashing function. When short-term distribution fluctuations occur to the incoming data stream, the system adaptively updates the routing table containing the chosen keys, in order to rebalance the workload with minimal migration overhead within the stateful operator. We formulate the rebalance operation as an optimization problem, with multiple objectives on minimizing state migration costs, controlling the size of the routing table and breaking workload imbalance among worker threads. Despite of the NP-hardness nature behind the optimization formulation, we carefully investigate and justify the heuristics behind key (re)routing and state migration, to facilitate fast response to workload variance with ignorable cost to the normal processing in the distributed system. Empirical studies on synthetic data and real-world stream applications validate the usefulness of our proposals and prove the huge advantage of our approaches over state-of-the-art solutions in the literature.
\end{abstract}


%
\IEEEpeerreviewmaketitle

\section{Introduction}\label{sec:introduction}



Workload skewness and variance are common phenomena in distributed stream processing engines. When massive stream data flood into a distributed system for processing and analyzing, even slight distribution change on the incoming data stream may significantly affect the system performance. Existing optimization techniques for stream processing engines are designed to exploit the distributed processor, memory and bandwidth resources based on the computation workload, but potentially generate suboptimal performance when the evolving workload deviates from expectation. Unfortunately, workload evolution is constantly happening in real application scenarios (e.g., surveillance video analysis \cite{chen2005computer} and online advertising monitoring \cite{kutare2010monalytics}). It raises new challenges to distributed system on solutions to handle the dynamics of data stream while maintaining high resource utilization rate at any time.


In distributed stream processing system, abstract operators are connected in form of a directed graph to support complex processing logics over the data stream. Traditional load balancing approaches in distributed stream processing engines attempt to balance the workload of the system, by evenly assigning a variety of heterogenous tasks to distributed nodes \cite{2005designBorealis,xing2005dynamic,xing2006providing,ahmad2004network,khandekar2009cola}. Such strategies may not perform as expected in distributed stream processing systems, because of the lack of balance on the homogeneous tasks within the same abstract operator. In Fig.~\ref{fig:intro:motivation}, we present an example to illustrate the potential problem with such strategies. In the example, there are three logic operators in the pipeline, denoted by rectangles. There are three concrete task instances running in \emph{operator 2}, denoted by circles. The number of incoming tuples to the first task instance is two times of that to the second and third task instances, due to the distribution skewness on the tuples. Even if the system allocates the tasks in a perfect way to balance the workload when allocating task instances to computation nodes, the processing efficiency may not be optimal. Because of the higher processing latency in the first task instance of \emph{operator 2}, \emph{operator 1} is forced to slow down its processing speed under backpushing effect, and \emph{operator 3} may be suspended to wait for the complete intermediate results from \emph{operator 2}. This example shows that load balancing between task instances within individual logical operators is more crucial to distributed stream processing engines, to improve the system stability and guarantee the processing performance.

\begin{figure}[h]
\centering
\epsfig{file=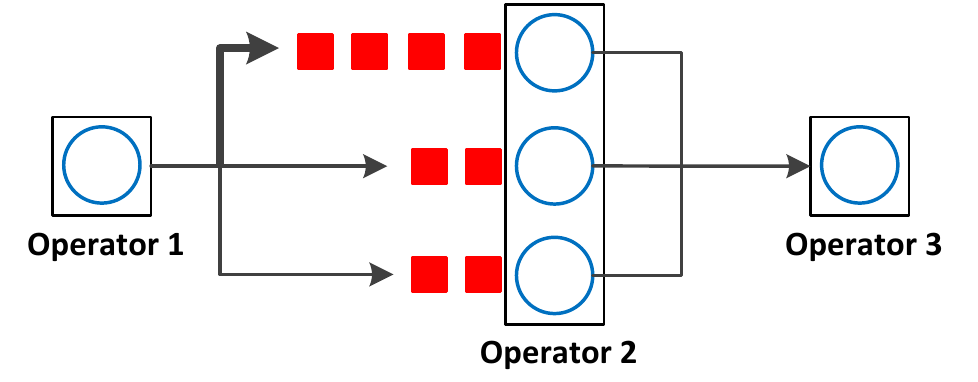, width=1\columnwidth}
\caption{The potential problem of workload imbalance within operators in real distributed stream processing engine.}
\label{fig:intro:motivation}
\end{figure}


There are two types of workload variance in distributed stream processing engines, namely \emph{long-term} workload shift and \emph{short-term} workload fluctuation. Long-term workload shifts usually involve distribution changes on incoming tuples driven by the change in physical world (e.g., regular burst of tweets after lunch time), while workload fluctuations are usually short-term and random in nature. Long-term workload shifts can only be solved by applying heavyweight resource scheduling, e.g., \cite{fu2015drs}, which reallocates the computation resource based on the necessity. Computation infrastructure of the distributed system may request more (less resp.) resource, by adding (returning resp.) virtual machines, or completely reshuffling the resource between logical operators according to the demands of operators. Such operations on the infrastructure level are inappropriate for short-term workload fluctuations, usually too expensive and render suboptimal performance when the fluctuation is over. It is thus more desirable to adopt lightweight protocols within system, to smoothly redistribute the workload between task instances, minimize the impact on the normal processing, and achieve the objective of load balancing within every logical operator. This paper focuses on such a dynamic workload assignment mechanism for individual logical operators in a complex data stream processing logic, especially against short-term workload fluctuations. Note that existing solutions to long-term workload shifts are mostly orthogonal to the mechanisms for short-term workload fluctuations, both of which can be invoked by the system optionally based on the workload characteristics.

The methods which focus on designing load balance algorithms for stream system can be divided into two categories: \textbf{split-key-based} and \textbf{non-split-key-based}.
Split-key-based method splits data of the same key and distributes them to the parallel processing instances. The representative algorithm is implemented in PKG \cite{nasir2015power}. 
However, this approach distories the semantics of key-based operations,  resulting in additional processing on some operators.
In PKG, for example, it splits data of the same key into multiple subsets and  distributes them selectively to different instances to avoid load imbalance. However, this approach leads to some negative impact on system performance.
As shown in Fig.~\ref{fig:exp:split-key-agg}, aggregations must contain some partial result operators and an additional merge operator if implemented on a split-key-based architecture. In Fig.~\ref{fig:split-key-join}, it shows the join operation on this architecture.  The data of key $ k_{1} $ from $ R $ stream are split into two parts and assigned to instances $ d_{1} $ and $ d_{2} $. To ensure the correctness of join operation, as the routing in work \cite{lin2015scalable}, data of $k_1$ from stream $ S $ should be broadcasted to both instances $d_1$ and $d_2$.

Non-split-key-based methods can guarantee the semantics of Key-based operation, but it weakens system balance capability severely. The representative algorithm is implemented in $ Readj $ \cite{gedik2014partitioning}.
It treats the data of a key as a complete granularity and does not split it during load adjustment.
However, when the number of keys is huge, $ Readj $ is of high time complexity for it considers all possible swaps by pairing tasks and keys to find the best key movement to alleviate the workload imbalance.
Furthermore, $Readj$ just considers to adjust the big load keys to make workload balance, which will incur expensive migration cost for stateful operator, such as join.

Then it is a urgent to design a new method which can not only guarantee the key-based semantics, but also balance system workload quick and efficient.

\begin{figure}[htp]
	\centering
	\begin{tabular}[t]{c}
		\hspace{-23pt}
		\subfigure[Aggregation Operation]{
			\includegraphics[height = 3.9cm, width = 5.2cm]{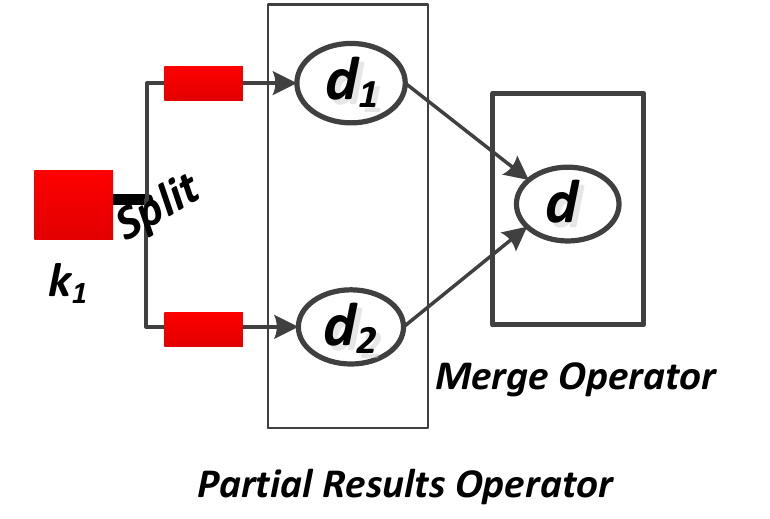}
			\label{fig:exp:split-key-agg}
		}
		\hspace{-36pt}
		\subfigure[Join Operation]{
			\includegraphics[height = 3.9cm, width = 5.2cm]{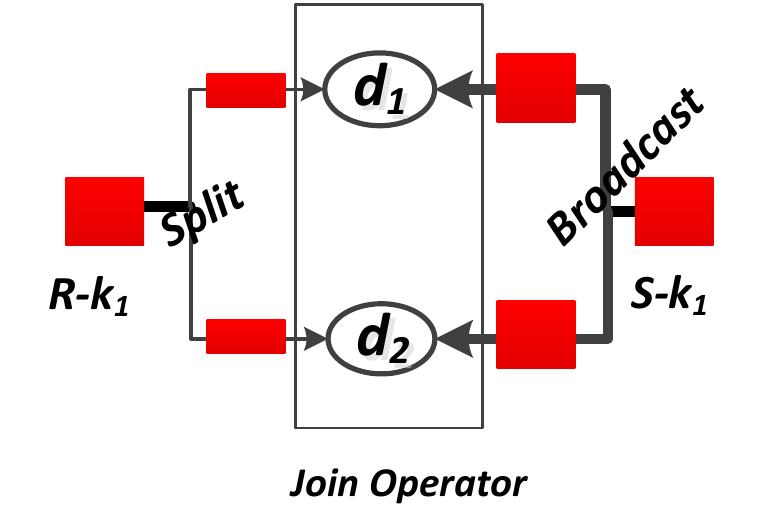}
			\label{fig:split-key-join}
		}
	\end{tabular}
	\caption{Sketch for operations that using split-key-based to make workload balance.}
	\label{fig:partitionoperators}
\end{figure}

Our proposal in this paper is based on a mixed strategy of key-based workload partitioning, which explicitly specifies the destination worker threads for a handful of keys and assigns all other keys with the randomized hashing function. This scheme achieves high \emph{flexibility} by easily redirecting the keys to new worker threads with simple editing on the routing table. It is also highly \emph{efficient} when the system sets the maximal size of the routing table, thus controlling the memory overhead and calculation cost with the routing table. Workload redistribution with the scheme is \emph{scalable} and \emph{effective}, by allowing the system to respond promptly to the short-term workload fluctuation even when there are a large number of keys present in the incoming data stream. To fully unleash the power of the scheme, it is important to design a monitoring and controlling mechanism on top of system, making optimal decisions on routing table update to achieve intra-operator workload balancing. Recent research work in \cite{gedik2014partitioning}, although employing similar workload distribution strategy, only considers migration of \emph{hot} keys with high frequencies, limiting the optimizations within much smaller configuration space. We break the limit in this paper with a new solution for distributed systems to explore possible optimizations with all candidate keys for the routing table, thus maximizing the resource utilization with ignorable additional cost. Specifically, the technical contributions of this paper include:

\begin{itemize}
\item[-] We design a general strategy to generate the partition function for data redistribution under different stream dynamic changes at runtime, which achieves scalability, effectiveness and efficiency by a single shot.
\item[-] We propose a lightweight computation model to support rapid migration plan generation, which incurs minimal data transmission overhead and processing latency.
\item[-] We present a detailed theoretical analysis for proposed migration algorithms, and prove its usability and correctness.
\item[-] We implement our algorithms on Storm and give extensive experimental evaluations to our proposed techniques by comparing with existing work using abundant datasets. We explain the results in detail.
\end{itemize}

The remainder of this paper is organized as follows. Section \ref{sec:overview} introduces the overview and preliminaries of our problem. Section \ref{sec:solution} presents our balancing algorithms to support our mixed workload distribution scheme. Section \ref{sec:optimization} proposes the optimization techniques used in the implementation of our proposal. Section \ref{sec:evaluations} presents empirical evaluations of our proposal. Section \ref{sec:relatedwork} reviews a wide spectrum of related studies on stream processing, workload balancing and distributed systems. Section \ref{sec:conclusion} finally concludes the paper and addresses future research directions.

\section{Preliminaries}\label{sec:overview}


A distributed stream processing engine (DSPE) deploys abstract stream processing logics over interconnected computation nodes for continuous stream processing. The abstract stream processing logic is usually described by a directed graphical model  (e.g., Storm \cite{toshniwal2014storm}, Heron \cite{kulkarni2015twitter} and Spark Streaming \cite{zaharia2013discretized}), with a vertex in the graph denoting a computation operator and an edge denoting a stream from one operator to another. Each data stream consists of key-value pairs, known as \emph{tuples}, transmitted by network connection between computation nodes. The computation logic with an operator is a mapping function with an input tuple from upstream operator to a group of output tuples for downstream operators.

To maximize the throughput of stream processing and improve the utilization rate of the computation resource, the workload of a logical operator is commonly partitioned and concurrently processed by a number of threads, known as \emph{tasks}. The upstream operator is aware of the concrete tasks and sends the output tuples to the tasks based on a global partitioning strategy. All concrete tasks within an operator process the incoming tuples independently. Key-based workload partitioning is now commonly adopted in distributed stream processing engines, such that tuples with the same key are guaranteed to be received by the same concrete task for processing. An operator is called \emph{stateful operator}, if there is a memory space used to keep intermediate results, called \emph{states}, of the keys based on the latest tuples. Basically, a \emph{state} is associated with an active key in the corresponding task in a stateful operator, which is used to maintain necessary information for computation. The state, for example, can be used to record the counts of the words or recent tuples in the sliding window. Because of the tight binding between key and state, when a key is reassigned to another task instance, its state must be migrated as well, in order to ensure the correctness of computation outcomes.

The workload partitioning among concrete tasks is the model as a mapping from key domain to running tasks in the successor operator. A straightforward solution to workload partitioning is the employment of hashing function (e.g., by consistent hashing), which chooses a task for a specific key in a random manner. The computational cost of task selection for a tuple is thus constant. As discussed in previous section, despite of the huge advantages of hashing on memory consumption and computation cost, such scheme may not handle well with workload variance and key skewness. Another option of workload distribution is to explicitly assign the tuples based on a carefully optimized routing table, which specifies the destination of the tuples by a map structure on the keys. Although such an approach is more flexible on dynamic workload repartitioning, the operational cost on both memory and computation is too high to afford in practice.

\begin{figure}
	\centering
	\includegraphics[height = 3.6cm, width = 8.2cm]{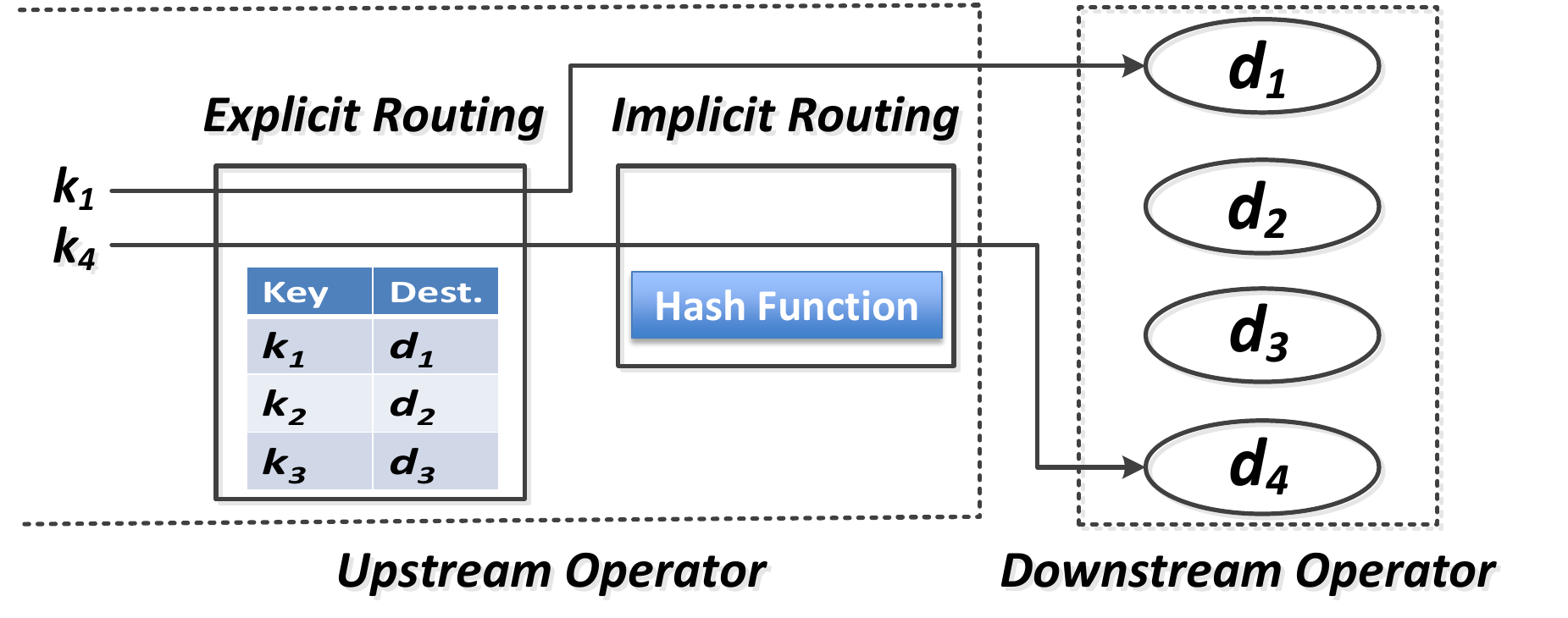}
	\centering \caption{The scheme of mixed routing with a small routing table and a hash function.}	\label{fig:overview:routing}
\end{figure}

In this paper, we develop a new workload partitioning framework based on a mixed routing strategy, expecting to balance the hash-based randomized strategy and key-based routing strategy. In Fig.~\ref{fig:overview:routing}, we present an example of the strategy with one data stream between two operators. A routing table is maintained in the system, but contains routing rules for a handful of keys only. When a new output tuple is generated for the downstream operator, the upstream operator first checks if the key exists in the routing table. If a valid entry is found in the table, the tuple is transmitted to the target concrete task instance specified by the entry, otherwise a hashing function is applied on the key to deterministically generate the target task id for the tuple. By appropriately controlling the routing table with a maximal size constraint, both the memory and computation cost of the scheme are acceptable, while the flexibility and effectiveness are achieved by updating the routing table in response to the evolving distribution of the keys.

As illustrated in the previous section, workload balancing between tasks from the same operator is crucial and it is the major problem we aim to tackle with. With the mixed routing strategy, we can solve the problem by focusing on the construction and update of the routing table with the constrained size, without considering the global structure of processing topology and workload. Therefore, our discussion in the following sections is focuses on one single operator and its routing table. Note that our approach is obviously applicable to complex stream processing logics, as evaluated in the experimental section.

\subsection{Data and Workload Models}
\vspace{3pt}
In our model, the time domain is discretized into intervals with integer timestamps, i.e., $(T_1, T_2, \ldots, T_i,\ldots)$. At the $i$-th interval, given a pair of upstream operator $U$ and downstream operator $D$, we use $\mathcal{U}$ and $\mathcal{D}$ to
denote the set of task instances within upstream operator $U$ and downstream operator $D$, respectively. We also use $N_U = |\mathcal{U}|$ and $N_D = |\mathcal{D}|$ to denote the numbers of task instances in $U$ and $D$, respectively. A tuple is tuple $\tau=(k, v)$, in which $k$ is the key of the tuple from key domain $\mathcal{K}$ and $v$ is the value carried by the tuple.
We assume $N_U$ and $N_D$ are predefined without immediate change. The discussion on dynamic resource rescheduling, i.e., changing $N_U$ and $N_D$, is out of the scope of this paper, since it involves orthogonal optimization techniques on global resource scheduling (e.g., \cite{fu2015drs}).

A key-based workload partitioning mechanism works as a mapping $F:\mathcal{K}\rightarrow \mathcal{D}$, such that a tuple $(k,v)$ is sent to task instance $F(k)$ by evaluating the key $k$ with the function $F$. Without loss of generality, we assume a universal hashing function $h:\mathcal{K}\rightarrow \mathcal{D}$ is available to the system for general key assignment. A typical implementation of hash function is consistent hashing, which is believed to be a suitable option for balancing keys among instances.  However, it does not consider so much about key granularities which is the number of data of the same key. Hence, the balanced state of parallel processing has to be obtained by moving keys among instances even with the consistent hash and we use routing table to record the mappings from keys to processing destinations which are not the basic hash destinations. A routing table $A$ of size $N_A$ contains a group of pairs from $\mathcal{K}\times \mathcal{D}$, specifying the destination task instances for keys existing in $A$. The mixed routing strategy shown in Fig. \ref{fig:overview:routing} is thus modelled by the following equation:

\begin{equation}
	F(k) =
\begin{cases}
		d,& \text{if}~\exists\; (k, d)\in A,\\
		h(k),&\text{otherwise}.
\end{cases}
\label{EQ:Fl_def}
\end{equation}

Therefore, workload redistribution is enabled by editing the routing table $A$ with an assignment function $F(\cdot)$. In the following, we provide formal analysis on the general properties of the assignment function $F(\cdot)$.

\vspace{6pt}
\noindent\textbf{Computation Cost: }We use $g_i(k)$ to denote the frequency of tuples with key $k$ in time interval $T_i$, and define the computation cost $c_i(k)$ by the amount of CPU resource necessary for all these tuples with key $k$ in time interval $T_i$. Generally speaking, $c_i(k)$ increases with the growth of $g_i(k)$. Unless specified, we do not make any assumption on the correlation between $g_i(k)$ and $c_i(k)$, both of which are measured in the distributed system and recorded as statistics, in order to support decision making on the update of $F(\cdot)$. The total workload with a task instance $d$ in downstream operator $D$ within time interval $T_i$ is calculated by
$L_i(d, F) = \sum_{\{k | F(k) = d, k\in\mathcal{K}\}} c_i(k)$.

\vspace{6pt}
\noindent\textbf{Load Balance: }Load balance among task instances of the downstream  operator $D$ is the essential target of our proposal in this paper. Specifically, we define the balance indicator $\theta_i(d, F)$ for task instance $d$ under assignment function $F$ during time interval $T_i$ as $\theta_i(d, F) = \frac{\left|L_i(d, F) - \bar{L}_i\right|}{\bar{L}_i}$, where $\bar{L}_i = \frac{1}{N_D}\sum_{d\in\mathcal{D}} L_i(d, F)$ is the average load of all task instances in $\mathcal{D}$. As it is unlikely to achieve absolute load balancing with $\theta_i(d,F)=0$ for every task instance $d$, an upper bound $\theta_{\max}$ is usually specified by the system administrator, such that the workload of task instance $d$ is approximately balanced if $\theta_i(d,F)\leq \theta_{\max}$.
\eat{{\color{red}$\theta_{\max}$ is a parameter that configured by user according their actual situation, such as a load stable scene can set smaller $\theta_{\max}$ for its low frequency balance adjustment.}}

\vspace{6pt}
\noindent\textbf{Memory Cost: }For stateful operators, the system is supposed to maintain historical information, e.g., statistics with the keys, for processing and analysing on newly arriving tuples. We assume that each operator maintains states independently on individual time interval $T_i$ and only the last $w$ time intervals are needed by any task instance. It means that the task instance erases the state from time interval $T_{i-w}$ after finishing the computation on all tuples in time interval $T_i$. This model is general enough to cover almost all continuous stream processing and analytical jobs (e.g., stream data mining over sliding window). The memory consumption for tuples with key $k$ in $T_i$ is thus measured as $s_i(k)$, and the total memory consumption for key $k$ is the summation over last $w$ intervals on the time domain, as $S_i(k, w) = \sum_{j=i-w+1}^{i} s_j(k)$.

\vspace{6pt}
\noindent\textbf{Migration Cost: }Upon the revision on assignment function $F$, certain key $k$ may be moved from one task instance to another. The states associated with key $k$ must be moved accordingly to ensure the correctness of processing on following tuples with key $k$. The migration cost is thus modelled as the total size of states under migration. By replacing function $F$ with another function $F'$ at time interval $T_i$, we use $\Delta(F,F')=\{k~|~F(k)\neq F'(k),k\in\mathcal{K}\}$. The key state migration includes all the historical states within the given window $w$. Thus, the total migration cost, denoted by $M_i\left(w, F, F'\right)$, can be defined as:

\begin{equation}
M_i\left(w, F, F'\right) = \sum_{k\in\Delta(F,F')} S_i(k, w).\label{EQ:migCost_def}
\end{equation}

\begin{table}[pt]
\small
\centering \caption{Table of Notations}
\label{tab:notation}
\begin{tabular}{|c|p{6cm}|}
\hline
Notations & Description  \\
\hline $T_i$ & The $i$-th time interval \\
\hline $U$ & Upstream operator \\
\hline $D$ & Downstream operator \\
\hline $N_U, N_D$ & Numbers of task instances in $U$ and $D$\\
\hline $\mathcal{D}$ & Instances set of downstream operator \\
\hline $(k,v)$ & Key-value pair on the data stream\\
\hline $k$ & Key of the tuple\\
\hline $v$ & Kalue of the tuple\\
\hline $d$ & Task instance in downstream operator \\
\hline $A$ & The routing table available to $U$ \\
\hline $N_A$ & Number of entries in $A$ \\
\hline $c_i(k)$& Computation cost of all tuples with key $k$ in $T_i$\\
\hline $g_i(k)$ & Frequency of key $k$ in time interval $T_i$ \\
\hline $L_i(d,F)$ & Total workload of task instance $d$ under assignment function $F$ \\
\hline $\bar{L}_i$ & Average load of all instances in $U$ in time interval $T_i$ \\
\hline $\theta_i(d,F)$ & Load balance factor of task instance $d$ \\
\hline $\theta_{\max}$ & Upper bound of imbalance tolerance \\
\hline $S_i(k,w)$ & Memory cost of key $k$ with $w$ time intervals at $T_i$ \\
\hline $\Delta(F,F')$ & Keys with different destination under $F$ and $F'$ \\
\hline $M_i(w,F,F')$ & Total migration cost by replacing $F$ with $F'$ at time interval $T_i$ \\
\hline
\end{tabular}
\vspace{-5pt}
\end{table}

All notations used in the rest of the paper are summarized in Tab.~ \ref{tab:notation}.

\subsection{Problem Formulation}\label{sec:pro_formulation}

Based on the model of data and workload, we now define our dynamic workload distribution problem, with the objectives on (i) load balance among all the downstream instances; (ii) controllable size on the routing table; and (iii) minimization on state migration cost. These goals are achieved by controlling the routing table in the assignment function, under appropriate constraints for performance guarantee. Specifically, to construct a new assignment function $F'$ as a replacement for $F$ in time interval $T_i$, we formulate it as an optimization problem, as below:
\begin{eqnarray}
&\min\limits_{F'(\cdot)} & M_i(w, F, F')\nonumber\\
&\text{s.t.} &\theta(d, F') \leq \theta_{\max}, \forall d \in \mathcal{D},\label{EQ:definition}\\
& & N_A \leq A_{max}\nonumber,
\end{eqnarray}

in which $F$ is the old assignment function and $F'$ is the variable for optimization. The target of the program above is to minimize migration cost, while meeting the constraints on load balance factor and routing table size with user-specified balance bounds $\theta_{\max}$ and $A_{\max}$ which is the maximum constrained size of $A$.


It is worthwhile to emphasize that the new assignment function is constructed at the beginning of a new time interval $T_i$. The optimization is thus purely based on the statistical information from previous time interval $T_{i-1}$. The metrics defined in previous subsection are estimated with frequencies $\{g_{i-1}(k)\}$ over the keys, the computation costs $\{c_{i-1}(k)\}$ and the memory consumption $S_{i-1}(k,w)$.

The problem of initializing the keys in $\mathcal{K}$, with the task instance set $\mathcal{D}$ and load balance constraint $\theta_{max}$, is a combinatorial NP-hard problem, as it can be reduced to Bin-packing problem\cite{karmarkar1982efficient}. Even Worse, our optimization problem also puts constraints on the maximal table size and migration cost.
Specifically, even if the parallel instances achieve  the balance state,but the routing table size exceeds the predefined space limits as shown in Equ.~\ref{EQ:definition}, this balanced state should not conform to the requirement.
Therefore, in the following section, we discuss a number of heuristics with careful analysis on their usefulness.

\vspace{-3pt}
\section{Algorithms}\label{sec:solution}

In this section, we introduce algorithms to solve the optimization problem raised in previous section, targeting to construct a new assignment function $F'$ by updating the routing table $A$. We simply assume that all necessary statistics are available in the system for algorithms to use. The implementation details, including measurement collection, are discussed in the following section.

Since the optimization problem is clearly NP-hard, there is no polynomial algorithm to find global optimum, unless \textbf{P=NP}.
In the rest of the section, we firstly describe a general workflow for a variety of heuristics, such that all algorithms based on these heuristics follow the same operation pattern. We then discuss a number of heuristics with objectives on routing table minimization and migration minimization. A mixed algorithm is introduced to combine the two heuristics in order to accomplish the constraints in the optimization formulation with a single shot.
%



Generally speaking, the system follows the steps below when constructing a new assignment function $F'$. 

\noindent\textbf{Phase I (Cleaning):} It attempts to \emph{clean} the routing table $A$ by removing certain entries in the table. This is equivalent to moving the keys in the entries back to the original task instance assignment, decided by the hash function. Different algorithms may adopt different cleaning strategies to shrink the existing routing table in $F$. Note that such a temporary removal does not physically migrate the corresponding keys, but just generates an intermediate result table for further processing. 

\noindent\textbf{Phase II (Preparing):} It  identifies candidate keys for migration from overloaded task instances, i.e., $\{ d | L(d) > L_{\max}\}$, where $L_{\max} = (1+\theta_{\max})\bar{L}$. Different selection criteria, such as keys with highest computation cost first, and largest computation cost per unit memory consumption first (concerning about migration cost), and etc, can be applied by the algorithm to select keys and disassociate their assignments from the corresponding task instances. These disassociated keys will be temporarily put into a candidate key set (denoted by $\mathcal{C}$) for processing in the third step of the workflow.
%
%
%

\noindent\textbf{Phase III (Assigning):} It reshuffles the keys in the candidate set by manipulating the routing table, in order to balance the workloads. In particular, all algorithms proposed in this paper invoke the Least-Load Fit Decreasing (LLFD) subroutine, which will be described shortly, in this phase.
%


\begin{algorithm}[H]
	\caption{Least-Load Fit Decreasing Algorithm}
	\label{alg:LLDF}
	\begin{algorithmic}[1]
		\Require key candidate $\mathcal{C}$, task instances in $\mathcal{D}$, imbalance tolerance factor $\theta_{\max}$, key selection criteria $\psi$
		\Ensure $A'$
		\ForAll{$d$ in $\mathcal{D}$}
		\State Initialize estimation $\hat{L}(d)=L_{i-1}(d)$
		\EndFor
		\ForAll{$k$ in $\mathcal{C}$ in descending order of $c_{i-1}(k)$}
		\ForAll{$d$ in $\mathcal{D}$ in ascending order of $L_{i-1}(d)$}
		\If{Adjust$(k, d, \mathcal{C}, \theta_{\max}) = $ TRUE}
		\If {$h(k) \neq d$}
		\State Add entry $(k, d)$ to $A'$
		\EndIf
		\State Update $\hat{L}(d)$; remove $k$ from $\mathcal{C}$; break;
		\EndIf
		\EndFor
		\EndFor
		\State \Return $A'$
		\Function {Adjust}{$k, d, \mathcal{C}, \theta_{\max}$}			
		\State $L_{\max}\leftarrow (1+\theta_{\max})\bar{L}_{i-1}$
		\If {$L_{i-1}(d) + c_{i-1}(k) < L_{\max}$}
		\State
		\Return TRUE
		\ElsIf{$\exists\;\mathcal{E}$ selected by $\psi$ and satisfying (i)-(iii)}
		\ForAll{$k \in \mathcal{E}$}
		\State Disassociate $k$ from $d$
		\State Add $k$ to $\mathcal{C}$
		\EndFor
		\State
		\Return TRUE
		\Else
		\State
		\Return FALSE
		\EndIf
		\EndFunction
	\end{algorithmic}
\end{algorithm}

\subsection{Least-Load Fit Decreasing (LLFD)}\label{sec:solution:llfd}
In this part of the section, we introduce \emph{Least-Load Fit Decreasing} (LLFD) subroutine, which will be applied by all the proposed algorithms in Phase III, based on the idea of prioritizing keys with larger workloads.
The design of LLFD is motivated by the classic First Fit Decreasing (FFD) used in conventional bin packing algorithms. The pseudo codes of LLFD are listed in Algorithm~\ref{alg:LLDF}.


\begin{figure}
	\centering
	\hspace{-16pt}
	\includegraphics[height = 8.2cm, width = 8.5cm]{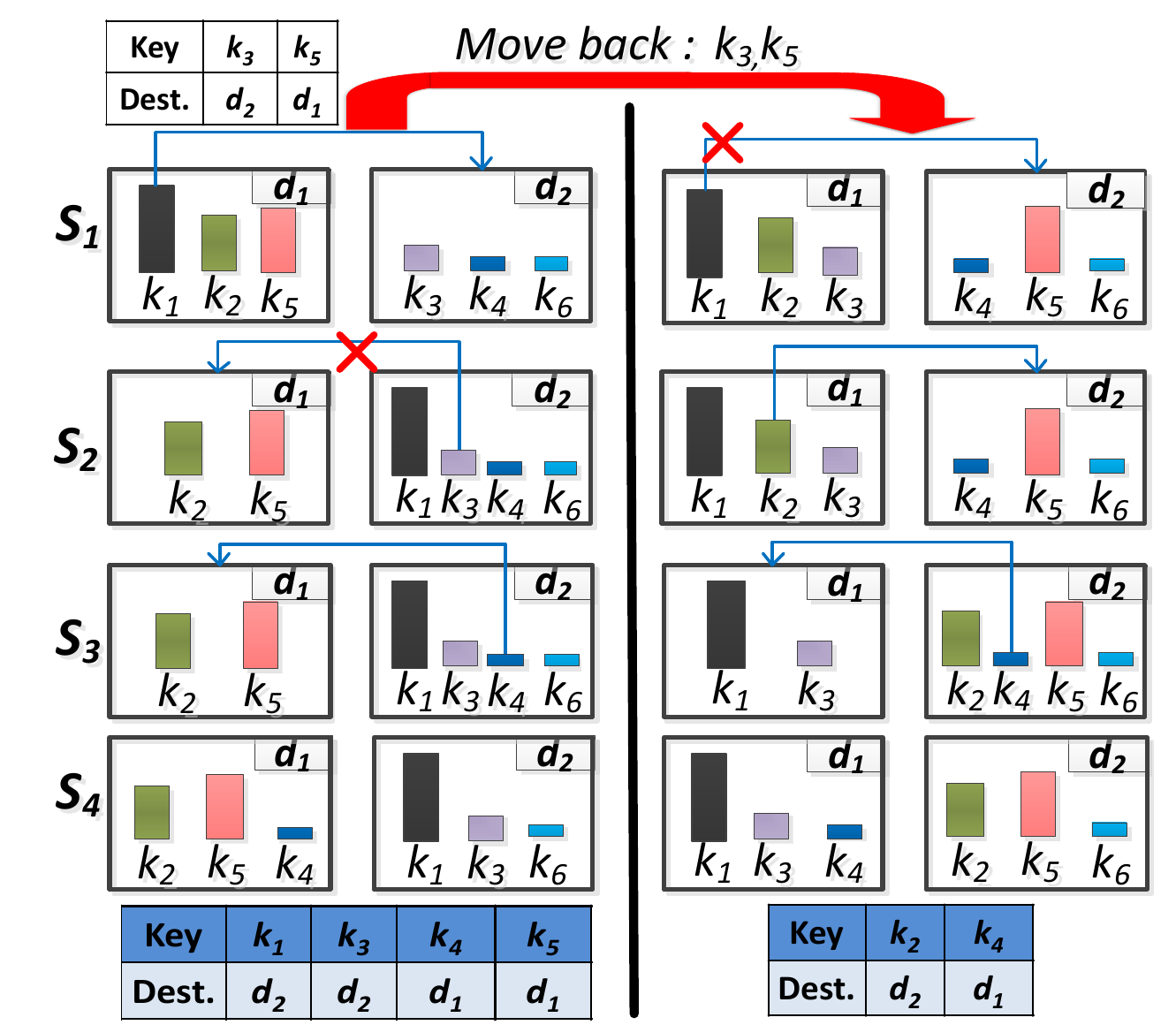}
	\centering \caption{Running examples for LLFD and MinTable with the constraints that the routing table size is not more than 2 and all instances should be absolute balance. The heights of the bars indicate the workloads of the corresponding keys. Each $S_j$ with $j=\{1,2,3,4\}$ is a running step in the algorithms. The original routing table is at top of the figure, and the result routing tables are listed at the bottom.}	\label{fig:solution:example}
	\vspace{-10pt}
\end{figure}

%
%




Generally speaking, LLFD sorts the keys in the candidate set in a non-increasing order of their computation costs and iteratively assigns the keys to task instances, such that (i) it generates the least total workload (Line 4); and (ii) it tries to adjust the key assignment, if the new destination task instance is overloaded after the migration (Line 5). If such key-to-instance pair is inconsistent with default mapping by hashing (Line 6), an entry $(k, d)$ is then added to the routing table $A$ (Line 7). After each iteration, LLFD updates the total workload of the corresponding instance $d$ and removes $k$ from the candidate set (Line 8). The iteration stops and returns the result routing table, when the candidate set turns empty (Line 9).

Basically, the algorithm moves the ``heaviest'' key to the task instance with minimal workload so far, which may generate another overloaded task instance (referred as ``re-overloading'' problem), if this key is associated with extremely heavy cost. Consider the toy example on the left side of Fig.~\ref{fig:solution:example}. There are two instances: $d_1$ is responsible for keys $k_1, k_2$ and $k_5$ with costs 7, 4 and 5 respectively, generating $L(d_1) = 16$, and $d_2$ is associated with keys $k_3, k_4$ and $k_6$ with cost 2, 1 and 1 respectively, generating $L(d_2) = 4$. Suppose $\theta_{\max} = 0$, meaning that the total workloads on both instances are required to be equal (i.e., average workload $\bar{L} = 10$). It is clear that $d_1$ is overloaded and $k_1$, which incurs the largest computation cost, is expected to be disassociated from $d_1$. Although $L(d_1)$ decreases to 9, it is still larger than $L(d_2)$. Based on the workflow of LLFD, $k_1$ is assigned to $d_2$, only to overload $d_2$ as a consequence. To tackle the problem, we add a new function, called \emph{Adjust}, to avoid the happening of such conflicts.

Specifically, if re-overloading does not happen after an assignment, i.e., $L_{i-1}(d) + c_{i-1}(k) \leq L_{\max} = (1 + \theta_{\max})\bar{L}_{i-1}$, this assignment is acceptable and \emph{Adjust} immediately returns a TRUE (Lines 12-13).
Otherwise (Lines 14-20), \emph{Adjust} attempts to construct a nonempty key set (called \textit{exchangeable} set and denoted by $\mathcal{E}$), by applying the selection criteria $\psi$ (e.g., highest workload first). The \textit{exchangeable} set must satisfy the following three conditions: (i)  $\mathcal{E} \subseteq \{k' | F(k') = d\}$; (ii)  $\forall k' \in \mathcal{E}, c_{i-1}(k') < c_{i-1}(k)$; and (iii)  $L_{i-1}(d) + c_{i-1}(k) - \sum_{k'\in\mathcal{E}} c_{i-1}(k') \leq L_{\max}$. Basically, (i) means that only keys originally associated with $d$ are selected for disassociation. (ii) tries not to choose a key with larger computation workload for disassociation, ensuring the decrease of the total workloads on instance $d$. Finally, (iii) ensures that instance $d$ does not become overloaded, after the assignment (Lines 15-17).

Recall the running example in which LLFD tries to assign $k_1$ to $d_2$, which makes $d_2$ overloaded. A TRUE is returned by \emph{Adjust} because there exists an $\mathcal{E} = \{k_3\}$ satisfying constraints (i) - (iii). Therefore, $k_1$ is assigned to $d_2$, while $k_3$ is disassociated from $d_2$ and put into $\mathcal{C}$. Next, LLFD attempts to assign $k_3$ to $d_1$, because $d_1$ has less total workload at this moment.
However, a FALSE (a red cross shown on left side of $S_2$ in Fig.~\ref{fig:solution:example}) is returned by \emph{Adjust} because overloading occurs (since $L(d_1) + c(k_3) = 11 > L_{\max}$) and no valid $\mathcal{E}$ exists, when neither of the two keys associated with $d_1$ ($k_2$ and $k_5$) has smaller computation workload than that of $k_3$, violating constraint (ii).
After this failure, LLFD is forced to consider another option, by keeping $k_3$ to $d_2$. Luckily, a TRUE is returned this time, because a valid \textit{exchangeable} set $\mathcal{E} = \{k_4\}$ exists.
After disassociating $k_4$ from $d_2$ and putting it into $\mathcal{C}$, $d_2$ is responsible for $k_1, k_3$ and $k_6$ only, the keys with $d_1$ remains unchanged, and $k_4$ is now in $\mathcal{C}$. The algorithm does not terminate until $\mathcal{C}$ becomes empty, after $k_4$ is assigned to $d_1$, finally reaching perfect balance at $L(d_1) = L(d_2) = 10$.


In the following, we present formal analysis on the robustness and soundness of LLFD on basic load balancing problem, with proofs available in  Appendix.~\ref{appsec:proof}.
	
\begin{theorem}\label{theorem:1}
	If there is a solution for absolute load balancing, LLFD always finds a solution resulting with balancing indicator $\theta_i(d,F)$ no worse than $\frac{1}{3}(1-\frac{1}{N_D})$ for any task instance $d_i$.
\end{theorem}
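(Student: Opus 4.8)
The plan is to analyze what happens when LLFD terminates and show that no task instance can be "too overloaded" relative to the average, because the Adjust function (together with the decreasing-order processing of keys) controls both how far a destination can exceed $L_{\max}$ and how far below average an instance can fall. The key observation is that the final configuration is reached by a sequence of tentative assignments, each either satisfying the strict bound $L_{i-1}(d)+c_{i-1}(k)\le L_{\max}$ or succeeding via an exchangeable set $\mathcal{E}$ whose removal keeps the instance at or below $L_{\max}$; in either case the instance load never strictly exceeds $L_{\max}=(1+\theta_{\max})\bar L_{i-1}$ after a successful step, so the only danger is the very last key placed on it. Since keys are processed in non-increasing order of cost, and a heavy key $k$ that lands on $d$ only does so when $d$ was the least-loaded instance at that moment, I would bound $c_{i-1}(k)$ and the pre-assignment load of $d$ separately.

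First I would set up notation: let $F'$ be the assignment LLFD returns, fix an instance $d$ achieving the worst imbalance, let $k^{\*}$ be the last key assigned to $d$ during Phase III, and let $\hat L(d)$ be the estimated load of $d$ just before $k^{\*}$ was placed. Because $d$ was chosen in ascending order of load at that step, $\hat L(d)$ is no larger than the average of the (then-current) loads, which I would argue equals $\bar L_{i-1}$ (total computation cost is conserved by migration), so $\hat L(d)\le \bar L_{i-1}$. Next I would bound $c_{i-1}(k^{\*})$: if $c_{i-1}(k^{\*})$ were very large, then either it alone exceeds $L_{\max}$ and Adjust had to find an exchangeable set — but condition (ii) forces every exchanged key to be \emph{lighter} than $k^{\*}$, and there can be at most a bounded total weight of such keys on $d$ before the swap — or $k^{\*}$ was small enough that the strict bound held. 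The cleanest route is to observe that if absolute balancing is feasible, then every single key has cost at most $\bar L_{i-1}$ (otherwise its host instance alone would exceed the average and no perfectly balanced solution could exist), hence $c_{i-1}(k^{\*})\le \bar L_{i-1}$. Combining, the final load of $d$ is at most $\hat L(d)+c_{i-1}(k^{\*})\le 2\bar L_{i-1}$ — but this only gives $\theta\le 1$, which is too weak, so the factor $\tfrac13$ must come from a sharper accounting.

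To get the $\tfrac13(1-\tfrac1{N_D})$ bound I would refine the argument using the Adjust mechanism more carefully: when $k^{\*}$ overloads $d$, Adjust removes an exchangeable set $\mathcal{E}$ bringing $d$ back to $\le L_{\max}$, so the \emph{only} way $d$ ends up overloaded is if $k^{\*}$ was placed via the strict branch (Lines 12–13), meaning $\hat L(d)+c_{i-1}(k^{\*})< L_{\max}=(1+\theta_{\max})\bar L_{i-1}$, or via a swap that itself respects $L_{\max}$. In the feasibility case where a perfectly balanced solution exists, I would push further: partition the keys on $d$ in $F'$ by cost, use the decreasing order to show that at most two "large" keys (cost $>\tfrac13\bar L_{i-1}$) can coexist on $d$ together with a light remainder, and bound each contribution, yielding total load at most $\tfrac43\bar L_{i-1}$ and hence $\theta_i(d,F')\le\tfrac13$; the correction $-\tfrac1{3N_D}$ comes from the discreteness of how many instances share the excess (an averaging/pigeonhole refinement over the $N_D$ instances). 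I would formalize this by a case split on the number of keys of cost exceeding $\tfrac{\bar L_{i-1}}{3}$ assigned to $d$.

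\textbf{The main obstacle} I anticipate is the Adjust/exchangeable-set recursion: Phase III is not a single pass, since a successful swap throws the displaced keys $\mathcal{E}$ back into $\mathcal{C}$, so the argument must handle a potentially long chain of cascading reassignments and show it still terminates with the claimed bound (and indeed terminates at all). The right tool is a potential-function or monotonicity argument — e.g., that the multiset of key costs being re-processed is strictly decreasing in the sorted order, or that total "overload mass" $\sum_d \max(0,\hat L(d)-L_{\max})$ is non-increasing across swaps — so that the worst-case final instance is governed only by its last-placed key and its pre-load, both controlled as above. Pinning down exactly why the factor is $\tfrac13$ rather than $\tfrac12$ (presumably because Adjust's condition (iii) gives a two-sided slack, not one-sided) is the delicate quantitative heart of the proof, and I would spend most of the effort there, deferring the cascade-termination bookkeeping to a lemma.
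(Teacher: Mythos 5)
Your plan follows essentially the same route as the paper's proof: the paper reduces LLFD to a stripped-down ``Simple'' algorithm (sort all keys by decreasing cost, greedily assign each to the currently least-loaded instance), runs a Graham/LPT-style analysis on it, and then asserts that LLFD, having a larger search space, can only do better. Your bound $L(d)\le \hat L(d)+c(k^\ast)$ with $\hat L(d)$ controlled by the least-loaded choice and $c(k^\ast)$ controlled by feasibility of perfect balance is exactly that analysis. However, the two steps you explicitly defer are the entire quantitative content, and as sketched they each have a problem. First, the bound $c(k^\ast)\le \bar L/3$ must come from a pigeonhole applied to the \emph{hypothetical perfect assignment}, not to the keys LLFD places on $d$: if more than $2N_D$ keys had cost exceeding $\bar L/3$, some instance in the perfect assignment would carry three of them and exceed $\bar L$. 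Hence at most $2N_D$ keys are ``large,'' and since the first $2N_D$ keys in decreasing order are spread one-then-two per instance, any \emph{third or later} key landing on an instance has cost at most $\bar L/3$ (this is the paper's Lemma~1 with $q=2$). Your version --- ``at most two large keys can coexist on $d$ in $F'$'' --- argues about the algorithm's output rather than about the optimum, which is the wrong direction and does not by itself bound $c(k^\ast)$. You also need to dispose of the case where $k^\ast$ is itself one of the first $2N_D$ keys; the paper handles this only implicitly through its worst-case construction in Lemma~3.

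Second, your claim that $\hat L(d)$ is at most the then-current average ``which equals $\bar L_{i-1}$'' is false as stated, and the error is precisely where the $-\tfrac{1}{3N_D}$ term lives: when $k^\ast$ is placed, the total load already distributed is at most $N_D\bar L - c(k^\ast)$, so the least-loaded instance satisfies $\hat L(d)\le \bar L - c(k^\ast)/N_D$, giving $L(d)\le \bar L + c(k^\ast)\bigl(1-\tfrac{1}{N_D}\bigr)\le \bar L\bigl(1+\tfrac13(1-\tfrac{1}{N_D})\bigr)$. With $\hat L(d)\le\bar L$ you only get $\theta\le\tfrac13$, not the claimed bound. Finally, your concern about the Adjust/exchangeable-set cascade is legitimate --- it is the one place where LLFD genuinely differs from plain LPT --- but note that the paper does not resolve it either: it proves the bound for the Simple algorithm and dismisses LLFD with a one-line ``larger search space'' remark. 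If you want a complete proof you would indeed need the termination/monotonicity lemma you propose (condition (ii) forces every displaced key to be strictly lighter than the key it makes room for, which gives a decreasing potential), but you should present it as closing a gap the paper leaves open rather than as an obstacle your approach creates.
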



\subsection{MinTable and MinMig Heuristics}\label{subsec:MinTableandMinMigHeuristics}

The general workflow described above is essentially effective in guaranteeing load balance constraints, e.g., the LLFD sub-prodedure. To address the optimizations on routing table minimization and migration cost minimization, we discuss two heuristics, namely MinTable and MinMig in this part of the section.


\begin{algorithm}
\caption{MinTable Algorithm}
\label{alg:MinTable}
\begin{algorithmic}[1]	
		\State Phase I: \textit{Move back} all keys in $A$.
		\State $\psi \leftarrow$ highest computation cost $c(k)$ first
		\State Phase II: According to $\psi$, select and disassociate keys from each of the overloaded instances, put them into $\mathcal{C}$
		\State Phase III: $A' \leftarrow$ LLFD ($\mathcal{C}, \mathcal{D},\theta_{max},\psi $)
		\State \Return $A'$
\end{algorithmic}
\end{algorithm}

The pseudocodes of MinTable is shown in Algorithm~\ref{alg:MinTable}. In order to minimize routing table size, in Phase I, all entries in routing table $A$ are erased. The highest computation workload first criterion, which emphasizes on the computation cost, is used for the second and third phases, so that minimal number of entries are added into the new routing table $A'$ during the key re-assignment and load rebalance process.

The two toy examples in Fig.~\ref{fig:solution:example} demonstrate how MinTable helps to achieve a smaller routing table while keeping load balance constraints fulfilled. The example on left side of Fig.~\ref{fig:solution:example} initially has two entries in routing table, i.e., ($k_3$, $d_2$) and ($k_5$, $d_1$). LLFD is directly applied to achieve absolute load balance $L(d_1) = L(d_2)$, but resulting in a routing table with four entries at the end. In contrast, before applying LLFD, the example on right side of Fig.~\ref{fig:solution:example} moves back $k_3$ and $k_5$ (i.e., cleaning the routing table). Finally, it results in a routing table with only two entries. The pseudo code of MinMig is shown in Algorithm~\ref{alg:MinMig}. Although the removal of keys from the routing is \emph{virtual} only, it increases the possibility of key migrations. Therefore, there is no cleaning run in the first phase at all.

\begin{algorithm}
	\caption{MinMig Algorithm}
	\label{alg:MinMig}
	\begin{algorithmic}[1]	
		\State Phase I: Do nothing.
		\State $\psi \leftarrow$ largest $\gamma_i(k, w)$ first, where $\gamma_i(k, w) = \frac{c_i(k)^\beta}{S_i(k, w)}$
		\State Phase II: According to $\psi$, select and disassociate keys from each of the overloaded instances, put them into $\mathcal{C}$
		\State Phase III: $A' \leftarrow$ LLFD ($\mathcal{C}, \mathcal{D},\theta_{max},\psi $)
		\State \Return $A'$
	\end{algorithmic}
\end{algorithm}
\vspace{-5pt}
To characterize both computation and migration cost, we propose the \textit{migration priority index} for each key, defined as $\gamma_i(k, w) = c_i(k)^\beta S_i(k, w)^{-1}$. Its physical meaning is straightforward, that is, a key with larger computation cost per unit memory consumption has the higher priority to be migrated. The weight scaling factor $\beta$ is used to balance the weights between these two factors under consideration. Consider $k_1$ and $k_2$ in Fig.~\ref{fig:solution:example} and assume window $w = 1$. We have $c(k_1) = S(k_1, w) = 7$ and $c(k_2) = S(k_2, w) = 4$. If we give equal weights to both $c(k)$ and $S(k, w)$, i.e., $\beta = 1$, then $\gamma(k_1, w)$ = $\gamma(k_2, w)$ = 1. When we assign more importance to migration cost, i.e., $\beta = 0.5$, $k_2$ gains higher priority for migration.
In addition, $\beta$ also affects the size of the result routing table, i.e., the larger $\beta$, the smaller size of routing table, which will be shown in the experiment results in Appendix~\ref{appsec:proof}.
The largest $\gamma_i(k, w)$ first criterion, which is aware of both computation and migration cost, is used
during both key re-assignment (Phase II) and load balance process (Phases III), in order to minimize the bandwidth used to migrate the states of keys (e.g., the tuples in sliding window for join operator).

\subsection{Mixed Algorithm}\label{sec:advanced}

Based on the discussion on the heuristics, we discover that there are tradeoffs between routing table minimization and migration cost minimization. Therefore, we propose a mixed algorithm to intelligently combine the two heuristics MinTable and MinMig, in order to produce the best-effort solutions towards our target optimization in Eq.~\ref{EQ:definition}.

The basic idea is to properly mix MinTable (Phase I) and MinMig (Phases II and III). In the first phase, the mixed strategy \textit{moves back} $n$ keys, which are selected from $A$, based on the smallest memory consumption $S_{i-1}(k, w)$ first criteria.
The rest two phases simply follow the procedure of MinMig, in which the largest $\gamma_i(k, w)$ first criteria is used to initialize candidate key set $\mathcal{C}$ and applied by LLFD in the last phase. For the Mixed algorithm, the most challenging problem is how to pick up the number of keys for back moves, i.e., $n\in[0, N_A]$ during the cleaning phase. Actually, MinTable and MinMig works on two extremes of the spectrum in this step, such that $n=N_A$ in MinTable and $n=0$ in MinMig.

\begin{algorithm}
	\caption{Mixed Algorithm}
	\label{alg:Mixed}
	\begin{algorithmic}[1]
		\State $\eta\leftarrow$ smallest memory consumption $S_i(k, w)$ first.
		\State $\psi \leftarrow$ largest $\gamma_i(k, w)$ first, where $\gamma_i(k, w) = \frac{c_i(k)^\beta}{S_i(k, w)}$
		\State $n \leftarrow 0$
		\State $A_{backup}\leftarrow A$
		\Do
		\State $A\leftarrow A_{backup}$
		\State Phase I: According to $\eta$, select $n$ keys from $A$ and \textit{move back} them
		\State Phase II: According to $\psi$, select and disassociate keys from each of the overloaded instances, put them into $\mathcal{C}$
		\State Phase III: $A' \leftarrow$ LLFD ($\mathcal{C}, \mathcal{D},\theta_{max},\psi $)		
		\State $n = N_{A'} - A_{max}$
		\DoWhile{$n > 0$}
		\State
		\Return $A'$
	\end{algorithmic}

\end{algorithm}

Obviously, brute force search (named as \emph{Mixed$_{\mbox{BF}}$}) could be applied to try with every possible $n = 1, 2, \dots, N_A$, with the optimal $n^*$ returned after evaluating the solution with every $n$.
Alternatively, we propose a faster heuristic in Algorithm~\ref{alg:Mixed}. It only tries a small number of values, which are the amount of table entries overused in the last trial (Line 10). The trial starts from $n=0$ (Line 3, same as MinMig), and stops when it results in an updated $A'$ of acceptable size, i.e., $N_{A'} \leq A_{max}$ (Lines 11-12). Note that the efficiency of the algorithm is much better than \emph{Mixed$_{\mbox{BF}}$}, although it may not always find the optimal $n^*$ as \emph{Mixed$_{\mbox{BF}}$} does. Obviously, the size of the result routing table by the mixed algorithm is no smaller than that of MinTable approach. Similarly, the migration cost of the result assignment function is no smaller than that of the MinMig approach. However, mixed algorithm is capable of hitting good balance between the heuristics, as is proved in our empirical evaluations.Furthermore, the balance status generated by Mixed as the following theorem (proof in Appendix.~\ref{appsec:proof}):
\begin{theorem}\label{theo:theorem3}
	Balance status generated by Mixed is not worse than that generated by LLFD.
	\end {theorem}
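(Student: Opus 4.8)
The plan is to reduce the statement to Theorem~\ref{theorem:1} by exploiting the fact that the Mixed algorithm, in its last phase, is nothing but a call to the LLFD subroutine. First I would pin down what ``balance status'' means: for an assignment function $F'$ produced at interval $T_i$ I take it to be the worst-case imbalance $\Theta(F')=\max_{d\in\mathcal{D}}\theta_i(d,F')$, so that ``Mixed not worse than LLFD'' is the inequality $\Theta(F'_{\text{Mixed}})\le\Theta(F'_{\text{LLFD}})$, where $F'_{\text{LLFD}}$ denotes the output of the cleaning-free variant (i.e.\ Mixed with $n=0$, which is exactly the LLFD-based schedule MinMig uses). A corollary I would want to record is that the absolute bound $\tfrac{1}{3}\bigl(1-\tfrac{1}{N_D}\bigr)$ of Theorem~\ref{theorem:1} is therefore inherited by Mixed whenever an absolute-balancing solution exists.

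Second, I would establish a structural invariant on Phases~I and~II: Phase~I only relocates $n$ selected keys from their routing-table targets back to their hash targets, and Phase~II then strips keys (in $\psi$ order) from every instance whose load still exceeds $L_{\max}=(1+\theta_{\max})\bar{L}_{i-1}$ until it no longer does. Hence, at the moment LLFD is invoked in Phase~III, the residual load contributed by the pinned (non-candidate) keys on \emph{every} instance is at most $L_{\max}$. This is precisely the precondition under which the \emph{Adjust} mechanism — and the analysis behind Theorem~\ref{theorem:1} — operates, so the $A'$ returned by that call satisfies the same balance guarantee as a standalone LLFD run. That already yields ``not worse than LLFD'' in the weak sense that the Theorem~\ref{theorem:1} bound is met.

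Third, to get the genuine comparison $\Theta(F'_{\text{Mixed}})\le\Theta(F'_{\text{LLFD}})$, I would prove a monotonicity property of LLFD in its candidate set: if $\mathcal{C}_1\subseteq\mathcal{C}_2$ and the keys outside $\mathcal{C}_2$ are pinned identically, then the worst-case imbalance LLFD returns on input $\mathcal{C}_2$ is no larger than on $\mathcal{C}_1$, since every placement the algorithm could make with the smaller pool is still available with the larger one, and any extra candidate key only adds options — including effectively ``doing nothing'' with it by sending it to its hash home through an exchangeable set in \emph{Adjust}. I would then show that moving back $n$ keys in Phase~I and re-collecting in Phase~II produces a candidate pool that dominates (up to such hash-home placements) the pool the $n=0$ variant feeds to LLFD, so the monotonicity lemma delivers the inequality for every iteration of the do--while loop; and since $n$ is bounded by $N_A$ and the loop terminates at the actually returned $A'$, the conclusion transfers to the final output.

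The main obstacle I anticipate is the monotonicity lemma itself: greedy First-Fit-Decreasing-type procedures are notoriously non-monotone, so the argument must lean on the specific repair logic of \emph{Adjust} — in particular that a newly added, possibly heavy candidate key can always be absorbed by swapping out strictly lighter keys already on the least-loaded instance without pushing that instance past $L_{\max}$ — to argue that enlarging $\mathcal{C}$ can never drive the algorithm into a worse final imbalance. A secondary subtlety is that Phase~I changes \emph{which} instances are overloaded (a moved-back key leaves its old instance and joins its hash instance), so the ``domination of candidate pools'' claim needs a careful accounting of these swapped keys rather than a naive set inclusion. Nailing down that swap-closure argument, not the bookkeeping around Phases~I–II or termination, is where the real work lies.
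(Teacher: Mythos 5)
Your plan diverges fundamentally from the paper's argument, and the step you yourself flag as ``where the real work lies'' is a genuine gap that the rest of the proposal cannot absorb. The paper proves the claim by a short contradiction: assume the imbalance of Mixed exceeds that of the comparison algorithm (the appendix actually states and proves the theorem against the \emph{Simple} full-reassignment algorithm, with $\theta_{Sim}$ playing the role of $\theta_{\max}$); then some key $k_{cop}$ must get stuck, i.e.\ for \emph{every} instance $d$ the quantity $c(k_{cop})+L(d)-\sum_{k'\in\mathcal{E}}c(k')$ exceeds $L_{\max}$ even after \emph{Adjust} swaps out all strictly lighter keys; but the existence of the comparison solution witnesses that such a placement exists, contradiction. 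No monotonicity property of LLFD in its candidate set is ever invoked.

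The gap in your route is precisely that monotonicity lemma. LLFD is a deterministic greedy procedure, and your justification --- ``every placement the algorithm could make with the smaller pool is still available with the larger one'' --- is a statement about the feasible region, not about the trajectory the greedy algorithm actually follows; enlarging $\mathcal{C}$ reorders the processing sequence and can change every subsequent least-load choice, so availability of a placement does not imply the algorithm selects it. You acknowledge this, but you offer no mechanism by which the \emph{Adjust} swap logic repairs an arbitrary divergence between the two trajectories, and indeed \emph{Adjust} can return FALSE (as in the paper's own running example with $k_3$ and $d_1$), so a heavy added candidate is not always absorbable. A second unaddressed problem is that Phase~I does not merely enlarge the candidate pool: moving $n$ keys back to their hash targets changes the initial load vector $\hat{L}(d)$ that LLFD starts from, so the two invocations you want to compare differ in both candidate set \emph{and} baseline loads, and a candidate-set inclusion lemma alone cannot bridge that. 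Finally, note that you compare Mixed against the $n=0$ variant (MinMig), whereas the paper's appendix proof compares against the Simple algorithm that reassigns all keys; the two comparison baselines are not the same object, so even a repaired monotonicity argument would be proving a different statement from the one the paper establishes.
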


\eat{
\begin{algorithm}
	\caption{pseudo code of Advanced algorithm}
	\label{alg:Advanced}
	\begin{algorithmic}[1]
	\State $N \leftarrow A_{i-1}$
	\State $\mathcal{A}_{backup}\leftarrow\mathcal{A}_{i-1}$
	\Do
		\State $\mathcal{A}_{i-1}\leftarrow\mathcal{A}_{backup}$			
		\State Process Phase I, II, III
		\State $\mathcal{A}_{tmp} \leftarrow LLFD (\mathcal{C}, \psi, \dots)$
		\State $N = |\mathcal{A}_{tmp}| - A_{max}$
	\DoWhile{$N > 0$}
	\State $F_i \leftarrow \mathcal{A}_{tmp}$ and $h_i(k)$ according to Eq.~(\ref{EQ:Fl_def})
	\State
	\Return $F_i$
	\end{algorithmic}
\end{algorithm}
}

\eat{
	The basic framework described above is generally designed for guaranteeing load balance constraints, Eq.(\ref{EQ:LB_limit}). When we have additional requirements or resource constraints, for instances, using limited size of the explicit assignment table, Eq.(\ref{EQ:RT_limit}); or minimizing the total amount of key states for migration due to the high network transmit cost, Eq.(\ref{EQ:min_target}), it is necessary to design algorithms to deriving those unusual TIAFs. Next we introduce two different algorithms, namely minTable and minMig.
	
	The minTable algorithm is to minimize the total number of entries added in the explicit assignment table while keeping the load balance constraint satisfied.
	
	\textbf{Phase I.} \textit{move back} all keys in the current explicit assignment table, $A_{i-1}$, and after this phase the explicit assignment table becomes empty.
	
	\textbf{Phase II.} We define the key selection criteria $\psi$ to be the \textit{largest computation cost first}. Accordingly, we select keys from those overloaded instances and initialize the candidate key set $\mathcal{C}$.
	
	\textbf{Phase III.} Apply LLFD to derive $A_{i}$ and $F_i$.
	
	This heuristic algorithm first ``cleans up'' all the entries in $\mathcal{A}_{i-1}$. Next it selects keys from overloaded instances and re-assigns them to the underloaded ones properly. During the re-assignment procedure, the \textit{highest computation cost first} criteria helps to add as fewer number of entries in the explicit assignment table as possible in a greedy manner.
	
	As expected, minTable results in a very small explicit assignment table at the cost of a large amount of key state migrations.
	
	The minMig algorithm is to minimize the total amount of key states which will be migrated while guaranteeing the load balance constraint among instances.
	
	\textbf{Phase I.} Do noting.
	
	\textbf{Phase II.} We first define a key-based metric, namely \textit{migration priority index}, to be $\gamma_i(k, w) = \frac{c_i(k)^\beta}{S_i(k, w)}$. Its physical meaning is straightforward, i.e. a key with the larger computation cost per unit state size has the higher priority to be migrated, and the parameter $\beta$ is used for tunning the weights between the two factors.
	
	In this algorithm, the selection criteria $\psi$ is defined to prefer keys with larger $\gamma_i(k, w)$. According to this criteria, we select keys from those overloaded instances and initialize the candidate key set $\mathcal{C}$.
	
	\textbf{Phase III.} Apply LLFD to derive $A_{i}$ and $F_i$.
	
	This heuristic algorithm skips the ``cleans up'' phase. It directly processes the key selection and re-assignment phase, according to a different criteria, i.e., $\gamma_i(k, w)$.
	As we can see, minMig has difference to minTable in that it makes great efforts on reducing the possibilities of migrating key states, e.g., no key is \textit{moved back} in phase I and key state size is considered into the selection criteria.
	
	Therefore, minTable results in small amount of key state migrations, but very likely a large size of the explicit assignment table.	
}

\eat{
	\subsection{Least-Load Fit Decreasing}
	The procedure of deriving a new TIAF consists of the following three phases:
	
	\textbf{Phase I. Reducing the explicit assignment table.}
	
	In this phase, we try to ``clean up'' or reduce entries currently belonging to the explicit assignment table. A virtual action so called \textit{move back} is taken on a selected subset of keys in $\mathcal{A}$, but with no real movement of their states. We just temporarily change their key-to-instance assignment, which only affects the calculation of the relevant instance work loads.
	
	For example, if there is an entry $(k, d_2)$ in $\mathcal{A}$ while $h(k) = d_1$, meaning that the key $k$ by default hashing should be assigned to $d_1$ but de facto is assigned to $d_2$. When we \textit{move back} it, the entry $(k, d_2)$ is removed from $\mathcal{A}$ and thereafter, the computation cost of key $k$ will be considered as workloads belonging to instance $d_1$, rather than $d_2$.
	
	\textbf{Phase II. Initializing the candidate key set.}
	
	In this phase, we investigate the work load of each instance and filter out those overloaded ones, i.e. $\{ d | L(d) > (1 + \theta_{max})\overline{L} \}$.
	Following the predefined selection criteria $\psi$ (e.g. highest computation cost first, etc.), we select keys from each overloaded instance $d$, temporarily invalidate their assignments (to $d$) and put them into a candidate set $\mathcal{C}$ for future reassignment. There are no overloaded instances when this procedure ends.
	
	For example, instance $d_1$ has originally been assigned three keys whose computation cost are 3, 4, 5 respectively. Let $\overline{L} = 6$ and $\theta_{max} = 0.1$. Since $L(d_1) = (3+4+5) > 6.6 = (1 + \theta_{max})\overline{L}$, $d_1$ is overloaded. If the selection criteria is the highest computation cost first, the two keys of cost $5$ and $4$ will be selected and put into $\mathcal{C}$. In the end $d_1$ has only one key assigned and its load becomes 3.
	
	\textbf{Phase III. Re-assigning keys in the candidate set.}
	
	This is the most critical part of the framework and after which, all the keys previously selected and put into the candidate set will be re-assigned (through adding entries into the explicit assignment table), and meanwhile load balance constraints among instances are not violated.
	
	Inspired by the classic First Fit Decreasing (FFD), we  propose a heuristic algorithm, namely \textit{least-load fit decreasing} (\textbf{LLFD}). Its pseudo code is listed in Algorithm~\ref{alg:LLDF} and the main idea is as follows.
	It first sorts the keys in $\mathcal{C}$ by their computation cost in descending order (Line 1).
	Next it iteratively assigns keys following this order to the instance that:
	
	(a) has the least workload so far (Line 2); and
	
	(b) pass the feasibility check (Line 3).
	
	The details about the feasibility check will be discussed shortly.
	If such key-to-instance pair is inconsistent with the hashing assignment (Line 4), i.e. $h(k)\neq d$, an entry $(k, d)$ is then added to the
	explicit assignment table $\mathcal{A}$ (Line 5).
	Afterwards,	it updates the new workload for the corresponding instance $d$,
	removes the assigned key from $\mathcal{C}$ (Line 6), and turns to the next key (Line 7).
	The iteration stops when $\mathcal{C}$ becomes empty while the corresponding $\mathcal{A}_i$ is derived (Line 8). By combining $\mathcal{A}_i$ and $h_i(k)$, we obtain the updated $F_i$.
	\begin{algorithm}
		\caption{pseudo code of LLFD}
		\label{alg:LLDF}
		\begin{algorithmic}[1]
			\Require $\mathcal{C}$, $\mathcal{D}_{i}$, $h_{i}(k)$, $\theta_{max}$, selection criteria $\psi$
			\Ensure $\mathcal{A}_{i}$
			\ForAll{$k$ in $\mathcal{C}$ in descending order of $c_{i-1}(k)$}
			\ForAll{$d$ in $\mathcal{D}_i$ in ascending order of $L_{i-1}(d)$}
			\If{$isFeasible(k, d, \mathcal{C}, \theta_{max}, \psi) = $ TRUE}
			\If {$h_i(k) \neq d$}
			\State add entry $(k, d)$ to $\mathcal{A}_i$
			\EndIf
			\State update $L_{i-1}(d)$; remove $k$ from $\mathcal{C}$
			\State break;
			\EndIf
			\EndFor
			\EndFor
			\State \Return $\mathcal{A}_i$
		\end{algorithmic}
	\end{algorithm}
	
	\textbf{Feasibility check mechanism of LLFD in Phase III.}
	
	Even when an instance originally has the least amount of work loads, it happens that assigning it a key with very high computation cost makes it become overloaded.
	
	For example, assume there are two instances: $d_1$ with three keys assigned, $L(d_1) = 3+4+5 = 12$, while $d_2$ one key assigned, $L(d_2) = 8$. There is only one key $k$ in $\mathcal{C}$ with cost 6 and letting $\theta_{max} = 0$, meaning that after all the re-assignment of keys, the work loads on both instances are required to be equal (to the average $\overline{L} = 13$). According to LLFD, if there were no feasibility check, $k$ would be assigned to $d_2$, thus overloading $d_2$ $\left(L(d_2) = 8+6=14\right)$.
	
	Therefore, feasibility check mechanism (the pseudo code is listed in Algorithm~\ref{alg:Feasibility}) is designed for solving this problem.
	If overloading does not happen, i.e.
	\[L_{i-1}(d) + c_{i-1}(k) \leq L_{max} = (1+\theta_{max})\overline{L_{i-1}},\] this assignment is acceptable and $isFeasible$ returns a TRUE right away (Lines 2-3);
	
	Otherwise, when overloading happens, we consider the following two cases. Case (a), if there exists a nonempty subset of keys satisfying:
	
	(a)  $\mathcal{E} \subseteq \{\kappa | F_{i-1}(\kappa) = d\}$;
	
	(b)  $\forall \kappa \in \mathcal{E}, c_{i-1}(\kappa) < c_{i-1}(k)$; and
	
	(c)  $L_{i-1}(d) + c_{i-1}(k) - \sum_{\kappa\in\mathcal{E}} c_{i-1}(\kappa) \leq L_{max}$,
	\\we call it \textit{exchangeable} set, denoted by $\mathcal{E}$.
	
	In this case, $isFeasible$ still returns a TRUE. But at the mean time, it constructs an \textit{exchangeable} set $\mathcal{E}$ by applying the provided selection criteria $\psi$ (e.g. highest computing cost first); invalidates the assignment of those keys in $\mathcal{E}$ to instance $d$; and puts them into the candidate set $\mathcal{C}$ (Lines 4-8).
	
	The rationale behind is that we try to accept the new assignment of $k$ to instance $d$. But in order to avoid overloading $d$, a subset of keys originally assigned to $d$ are selected to be \textit{exchanged out}, and put into the candidate set $\mathcal{C}$.
	Note the necessity of condition (b) is that we do not want to choose a key with higher cost to be exchanged by $k$, leading to a decrement in total work loads of instance $d$. Since in this phase (III), we are mainly re-assigning keys to those underloaded instances, it is definitely not desirable to decrease loads from an underloaded instance.
	
	Oppositely case (b), if no exchangeable set exists, the assignment is not feasible and therefore a FALSE will be returned (Lines 9-10).
	\begin{algorithm}
		\caption{pseudo code of $isFeasible$}
		\label{alg:Feasibility}
		\begin{algorithmic}[1]
			\Require $k$, $d$, $\mathcal{C}$, $\theta_{max}$, selection criteria $\psi$
			\Ensure TRUE or FALSE
			\State $L_{max}\leftarrow (1+\theta_{max})\overline{L_{i-1}}$
			\If {$L_{i-1}(d) + c_{i-1}(k) < L_{max}$}
			\State
			\Return TRUE
			\ElsIf{$\exists\;\mathcal{E}$ selected by $\psi$ and satisfying (a)-(c)}
			\ForAll{$\kappa \in \mathcal{E}$}
			\State Invalidate assignment of $\kappa$ to $d$
			\State add $\kappa$ to $\mathcal{C}$
			\EndFor
			\State
			\Return TRUE
			\Else
			\State
			\Return FALSE
			\EndIf
		\end{algorithmic}
	\end{algorithm}
}

\section{Implementation Optimizations}\label{sec:optimization}
The overall working mechanism of the rebalance control component, as is implemented in our distributed stream processing engine, is illustrated in Fig.~\ref{fig:optimization:overall}. In the figure, each operation step is numbered to indicate the order of their execution.
\begin{figure}[h]
	\centering
	\hspace{-19pt}
	\epsfig{file=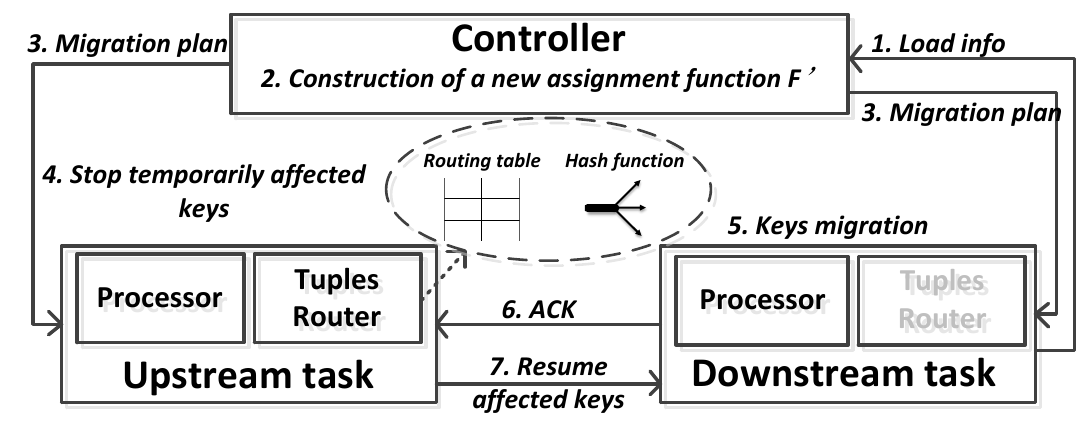, width=0.95\columnwidth}
	\caption{Overall workflow.}
	\label{fig:optimization:overall}
	\vspace{-9pt}
\end{figure}

At the end of each time interval (e.g., 10 seconds as the setting in our experiments), the instances of an operator report the statistical information collected during the past interval to a \textit{controller} module (step 1). The information from each instance $d$ includes the computation cost $c_{i-1}(k)$ and window-based memory consumption $S_{i-1}(k, w)$ of each key assigned to it.
On receiving the reporting information, the controller starts the optimization procedure (step 2) introduced in Section~\ref{sec:solution}. It first evaluates the degree of workload imbalance among the instances and decides whether or not to trigger the construction of a new assignment function $F'$ to replace the existing $F$. If the system identifies load imbalance, it starts to execute Mixed algorithm (Algorithm~\ref{alg:Mixed}) to generate new $A'$ and $F'$.


After calculating the keys in $\Delta(F,F')$ for migration, the controller broadcasts both $F'$ and $\Delta(F,F')$, together with a \textbf{Pause} signal to all the instances of upstream operator for them to update the obsolete $F$, and temporarily stop sending (but caching locally) data with keys in $\Delta(F,F')$ (steps 3 and 4). Meanwhile, the controller notifies the corresponding downstream instances (step 3).

Finally, the instances of downstream operator begin migrating the states of keys after the notification from the controller (step 5) and acknowledge the controller when migration is completed (step 6). As soon as the controller receives all the acknowledgments, it sends out a \textbf{Resume} signal to all instances of the upstream operator, ordering the tasks to start sending data with keys in $\Delta(F,F')$, since all the downstream instances are equipped with the new assignment function (step 7). It is worth noting that during the key state migration, there is no interruption of normal processing on the data with keys not covered by $\Delta(F,F')$.

One potential problem in the workflow above is the cost of transmitting statistical information with the keys in step 1, which could easily contain millions of unique keys in real application domains. The huge size of the key domain may degenerate the scalability of the algorithms, on growing computational complexity and memory consumption for these metrics. To alleviate the transmission problem, we propose a compact representation for the keys with acceptable information loss for the algorithms.


The basic idea is to merge the keys with common characteristics and represent them by a single record in the statistical data structure. To accomplish this goal, we design a new 6-dimensional vector structure for the statistical information, $(d', d, d^h, v_c, v_S, \#)$, in which $d'$ denotes the instance to which a key will be assigned next; $d$ is the instance with which the keys are currently associated during the reporting period (i.e., $d = F(k)$); $d^h$ is the instance assigned by the hash function (i.e., $d^h = h(k)$); $v_c$ denotes the value of computation workload; $v_S$ is the value of window-based memory consumption; and $\# ( > 0)$ is the number of keys satisfying these five conditions. 
For example, a vector $(d_1, d_2, d_1, 4, 4, 2)$ indicates that there are two keys with computation workload 4 and memory consumption 4. They are currently associated with instance $d_2$, and the instance suggested by hash function is $d_1$, indicating that the routing table $A$ at the upstream operator must contain an entry for them. Finally, the record also implies that they will be assigned to $d_1$, meaning that the entry for them in $A$ is deleted and the \textit{move back} operation is executed on these two keys.

By employing this compact representation, the whole key space $\mathcal{K}$ is transformed to the 6-dimensional vector space (denoted by $\mathcal{K}^c$). The upper bound on the size of the vector space is approximately $K^c = |\mathcal{K}^c| = O(N_D^3\times |c(k)| \times |S(k, w)|)$, where $N_D$ is the number of downstream instances, which is usually a small integer; $|c(k)|$ and $|S(k, w)|$ represent the total numbers of distinct values on computation workload and memory consumption in current sliding window, respectively.

\vspace{-3pt}
\subsection{Mixed Algorithm over Compact Representations}

Apparently, the compact representation brings significant benefits, by reducing both time and space complexity of the Mixed algorithm (and MinTable and MinMig as well) proposed in Section~\ref{sec:solution}. In the following, we briefly describe how Mixed algorithm is revised based on the compact representation. To make a clear description, let us revisit the Mixed algorithm and look into the steps using the compact representations.


\noindent\textbf{Phase I (Cleaning):} According to the smallest $S_{i-1}(k, w)$ first criterion, the Mixed algorithm selects $n$ keys from $A$ and moves them back to original instance based on the hash function. In the compact representation, the adapted Mixed algorithm does not target on any individual key, but the 6-dimensional vectors. In result, a \textit{back-move} of keys is equivalent to modifying the value of $d'$ to be the same as $d^{h}$ of the selected vectors. The vector $(d_1, d_2, d_1, 4, 4, 2)$ mentioned above presents an example back-move of a number of keys.

\noindent\textbf{Phase II (Preparing):} When investigating the workloads of a particular instance, say $d_1$, the adapted Mixed algorithm calculates the weighted sum of $v_c\times\#$ of all records containing $d' = d_1$. If we have two records $(d_1, d_2, d_1, 4, 4, 2)$ and $(d_1, d_2, d_2, 8, 8, 1)$, for example, the total workload with respect to the keys is estimated as 16. Next, when the adapted Mixed algorithm needs to select keys from an overloaded instance and puts them into the candidate set $\mathcal{C}$, the algorithm again targets on those vectors in compact representations, and simply replaces the value of $d'$ with a $nil$, indicating a virtual removal of the keys linked to the vector. For example, if $(d_2, d_2, d_1, 4, 4, 2)$ is disassociated from $d_2$ and put into $\mathcal{C}$, the adapted Mixed algorithm finally rewrites the record as $(nil, d_2, d_1, 4, 4, 2)$. Note that when a record is added into $\mathcal{C}$, i.e., containing $d' = nil$, it is likely that there already exists a record in $\mathcal{C}$ with exactly the same values on $d$, $d^h$, $v_c$ and $v_S$. According to the definition on the uniqueness of the compact representation, these two records need to be merged by summing on the number field `$\#$'.

\noindent\textbf{Phase III (Assigning):} The similar adaptation in Phase I and II is also applied to LLFD, with only one exception that the expected routing table $A'$ can not be directly derived but rather indirectly calculated. This is because the final results returned by the adapted LLFD is still in a compact form, as a 6-dimensional tuple. In order to derive $A'$ and $F'$, a series of additional actions are taken, including (i) picking up those needing migration from the records returned by adapted LLFD, i.e., tuples with $d' \neq d$; (ii) selecting keys originally associated with instance $d$ and computation cost at $v_c$, according to the selection criteria $\psi$ and based on the original complete statistical information of keys collected by the controller; and (iii) adding them to the key migration set $\Delta(F,F')$.
Finally, the adapted algorithm returns the final result, with $F'$ induced by combining $F$ and $\Delta(F,F')$, and $A'$ derived with $F'$ together with $h(k)$.

\vspace{-3pt}
\subsection{Discretization on $v_c$ and $v_S$}

As is emphasized above, the size of the 6-dimensional vector space $K^c = O(N_D^3\times |c(k)| \times |S(k, w)|)$ depends on $|c(k)|$ and $|S(k, w)|$. In practice, the values of computation cost and memory consumption could be highly diversified, leading to large $|c(k)|$ and $|S(k, w)|$, and consequently huge vector space $K^c$. It is thus necessary to properly discretize the candidate values used in $c(k)$ and $S(k, w)$, in order to control the complexity blown by $|c(k)|$ and $|S(k, w)|$.

Value discretization can be done in a straightforward way. However, an over-simplistic approach could cause huge deviations on the approximate values from the real ones. Such deviations may affect the accuracy of workload estimation and jeopardize the usefulness of the migration component. For instance, assume that there are 10 keys with computation costs $c(k_1) = 8$, $c(k_2) = 6$, $c(k_3) = 3$, $c(k_4) = c(k_5) = 2$ and $c(k_6) = \dots = c(k_{10}) = 1$. If a simple piecewise constant function is used, i.e., $\xi(x) = 2$ when $x\in[1, 3]$, $\xi(x) = 5$ when $x\in[4, 6]$, $\xi(x) = 8$ when $x\in[7, 9]$, and 0 otherwise. Despite of a very small $|c(k)| = 2$, the total deviation, following the formula below, caused by the approximation is fairly large:
\[|\delta| = \left|\sum_{i=1}^{10} \delta_i\right| = \left|\sum_{i=1}^{10} c(k_i) - \xi(c(k_i))\right|.\]
Fig.~\ref{fig:level_naive} illustrates how this simple piecewise constant function fails and the deviation $\delta_i$ with respect to each value point.
\begin{figure}[htp]
	\centering
	\begin{tabular}[t]{c}
		\hspace{-26pt}
		\subfigure[A simple approach]{
			\includegraphics[height = 3.8cm, width = 4.6cm]{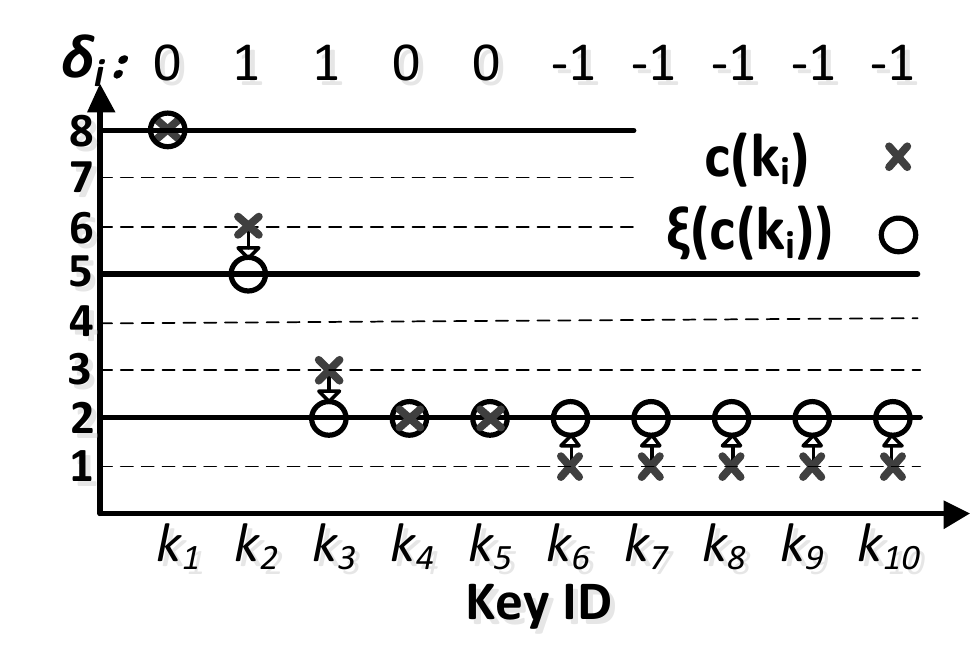}
			\label{fig:level_naive}
		}
		\hspace{-12pt}
		\subfigure[Our proposed approach]{
			\includegraphics[height = 3.8cm, width = 4.6cm]{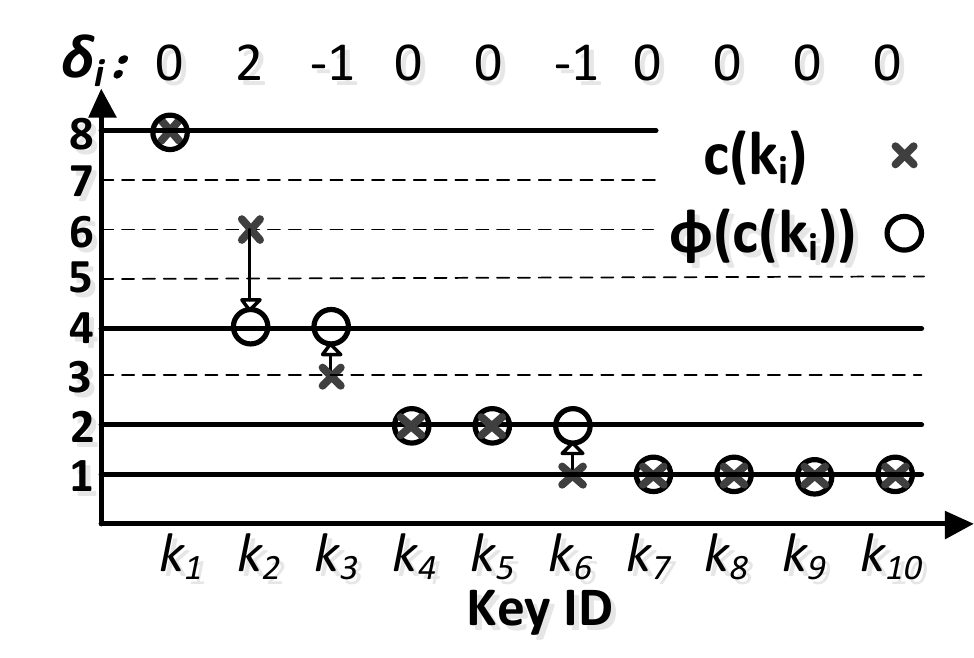}
			\label{fig:level_our}
		}
	\end{tabular}
	\caption{An example of comparing the simple piecewise discretization function (a) and our proposed approach (b).}
	\label{fig:levelexample}
\end{figure}

To tackle the problem, we propose an improved discretization approach, denoted by $\phi(x)$, which involves two steps. In the first step, it generates a finite number of representative values. Secondly, instead of using the nearest representative for each value independently, our approach constructs the discretized values in a more holistic manner. Assume the input value is a series of $n$ numbers in a non-increasing order by their values of which the smallest is at least 1 (after normalization), i.e., $x_1$, $x_2$, $\dots$, $x_n$, $\forall i, i' \in [1, n], i < i', x_i \geq x_{i'} \geq 1$.

In the first step, a simple method (half-linear-half-exponential, HLHE) with a parameter $R$ (called the degree of discretization) is applied to determine the representative values, where we require $R = 2^r, r = 0, 1, 2, \dots$. Therefore, a total number of \[m = r + \lfloor\frac{\max( {x_i})}{R}\rfloor = r+s\]
representative values are generated and reorganized as a strictly decreasing series, $y_1$, $y_2$, $\dots$, $y_m$,
where $y_1 = s \times R$, $y_2 = (s-1) \times R$, $\dots$, $y_s = R$ (the linear part), and $y_{s + 1} = R / 2 = 2^{r-1}$, $y_{s+2} = 2^{r-2}$, $\dots$, $y_{m-1} = 2$, $y_{m} = 1$ (the exponential part).

In the second stage, a greedy method is applied to finalize the discretization by adopting an optimization framework. The basic principle is to minimize the accumulated error of all values, such that the sum over an arbitrary set of approximate values tend to be an accurate estimation to the sum over original values. Specifically, for each $x_i < y_1$, two representative values $j\in[2, m]$ that $y_{j-1} > x_i \geq y_j$ can be used to approximate $x_i$. We define such $y_{j-1}$ and $y_j$ as candidate representative values for $x_i$. For the remaining ($x_i \geq y_1$), they only have one candidate representative value, which is $y_1$. For each $x_i$, one of the candidate representative values is chosen when there are two options, denoted by $\phi(x_i)$, so that the total deviation $|\delta|$ is minimized. In particular, if the current accumulated deviation is positive, $x_i$ is represented by the larger value $y_{j-1}$ in order to cancel the over-counting. Otherwise, $x_i$ chooses the representative value $y_{j}$.

In the example of Fig.~\ref{fig:level_our}, we let $r = 2$ and $R = 4$, thus $m = 2 + \frac{8}{4} = 4$. There are four representative values, e.g., $y_1 = 8$, $y_2 = 4$, $y_3 = 2$, $y_4 = 1$. At the time $k_3$, whose $c(k_3) = 3$, is processed, the two representative values for it are $y_2 = 4$ and $y_3 = 2$ respectively. Since the accumulated deviation caused by $k_1$ and $k_2$ equals to 2, we have $\phi(c(k_3)) = y_2 = 4$. This results in a reduction on the accumulated deviation by 1. When our proposed approach terminates, according to Fig.~\ref{fig:level_our}, the total deviation is zero, while the simple piecewise constant function generates a total deviation at $|\delta| = 3$.

Based on discussion above, we have the following theorem: 
	\begin{theorem}\label{theorem:2}
		The value discretization always can be done perfectly($|\delta|\sim 0$) by the above two steps.
	\end{theorem}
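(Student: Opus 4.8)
The plan is to recast the statement in terms of the running \emph{accumulated deviation} and to show that the greedy rule keeps this quantity confined to an interval whose width collapses as the procedure descends the ladder of representative values. After processing $x_1,\dots,x_i$ write $D_i=\sum_{l=1}^{i}\bigl(x_l-\phi(x_l)\bigr)$, so that $|\delta|=|D_n|$, and note that the greedy choice is exactly the \emph{sign-chasing} rule: for a value $x_i<y_1$, when $D_{i-1}>0$ the larger candidate $y_{j-1}$ is picked, contributing $\delta_i=x_i-y_{j-1}<0$, and when $D_{i-1}\le 0$ the smaller candidate $y_j$ is picked, contributing $\delta_i=x_i-y_j\ge 0$; here $j$ is the index with $x_i\in[y_j,y_{j-1})$, and the single-candidate case $x_i\ge y_1$ is folded into the base case below. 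In every step $D$ is pushed toward $0$, overshooting by at most the gap $g_j:=y_{j-1}-y_j$ of the band being processed.

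First I would isolate the two structural features of the HLHE representatives that power the argument: (i) the gaps are non-increasing, $g_1\ge g_2\ge\cdots$, equal to $R$ throughout the linear part and geometrically shrinking, $g_j=y_{j+1}=2^{\,m-1-j}$, throughout the exponential tail, bottoming out at $y_{m-1}-y_m=1$; and (ii) since the $x_i$ arrive in non-increasing order, the band index used at step $i$ is non-decreasing in $i$, so the algorithm sweeps the bands strictly top-down and never returns to a coarser one. Combining these with the sign-chasing observation yields the core invariant, to be proved by induction on the band: once all $x_l$ lying in band $j$ have been processed, $|D|<g_j$. The base case covers the top: because $y_1=\lfloor(\max_i x_i)/R\rfloor R$, every $x_i\ge y_1$ contributes a deviation in $[0,R)$, so after the first band $|D|<R=g_1$ (and this contribution is exactly $0$ when $\max_i x_i$ is a multiple of $R$). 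For the inductive step, entering band $j$ with $|D|<g_{j-1}$, the sign-chasing rule prevents $|D|$ from ever exceeding the current gap: as long as the sign is unchanged each step can only decrease $|D|$, and once enough values of band $j$ have been consumed the sign flips and $|D|$ stays pinned inside $(-g_j,g_j)$.

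Chaining the invariant down to the smallest value $x_n$, which sits in the bottom band with gap $y_{m-1}-y_m=1$, gives $|\delta|=|D_n|<1$; and whenever the ladder can reach zero-deviation states — for instance when $c(k)$ and $S(k,w)$ are integer-valued, or simply when enough mass is spread over the small bands for the sign-chasing corrections to cancel exactly — the residual overshoot is absorbed and $|\delta|$ collapses to $0$, which is the content of ``$|\delta|\sim 0$'' in the statement. This also makes precise the contrast with the naive piecewise-constant scheme of Fig.~\ref{fig:level_naive}, whose errors never cancel.

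The main obstacle, and the step I would flag as needing genuine care rather than bookkeeping, is the propagation of corrections down the ladder. Two phenomena can damage the clean picture: a large number of values crowded into the top band $[y_1,\infty)$ (or just below $y_1$), which inflates $D$ before any correcting band is reached; and, dually, a scarcity of values in the small bands, so that the geometrically shrinking corrections never get a chance to cancel an accumulated surplus. The argument therefore rests on a mild regularity hypothesis — e.g.\ that $\max_i x_i$ is (essentially) a multiple of $R$ so the top band contributes no surplus, together with the observation that the exponential tail $1,2,4,\dots$ furnishes enough resolution near the bottom that a modest amount of small-valued mass already drives $D$ back into $(-1,1)$ and, under divisibility, to $0$. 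Making this hypothesis precise, and quantifying how much small-valued mass is needed as a function of the top-band surplus, is where the real work lies; the remaining inductive bookkeeping over the linear and exponential bands is routine.
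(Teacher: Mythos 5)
Your proposal takes a genuinely different, and considerably more rigorous, route than the paper's own argument. The paper's proof of Theorem~2 is two sentences long: it asserts that, due to data skew, small-load keys always outnumber large-load keys, and that the smallest representative values will not cause accumulated error, from which $|\delta|\sim 0$ is concluded. Your band-descent invariant --- tracking the running deviation $D_i$, observing that the greedy choice is a sign-chasing rule whose per-step overshoot is confined to the current gap $g_j$, and inducting top-down over bands using the monotonicity of both the $x_i$ and the gaps --- is the correct formalization of why the greedy second stage beats independent nearest-representative rounding, and it delivers a quantitative conclusion ($|D_n|<1$ once the unit-gap band is reached, collapsing to $0$ under integrality) that the paper never states. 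What each approach buys is this: the paper's version is short because it simply asserts the distributional premise; yours exposes that premise as the load-bearing step. The ``mild regularity hypothesis'' you flag --- enough mass in the fine bands to absorb the surplus accumulated in the coarse ones, and no uncorrectable surplus from the single-candidate top band --- is exactly the content of the paper's phrase that skew guarantees more small-load keys than large-load keys. So the gap you identify is not a defect of your argument relative to the paper's; it \emph{is} the paper's entire proof, stated as an unexamined fact. If anything, your write-up makes clear that the theorem as stated (``always \ldots perfectly'') is not an unconditional mathematical statement: an adversarial input with a heavy top band and a single value per fine band defeats both arguments, and the honest form of the result is your conditional bound together with the empirical observation that skewed workloads satisfy the condition. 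The remaining work you correctly defer --- quantifying how much small-valued mass is needed as a function of the top-band surplus --- is precisely what would be required to turn Theorem~2 into a theorem.
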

\begin{proof}
	Due to data skew, the number of small load key is always more than that of big load key. Futhermore, the smallest representative values $ r = 0, 1, 2 $ will not cause the accumulated error of values.
	Then, the value discretization always can be done perfectly($|\delta|\sim 0$) by the above two steps.
\end{proof}
\section{Evaluations}\label{sec:evaluations}



\eat{
\begin{figure}[htp]
\centering
\begin{tabular}[t]{c}
\hspace{-20pt}
   \subfigure[\emph{social} Data]{
      \includegraphics[height = 3.3cm, width = 4.6cm]{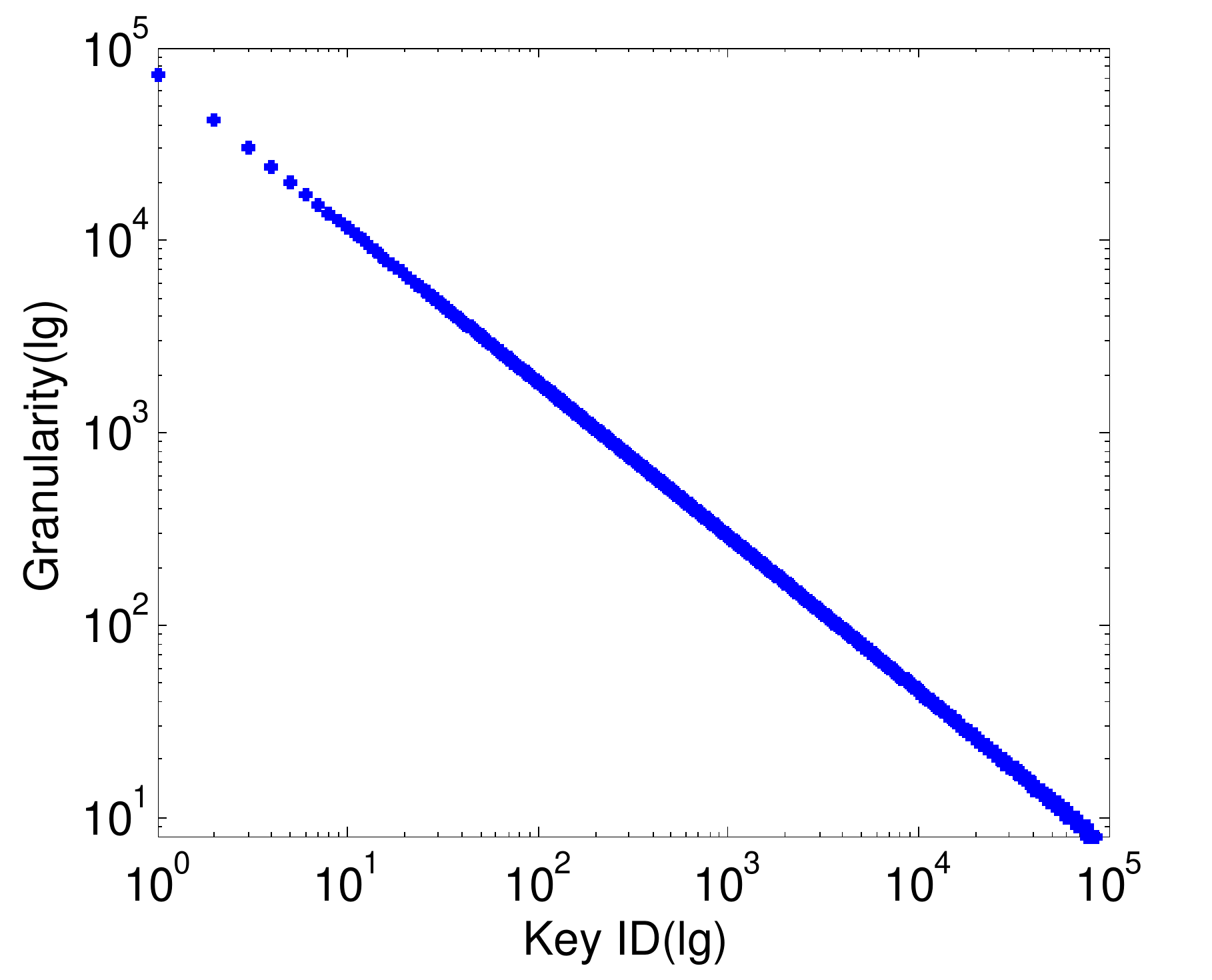}
      \label{fig:exp:StormRealTheta}
}
\hspace{-12pt}
\subfigure[\emph{Stock} Data]{
\includegraphics[height = 3.2cm, width = 4.6cm]{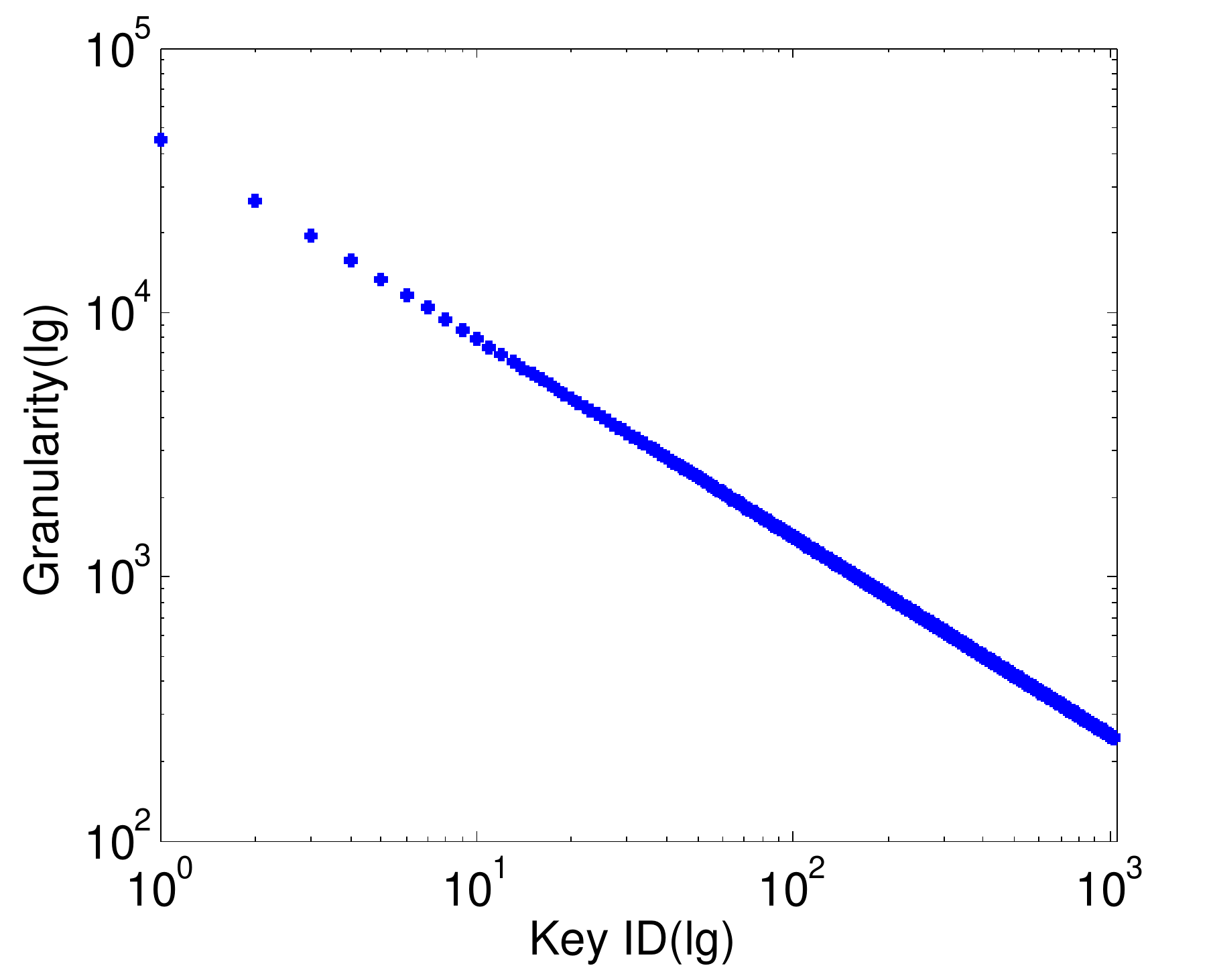}
\label{fig:Stockdata}
 }
\end{tabular}
 \caption{Distribution on real data}
 \label{fig:StormReal}
\end{figure}
}

In this section, we evaluate our proposals by comparing against a handful of baseline approaches. All of these approaches are implemented and run on top of \emph{Apache Storm}~\cite{url:Storm}  under the same task configuration $N_D$ and routing table size $N_A$. To collect the workload measurements, we add a load reporting module into the processing logics when implementing them in \emph{Storm}'s topologies. Migration and scheduling algorithms are injected into the codes of \emph{controllers} in \emph{Storm} to enable automatic workload redistribution. We use the consistent hashing\cite{karger1997consistent} as our basic hash function and configure the parallelism of spout at 10. By controlling the latency on tuple processing, we force the distributed system to reach a saturation point of CPU resource for the $N_D$ number of processing tasks with the requirement of absolute load balancing ($\theta_{max}=0$). We show the results are averages of 5 runs. The \emph{Storm} system (in version 0.9.3) is deployed on a 21-instance HP blade cluster with CentOS 6.5 operating system. Each instance in the cluster is equipped with two Intel Xeon processors (E5335 at 2.00GHz) having four cores and 16GB RAM. Each core is exclusively bound with a worker thread during our experiments.


\begin{table}[H]
\vspace{-10pt}
\centering
\small
\caption{Parameter Settings}\label{table:parameters}
\begin{tabular}{|l|l|p{0.45\columnwidth}|}
\hline
& Range &  Description \\
\hline $K$ &[$5\cdot 10^{3}$,$10^{4}$,$10^{5}$,\textbf{\emph{10}}$^{6}$] & Size of key domain\\
\hline $z$ & [0,...,\textbf{\emph{0.85}},...,1.0]&  Distribution skewness \\
\hline $f$ & [0,...,\textbf{\emph{1.0}},...,2.0]& Fluctuation rate \\
\hline $\theta_{\max}$ & [0,...,\textbf{\emph{0.08}},...,1.0] & Tolerance on load imbalance \\
\hline $\beta$ & [1,...,\textbf{\emph{1.5}},...,2.0] & Migration selection factor \\
\hline $r$ & [0,1,2,\textbf{\emph{3}},4,5,6,7,8] &  Level partition distance \\
\hline $w$ & [1,\textbf{\emph{5}},10,15,20] & Time interval \\
\hline ${N_D}$ & [1,5,10,\textbf{\emph{15}},20,...,40] &  Number of task instances \\
\hline $N_A$ & [0,...,\textbf{\emph{3}}$\cdot$\textbf{\emph{10}}$^{3}$,...,$5\cdot 10^{4}$] & Size of the routing table\\
\hline
\end{tabular}
\end{table}

\vspace{3pt}
\noindent\textbf{Synthetic Data: }Our synthetic workload generator creates snapshots of tuples for discrete time intervals from an integer key domain $K$. The tuples follow Zipf distributions controlled by skewness parameter $z$, by using the popular generation tool available in Apache project. 
We use parameter $f$ to control the rate of distribution fluctuation across time intervals. At the beginning of a new interval, our generator keeps swapping frequencies between keys from different task instances until the change on workload is significant enough, i.e., $\frac{|L_i(d)-L_{i-1}(d)|}{\bar{L}}\geq f$.
Parameter $\theta_{\max}$ is defined as the tolerance of load imbalance, measured as the ratio of maximal workload to minimal workload among the task instances. Our algorithm \emph{Mixed} is controlled by two more parameters $\beta$ and $r$, as defined in previous sections.
The range of the parameters tested in the experiments are summarized in Tab.~\ref{table:parameters}, with default values highlighted in bold font. We also employ \emph{TPC-H} tool \emph{DBGen}~\cite{url:TPCH} to generate a synthetic warehousing workload. And we revise \emph{Q5} in \emph{TPC-H} into a continuous query over sliding window as our testing target, because \emph{Q5} includes all primitive database operations.


\vspace{3pt}
\noindent\textbf{Real-World Data: } We also use two real workloads in the experiments. The first \emph{Social} workload includes 5-day feeds from a popular microblog service, in which each feed is regarded as a tuple with words as its keys. There are over 5,000,000 tuples covering 180,000 topic words as the keys. The second workload includes 3-day \emph{Stock} exchange records, consisting of over 6,000,000 tuples with 1,036 unique keys (Stock ID) for stock transactions. For both datasets, we take each day as a time interval, so the workload inside one window size consists of the tuples in the last 24 hours. We run word count topology on \emph{Social} data, which continuously maintaining current tuples in memory and updating the appearance frequency of topic words in social media feeds. We run self-join on \emph{Stock} data over sliding window, used to find potential high-frequency players with dense buying and selling behavior. These two workloads evolve in completely different patterns. To be specific, the word frequency in \emph{Social} data usually changes slowly, while \emph{Stock} data contains more abrupt and unexpected bursts on certain keys.

\vspace{3pt}
\noindent\textbf{Baseline Approaches: }We use \emph{Mixed} to denote our proposed algorithm mixing two types of heuristics. We also use \emph{Mixed$_{\mbox{BF}}$} to denote the brute force version of \emph{Mixed} method, which completely rebuilds the routing table from scratch at each scheduling point. We use \emph{MinTable} to denote the algorithm always trying to find migration plan generating minimal routing table. Finally, we also include \emph{Readj} and \emph{PKG} as baseline approaches, which are known as state-of-the-art solutions in the literature. \emph{Readj} is designed to minimize the load of restoring the keys based on the hash function, implemented by key rerouting over the keys with maximal workload. The migration plan of keys for load balance is generated by pairing tasks and keys. For each task-key pair, their algorithm considers all possible swaps to find the best move alleviating the workload imbalance.
	In \emph{Readj}, $\sigma$ is a configurable parameter, deciding which keys should take part in action of swap and move. Given a smaller $\sigma$, \emph{Readj} tend to track more candidate keys and thus finding better migration plans.
	In order to make fair comparison, in each of the experiment, we run \emph{Readj} with different $\sigma$s and only report the best result from all attempts.
\emph{PKG} \cite{nasir2015power} is a load balancing  method without migration at runtime. It balances the workload of tasks by splitting keys into smaller granularity and distributing them to different tasks based on randomly generated plan. 
%
Here, we only use \emph{PKG} approach for simple aggregation processing in the experiments, because it does not support complex stateful operations, such as join.
Due to the unique strategy used by \emph{PKG}, aggregation topologies run on \emph{PKG} must contain a special downstream operator in the topology, which is used to collect and merge partial results with respect to every key, from two independent workers in the upstream operator. Moreover, in the open source version of \emph{PKG}\footnote{\emph{https://github.com/gdfm/partial-key-grouping}}, there is a parameter $p$ indicating the time interval between two consecutive result merging. After careful investigation with experiments, we find a larger $p$ prolongs the response time of tuple processing, reduces the additional computation cost and limits the maximal number of live tuples (known as maximal pending tuples in \emph{Storm}) under processing in the system. We finally chose $p$ at 10 milliseconds and set maximal pending tuples at 50, which generally maximizes the throughput of \emph{PKG} in all settings.
Note that we do not include \emph{LLFD} and \emph{MinMig} algorithms in the experiments, because both of them can not control the size of routing tables, therefore blowing off the memory space of the tasks in some cases.

\vspace{3pt}
\noindent\textbf{Evaluation Metrics:} In the experiments, we report the following metrics. Workload skewness (i.e., $\frac{\max L(d)}{\overline{L}}$), is the ratio of maximal workload on individual task instance to the average workload.
Migration cost reveals the percentage of states associated with the keys involved in migration over the states maintained by all task instances.
Throughput is the average number of tuples the system processes in unit second. Average generation time is the average time spent on the generation of migration plan in Storm controller. Finally, processing latency is the average latency of individual tuples, based on the statistics collected by Storm itself. In the rest of the section, we report the average values for these metrics over complete processing procedure, as well as the minimal and maximal values when applicable, to demonstrate the stability of different balance processing algorithms.


\begin{figure}
\centering
\begin{tabular}[t]{c}
\hspace{-20pt}
   \subfigure[ \# of task instances]{
      \includegraphics[height = 3.6cm, width = 4.6cm]{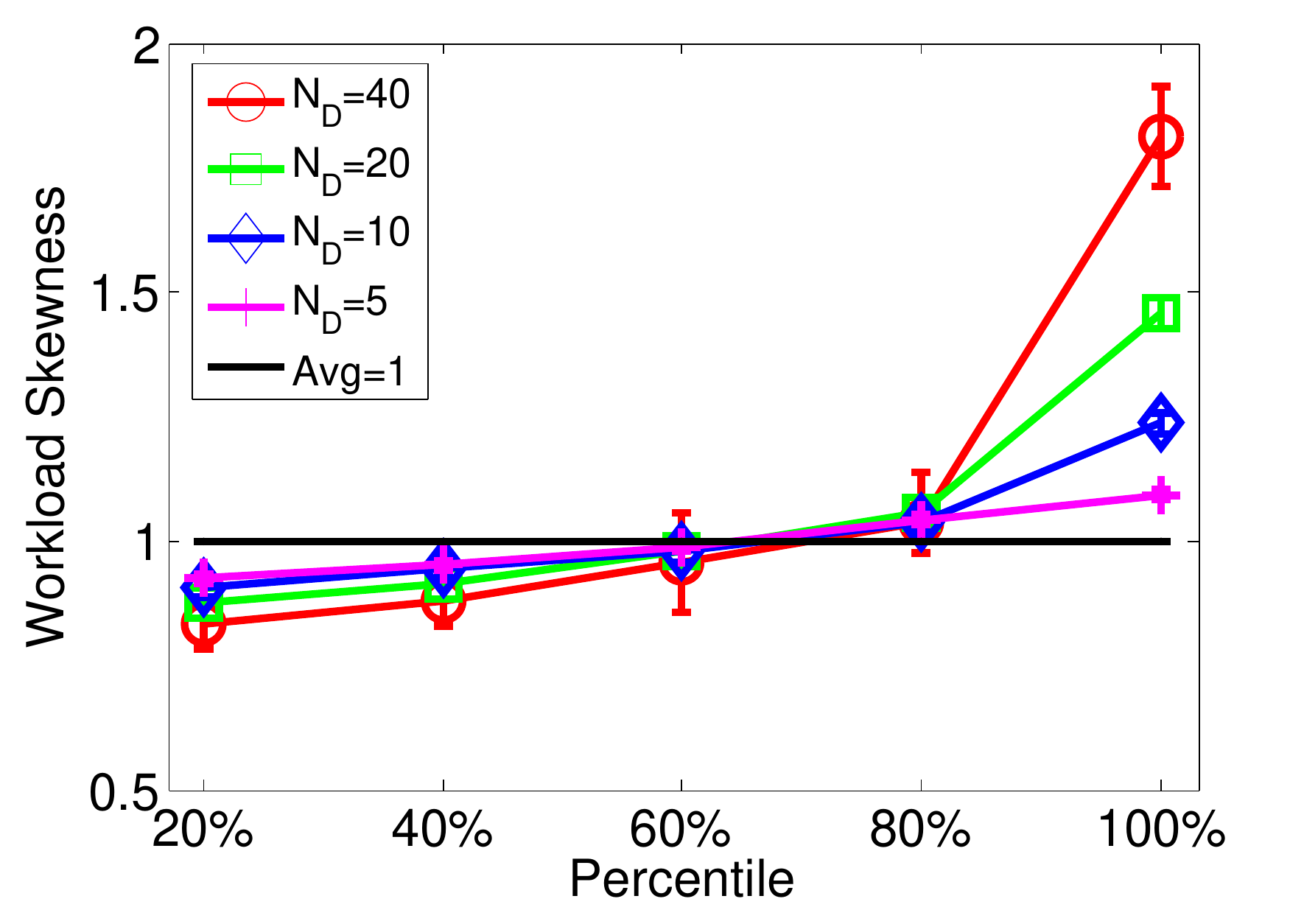}
      \label{fig:exp:parameterN}
   }
\hspace{-12pt}
   \subfigure[Size of key domain]{
      \includegraphics[height = 3.6cm, width = 4.6cm]{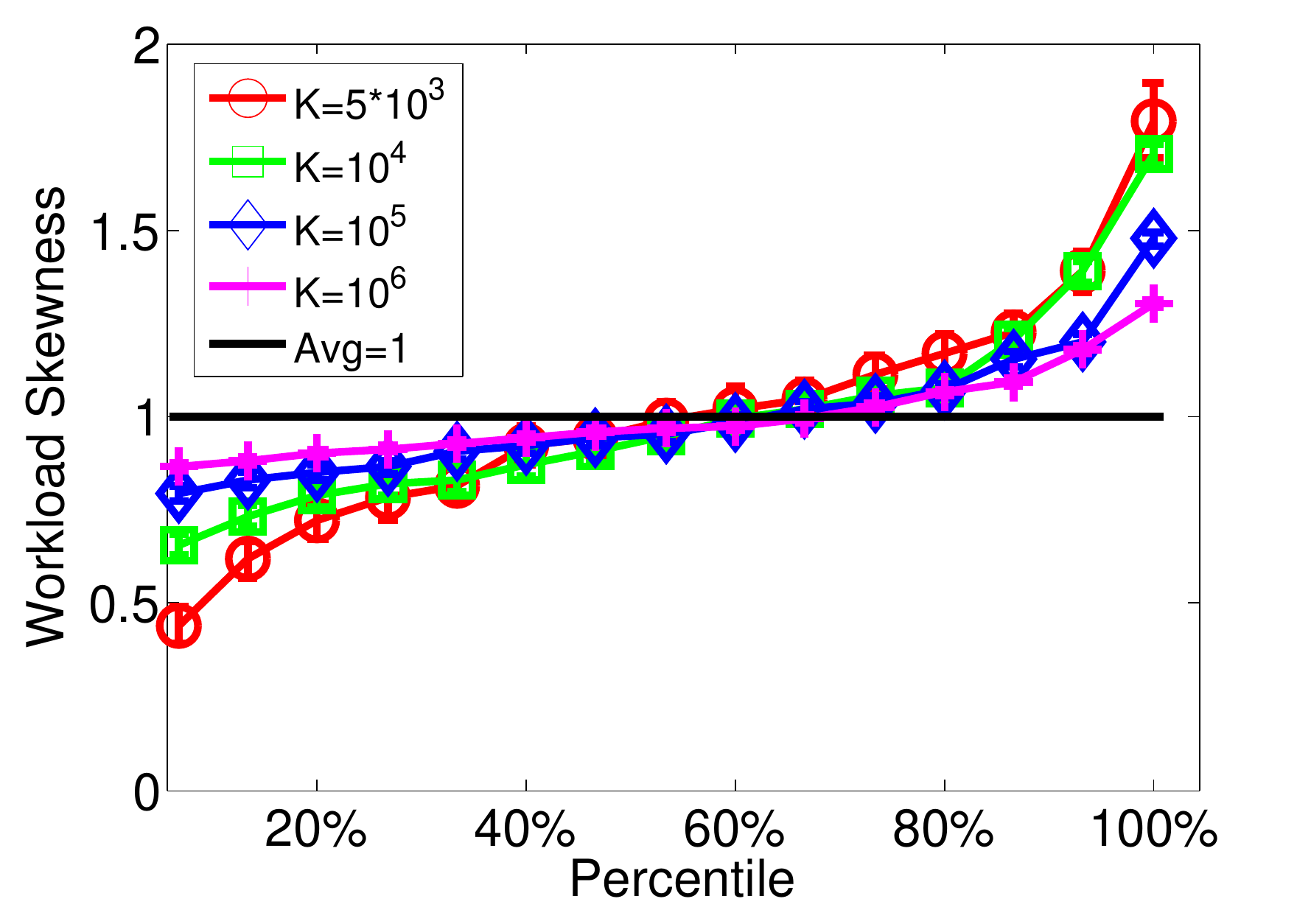}
      \label{fig:exp:parameterK}
}
\end{tabular}
 \centering \caption{Cumulative distribution of workload skewness under hash-based scheme.}\label{fig:exp:skew}
 \vspace{-10pt}
\end{figure}

\vspace{3pt}
\noindent\textbf{Load Skewness Phenomenon: }To understand the phenomenon of workload skewness with traditional hash-based mechanism, we report the workload imbalance phenomenon on the task instances by changing the number of task instances and the size of key domain, respectively.
%
%
The results of load imbalance in Fig. \ref{fig:exp:skew} are presented as the cumulative distribution of average workload among the task instances over 50 time intervals. Fig. \ref{fig:exp:parameterN} implies that the skewness grows when increasing the number of task instances. When there are 40 instances (i.e., $N_D=40$),  the maximal workload at 100\% percentile is almost 2.5 times larger than the minimal workload. Fig. \ref{fig:exp:parameterK} shows that the workload imbalance is also highly relevant to the size of key domain. When there are more keys in the domain, the hash function generates more balanced workload assignment. In Fig.\ref{fig:exp:parameterK}, the maximal workload for $K=$5,000 is around $4$ times larger than the minimal one and is much larger than the maximal load under larger key domain size (e.g., $K=$1,000,000). Therefore, workload imbalance for intra-operator parallelism is a serious problem and cannot be easily solved by randomized hash functions.

\vspace{3pt}
\noindent\textbf{Impact of Algorithm Parameters: }We test the algorithm parameters on synthetic datasets using two window sizes (i.e., $w=1$ and $w=5$), in order to understand their impacts for short and long term aggregation over stream data. When $w=1$, migration decisions are made based on the current stateful and instantaneous workload. When $w=5$, more state information in the last five intervals are included in the decision making procedure.

\begin{figure}
\centering
\begin{tabular}[t]{c}
\hspace{-20pt}
   \subfigure[ \# of instances \emph{vs} scheduling efficiency]{
      \includegraphics[height = 3.9cm, width = 4.7cm]{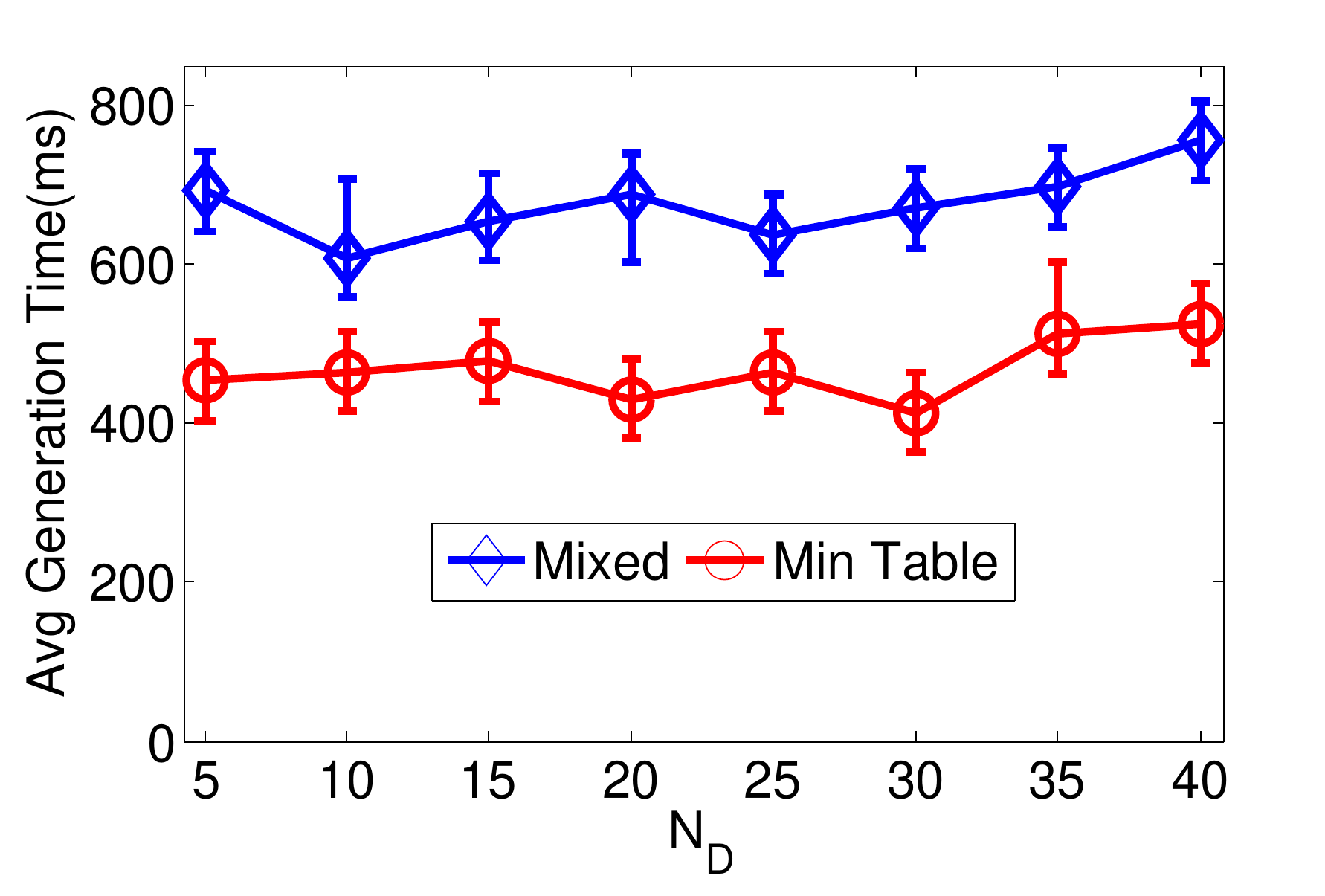}
      \label{fig:exp:InstanceNumberEfficiency}
   }
\hspace{-12pt}
   \subfigure[\# of instances \emph{vs} migration cost]{
      \includegraphics[height = 3.6cm, width = 4.7cm]{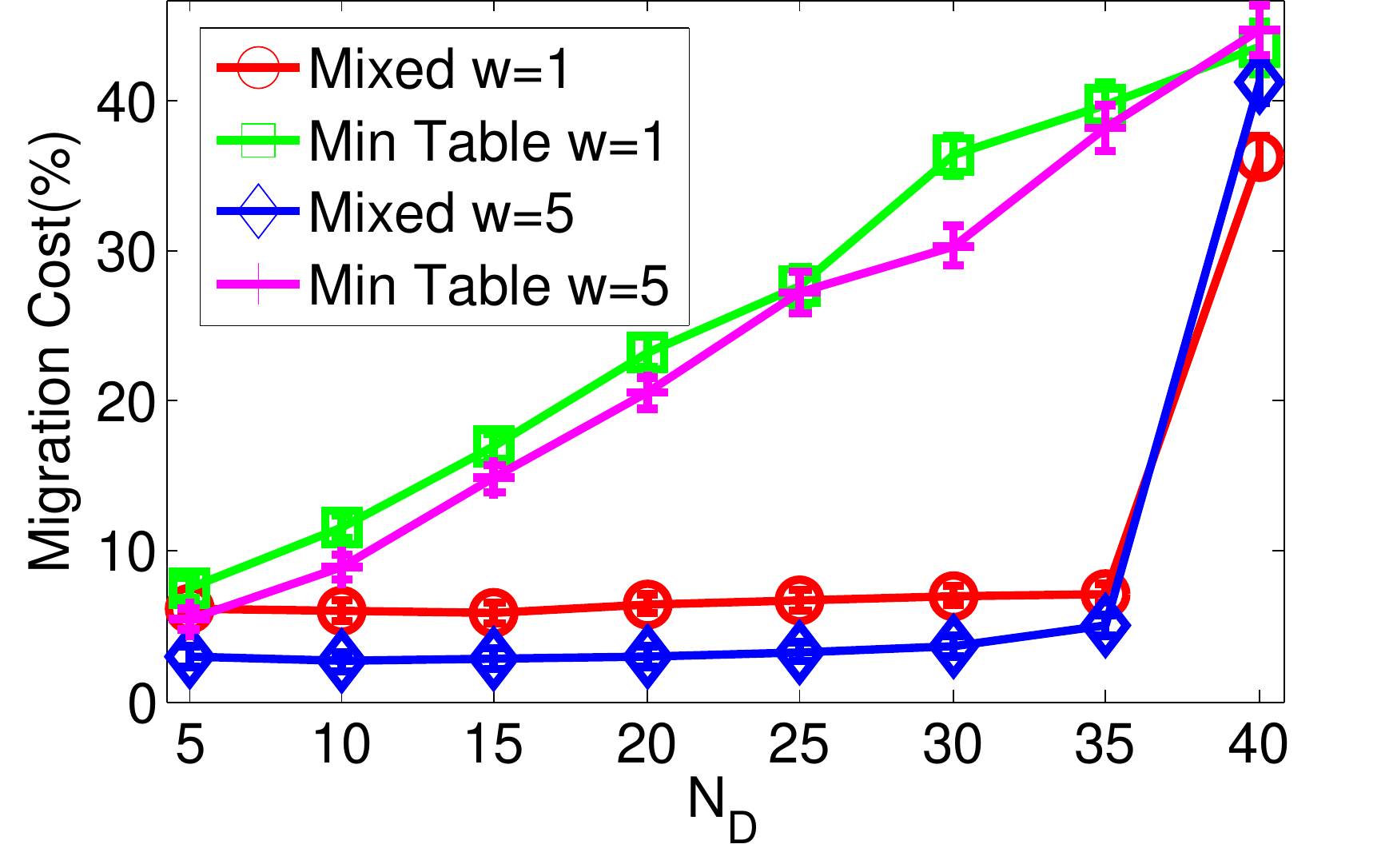}
      \label{fig:exp:InstanceNumberMC}
}
\end{tabular}
 \vspace{-5pt}
 \centering \caption{Performance with varying number of task instances.}\label{fig:exp:InstanceNumber}
 \vspace{-10pt}
\end{figure}

Although the increase on $N_D$ produces more workload imbalance, our migration algorithm \emph{Mixed} performs well, by generating excellent migration plan, as shown in Fig.~\ref{fig:exp:InstanceNumber}. \emph{Mixed} costs a little additional overhead over \emph{MinTable} algorithm for balancing, but its migration cost is much lower than \emph{MinTable} when $N_D\leq 35$ for both $w=1$ and $w=5$, as presented in Fig.~\ref{fig:exp:InstanceNumberMC}. The cleaning step in \emph{MinTable} algorithm also leads to even higher skewness and much more migration cost in order to achieve load balancing. When $w=5$, \emph{Mixed} keeps more historical tuples which can be used as the migration candidates. This makes the migration easier and less expensive, when compared to the case with $w=1$. When $N_D> 35$, however, the migration cost of \emph{Mixed} jumps, almost reaching the cost of \emph{MinTable} when $N_D=40$.  This is because the outcome of \emph{Mixed} algorithm degenerates to that of \emph{MinTable} algorithm, when the minimal routing table size needed for target load balancing exceeds the specified size of the table in the system.

Fig.~\ref{fig:exp:Theta} displays the efficiency of migration plan generation and the corresponding migration cost with varying workload balancing tolerance parameter $\theta_{\max}$. As expected, Migration scheduling runs faster on synthetic dataset with larger $\theta_{\max}$ in Fig.~\ref{fig:exp:thetaEfficiency}.  When $\theta_{\max}\geq 0.2$, the efficiency of \emph{Mixed} catches that of \emph{MinTable}. If stronger load balancing (i.e., smaller $\theta_{\max}$) is specified, system pays more migration cost as shown in Fig.~\ref{fig:exp:thetaMC}, basically due to more keys involved in migration. But \emph{MinTable} incurs three times of the migration cost of \emph{Mixed} under the same balance requirement. Even for strict $\theta_{\max}=0.02$ (almost absolutely balanced), the algorithm is capable of generating the migration plan within 1 second. Moreover, migration cost with larger window size (i.e., $w=5$) shrinks, as the historical states provide more appropriate candidate keys for migration plan generation.

\begin{figure}
\centering
\begin{tabular}[t]{c}
\hspace{-20pt}
 \subfigure[$\theta_{\max}$ \emph{vs} scheduling efficiency]{
      \includegraphics[height = 3.9cm, width = 4.7cm]{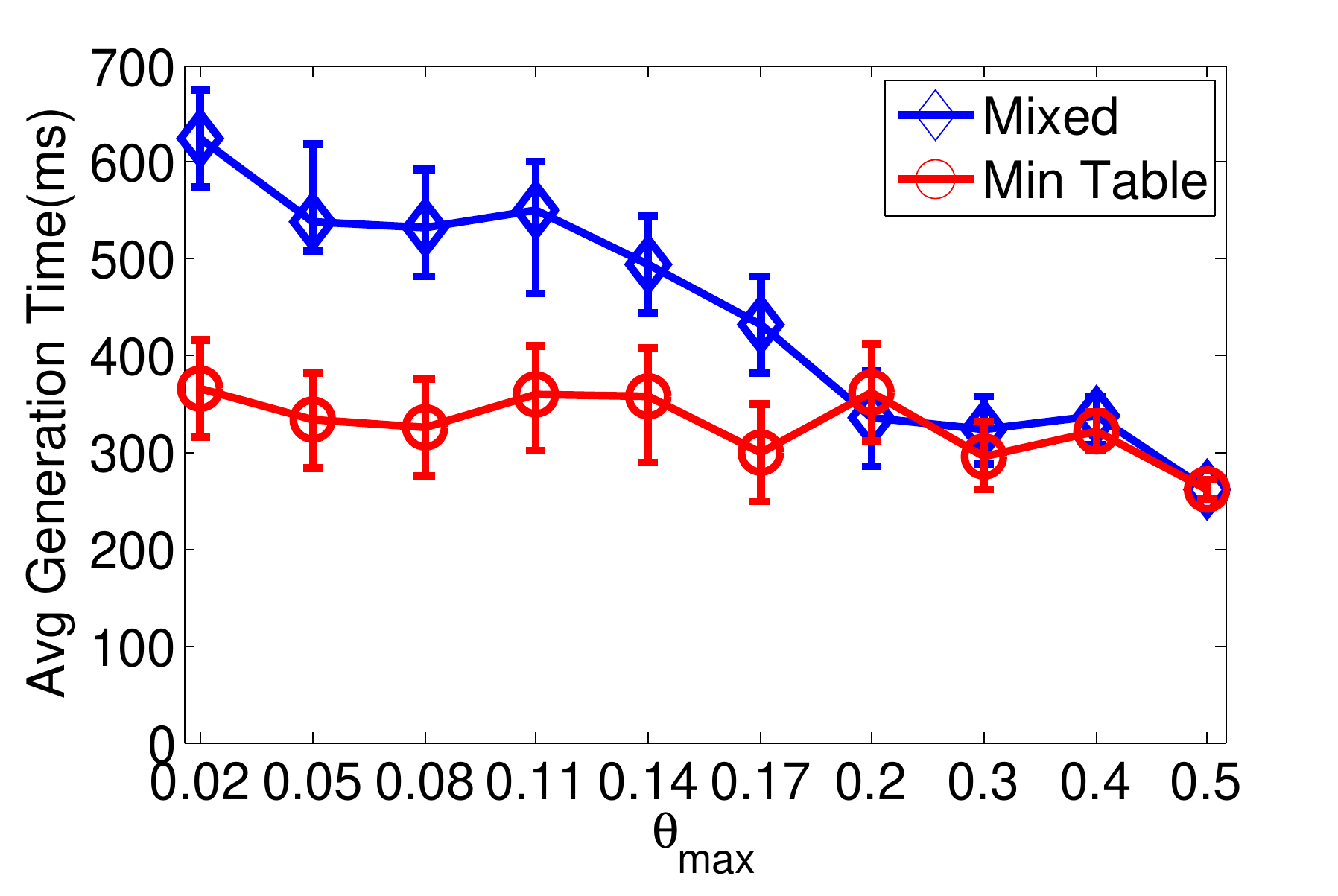}
      \label{fig:exp:thetaEfficiency}
}
\hspace{-12pt}
   \subfigure[$\theta_{\max}$ \emph{vs} migration cost]{
      \includegraphics[height = 3.6cm, width = 4.7cm]{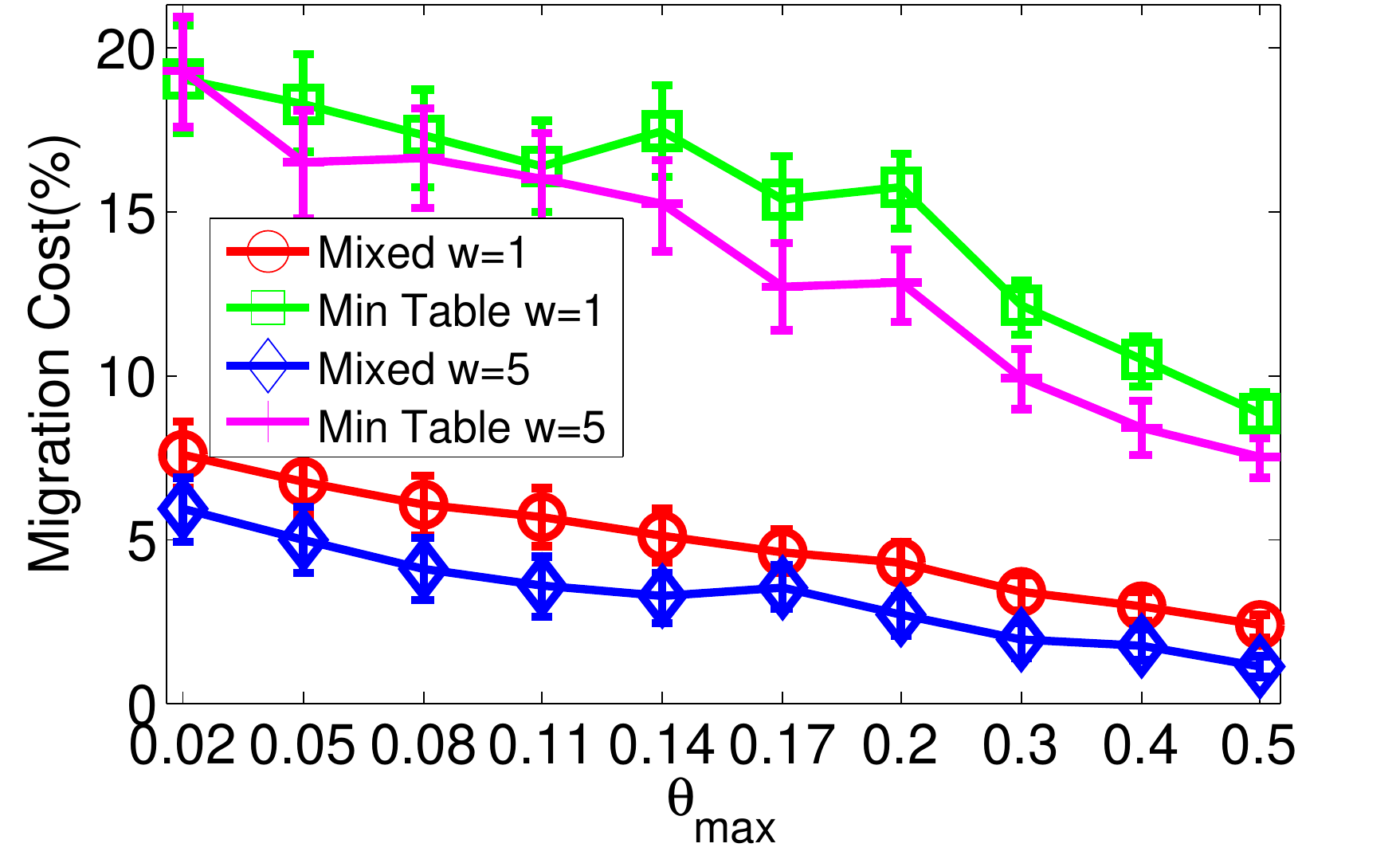}
      \label{fig:exp:thetaMC}
}
\end{tabular}
\vspace{-5pt}
 \centering \caption{Performance with varying $\theta_{\max}$.}\label{fig:exp:Theta}
 \vspace{-10pt}
\end{figure}

In Fig.~\ref{fig:exp:K}, we report the results on varying key domain size $K$. By varying $K$ from 5,000 to 1,000,000, \emph{Mixed} spends more computation time on migration planning but incurs less migration cost than \emph{MinTable}. As shown in Fig.~\ref{fig:exp:parameterK}, the smaller the key domain is, the more skewed the workload distribution will be. But our proposed solution \emph{Mixed} shows stable performance, regardless of the domain size, based on the results in Fig.~\ref{fig:exp:KEfficiency}. In particular, migration cost decreases for both \emph{MinTable} and \emph{Mixed} algorithms, when the window size grows to $w=5$.

\begin{figure}
\centering
\begin{tabular}[t]{c}
\hspace{-20pt}
   \subfigure[ K \emph{vs} scheduling efficiency]{
      \includegraphics[height = 3.9cm, width = 4.7cm]{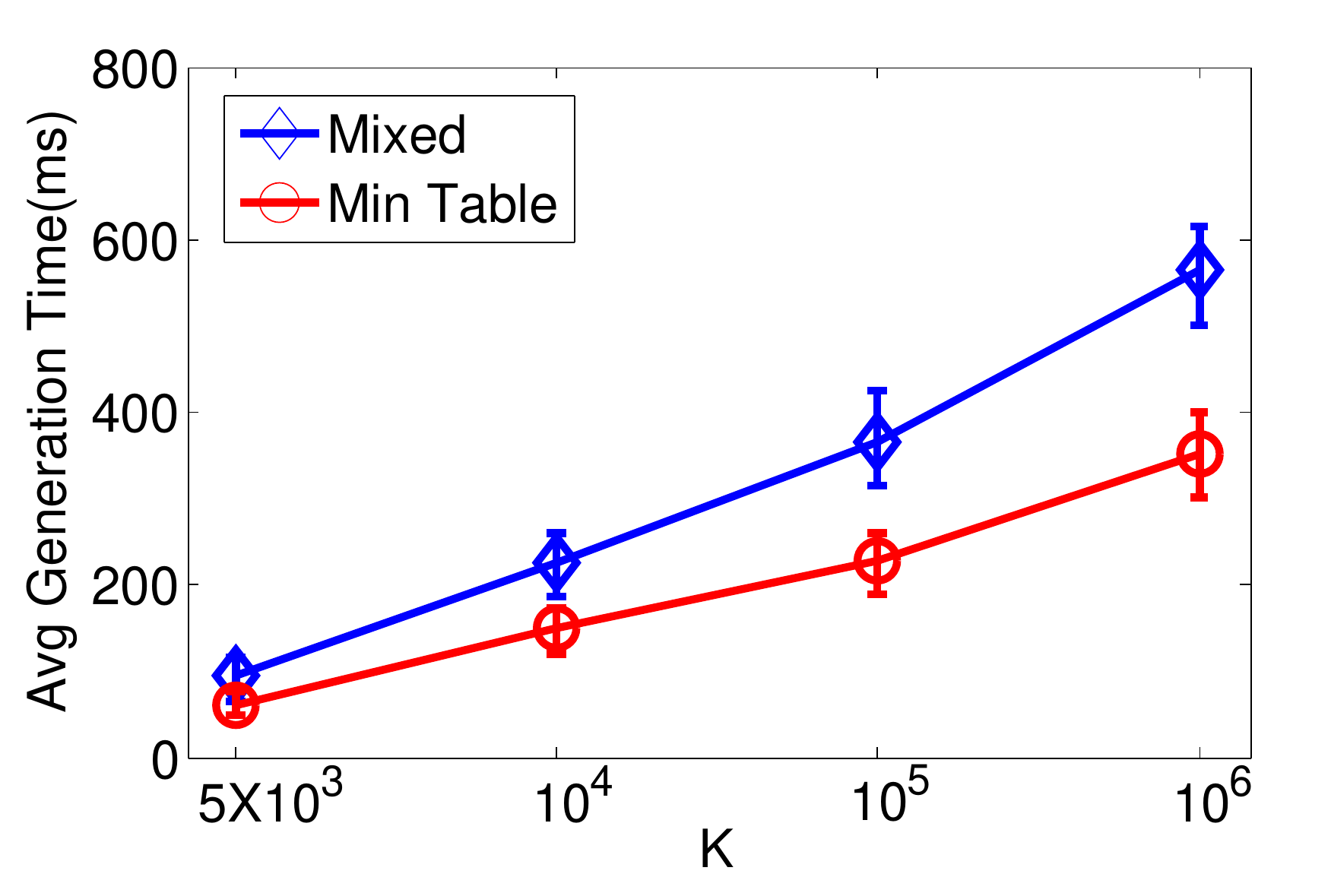}
      \label{fig:exp:KEfficiency}
   }
\hspace{-12pt}
   \subfigure[ K \emph{vs} migration cost]{
      \includegraphics[height = 3.6cm, width = 4.7cm]{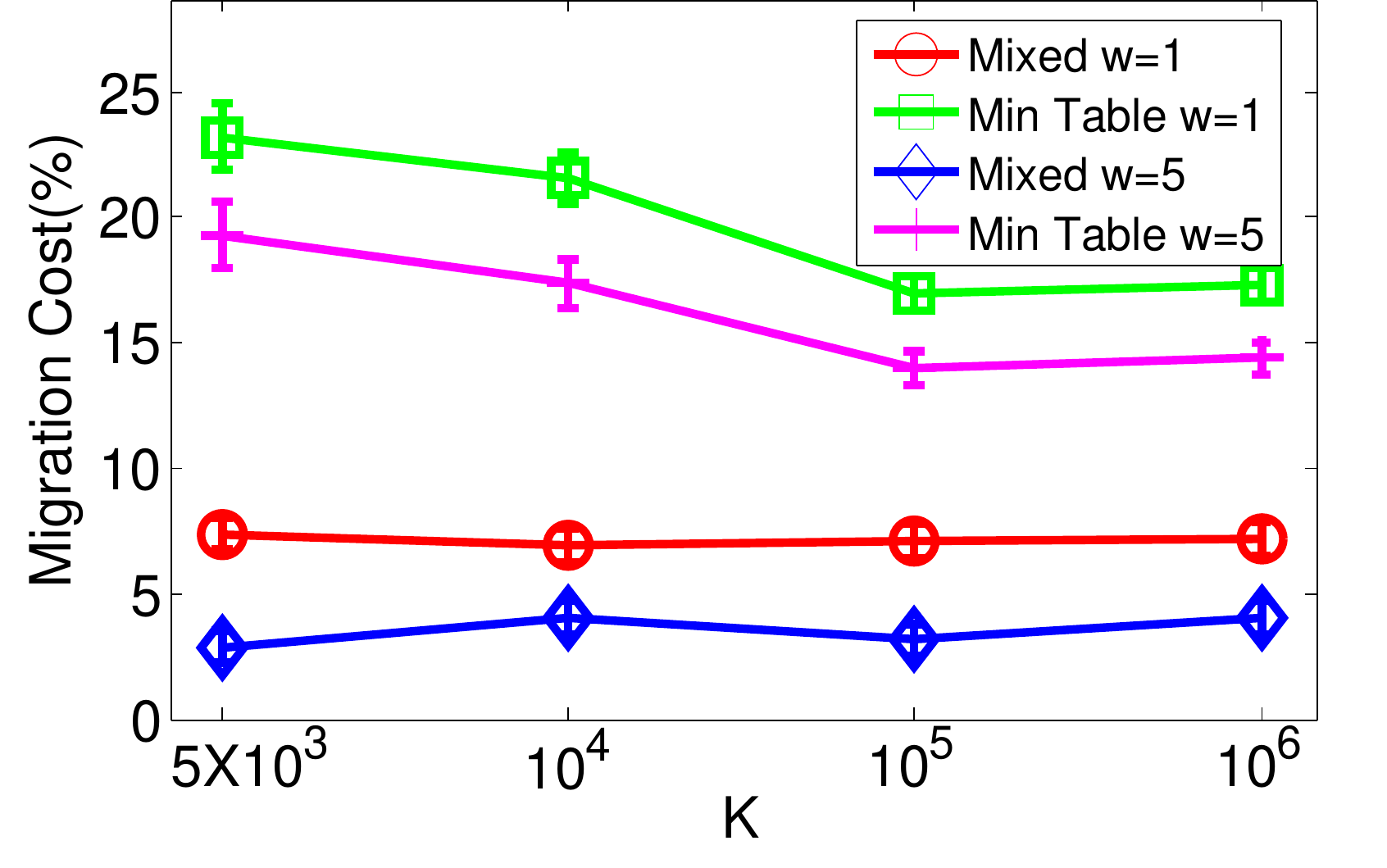}
      \label{fig:exp:KMC}
}
\end{tabular}
\vspace{-5pt}
 \centering \caption{Scheduling efficiency in terms of average generation time and migration cost under different key domain size, K}\label{fig:exp:K}
 \vspace{-10pt}
\end{figure}

In Sec.~\ref{sec:optimization}, we present the possibility of efficiency improvement by applying compact representation for key related information. In this group of experiments, we report the performance of this technique in Fig.~\ref{fig:exp:R}, by varying the degree of discretization (i.e., the value of $R$) on values of computation cost $v_c$ and memory consumption $v_S$. Fig.~\ref{fig:exp:REfficiency} shows that discretization on both $v_c$ and $v_S$ is an important factor to the efficiency of migration scheduling. The average generation time of migration plan is quickly reduced when we allow the system to discretize the values at a finer granularity. Note that the point with label ``Original Key Space'' is the result without applying the compact representation on keys, while the point at $R = 1$ is the case of the finest degree of discretization on $v_c$ and $v_S$. The efficiency is improved by an order of magnitude when $R = 8$ (i.e., 0.6 second) compared to ``Original Key Space'' (i.e., 6 seconds). Although larger $R$ leads to smaller $|v_c|$, $|v_S|$ and smaller $K^c$ and makes the migration plan faster, the error on load estimation grows (i.e., the percentage of divergence between actual workload of a task instance and the estimated workload based on the discretizated workload over the keys), as shown in Fig.~\ref{fig:exp:RMC}, because the discretization generates inaccurate load approximation for the keys. However, such errors are no more than 1\% in all cases, while the degree of discretiazton $R$ varies from 1 to 256 as shown in Fig.~\ref{fig:exp:RMC}.
\begin{figure}
\centering
\begin{tabular}[t]{c}
\hspace{-20pt}
   \subfigure[$R$ \emph{vs} scheduling efficiency]{
      \includegraphics[height = 3.6cm, width = 4.7cm]{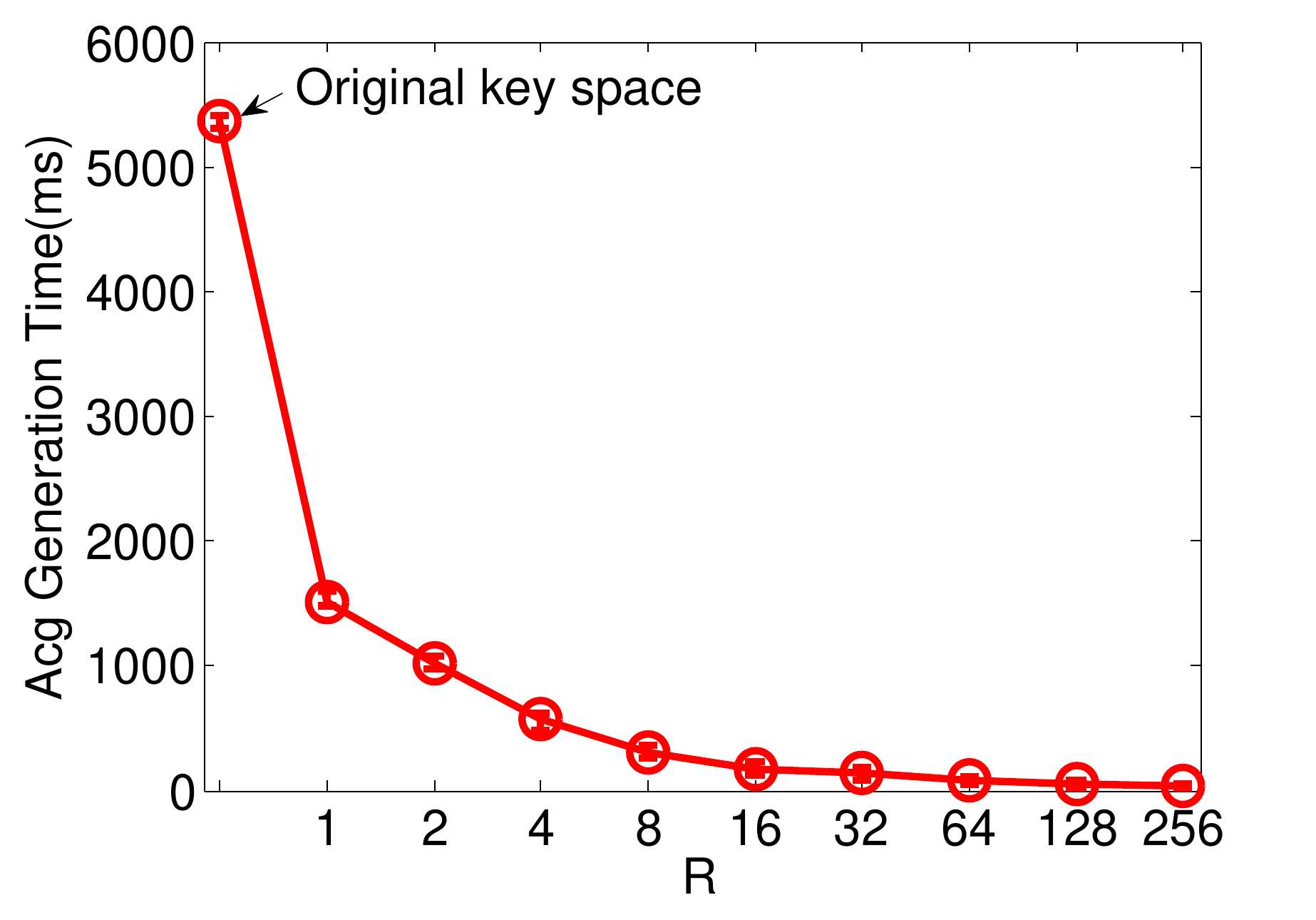}
      \label{fig:exp:REfficiency}
   }
\hspace{-12pt}
   \subfigure[$R$ \emph{vs} estimation error]{
      \includegraphics[height = 3.8cm, width = 4.7cm]{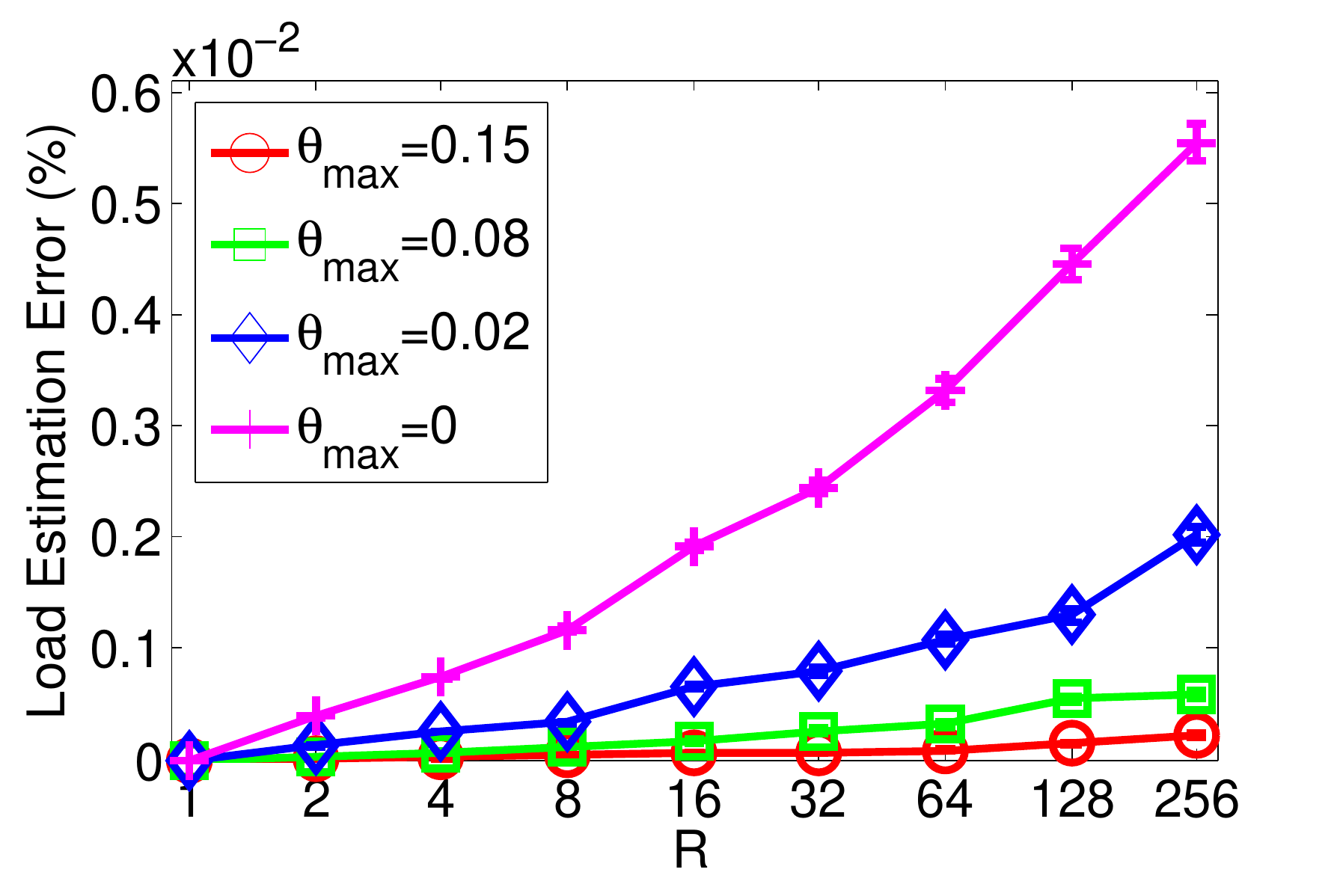}
      \label{fig:exp:RMC}
}
\end{tabular}
 \vspace{-5pt}
 \centering \caption{Performance with varied degrees of
 	discretization for partitioning granularity.}\label{fig:exp:R}
 \vspace{-10pt}
\end{figure}

\begin{figure}[htp]
	\vspace{-5pt}
	\centering
	\begin{tabular}[t]{c}
\hspace{-20pt}
\subfigure[Stream dynamics \textit{vs} efficiency]{
	\includegraphics[height = 3.6cm, width = 4.6cm]{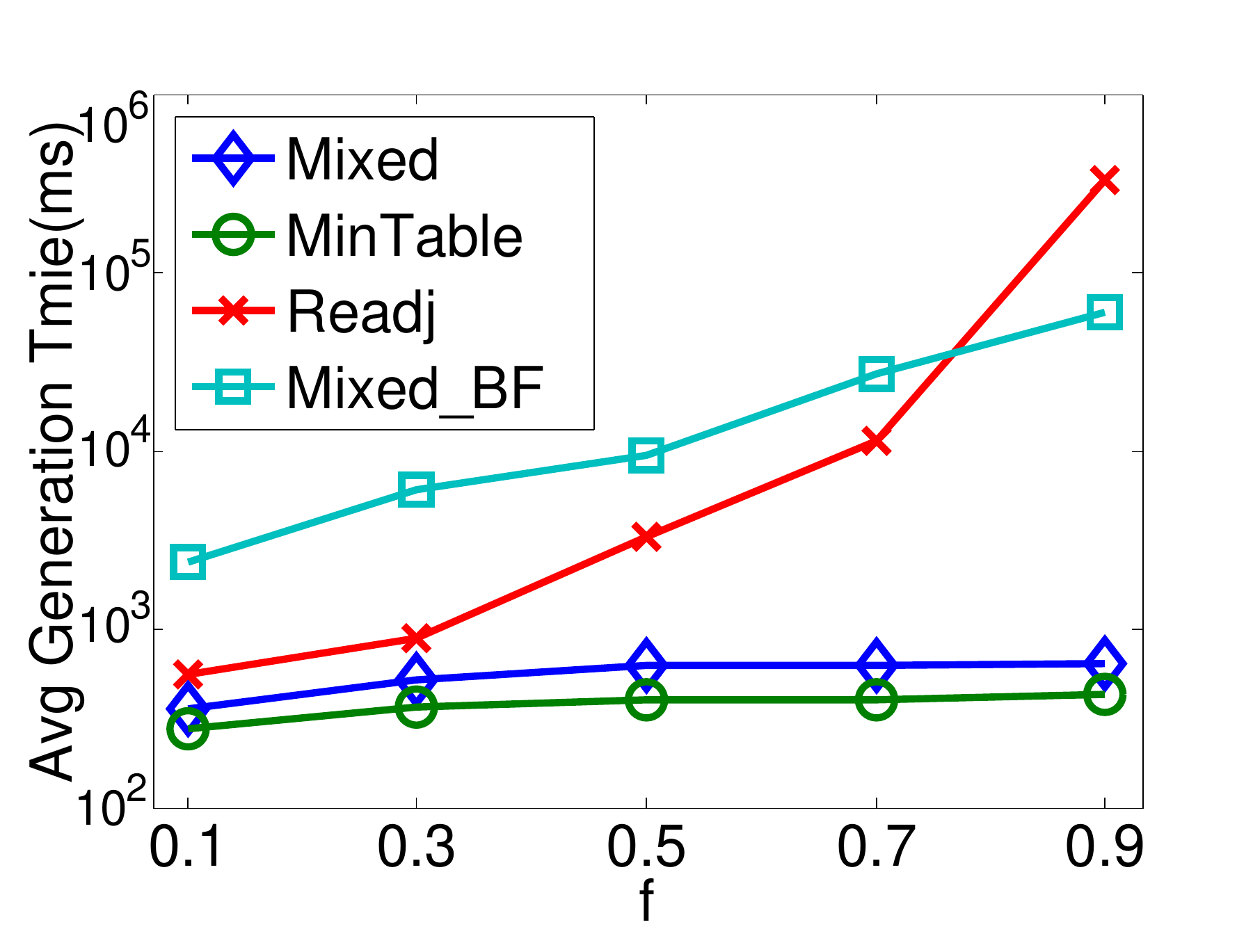}
	\label{fig:exp:CVEfficiency}
}
\hspace{-12pt}
\subfigure[Stream dynamics \textit{vs} MC]{
	\includegraphics[height = 3.4cm, width = 4.6cm]{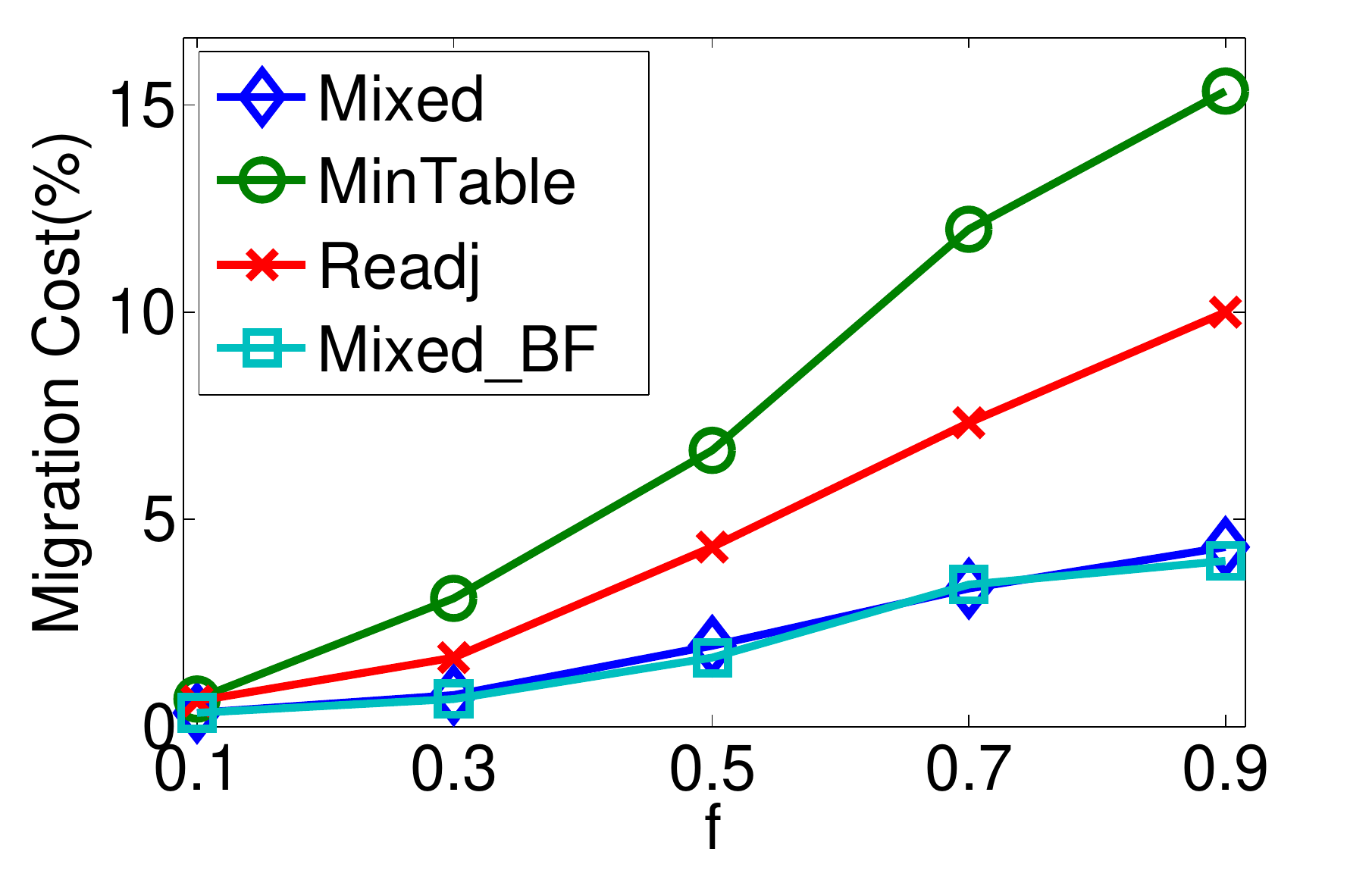}
	\label{fig:exp:CVMC}
}
	\end{tabular}
	\vspace{-7pt}
	\caption{Scheduling efficiency and migration cost with varying distribution change frequency.}
	\label{fig:vary_f}
	\vspace{-2pt}
\end{figure}

Since \emph{Readj} is the most similar technique to our proposal in the literature, we conduct a careful investigation on performance comparison to evaluate the effectiveness of our proposal. To optimize the performance of \emph{Readj}, we adopt binary search to find the best $\delta$ for \emph{Readj}. Fig.~\ref{fig:vary_f} shows the performance on dynamic stream processing with imbalance tolerance $\theta_{\max}=0.08$, by varying distribution change frequency $f$.  When increasing $f$, \emph{Readj} presents less promising efficiency when generating migration plan, since it evaluates every pair of task instances and considers all possible movements across the instances. Instead, \emph{Mixed} makes the migration plan based on heuristic information, which outperforms \emph{Readj} by a large margin. The results also imply that brute force search with \emph{Mixed$_{\mbox{BF}}$} is a poor option for migration scheduling. When variances occur more frequently (i.e., with a higher $f$), migration cost of \emph{Mixed} grows slower than that of \emph{Readj}, while \emph{Mixed$_{\mbox{BF}}$} performs similarly to \emph{Mixed}.

\vspace{3pt}
\noindent\textbf{Throughput and Latency on Synthetic and Real Data :}
In Fig.~\ref{fig:readJ}, we draw the theoretical limit of the performance with the line labeled as \emph{Ideal}, which simply shuffles the workload regardless of the keys. Obviously, \emph{Ideal} always generates a better throughput and lower processing latency than any key-aware scheduling, but cannot be used in stateful operators for aggregations. When varying the distribution change frequency $f$, both the throughput and latency of \emph{Readj} change dramatically. In particular, \emph{Readj} works well only in the case with less distribution variance (smaller $f$). On the other hand, our \emph{Mixed} algorithm always performs well, with performance very close to the optimal bound set by \emph{Ideal}.

\begin{figure}[H]
\vspace{-10pt}
\centering
\begin{tabular}[t]{c}
\hspace{-20pt}
   \subfigure[Stream dynamics \textit{vs} throughput]{
      \includegraphics[height = 3.6cm, width = 4.6cm]{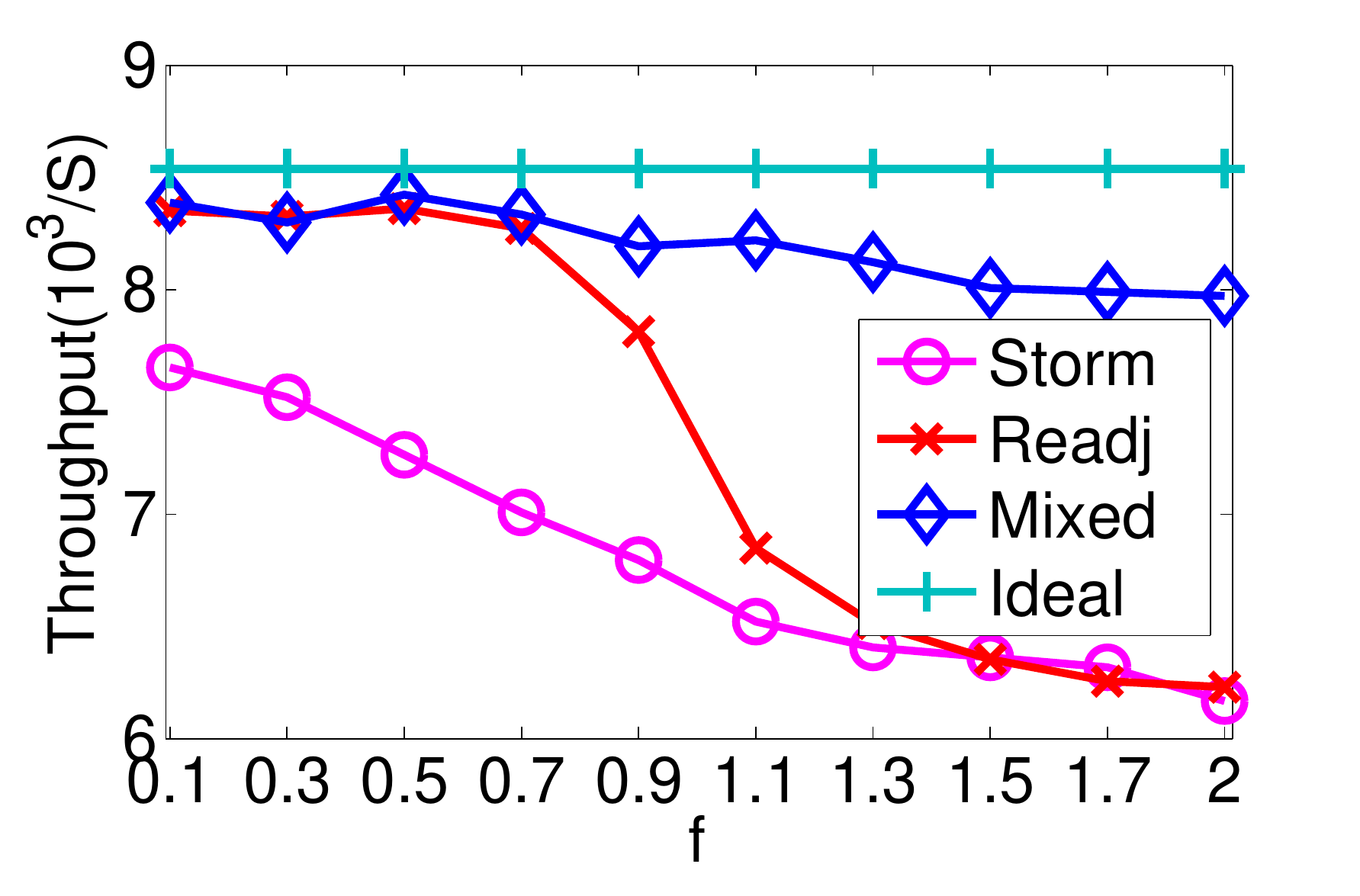}
      \label{fig:exp:stormSyn:a-err}
   }
\hspace{-12pt}
   \subfigure[Stream dynamics \textit{vs} latency]{
      \includegraphics[height = 3.6cm, width = 4.6cm]{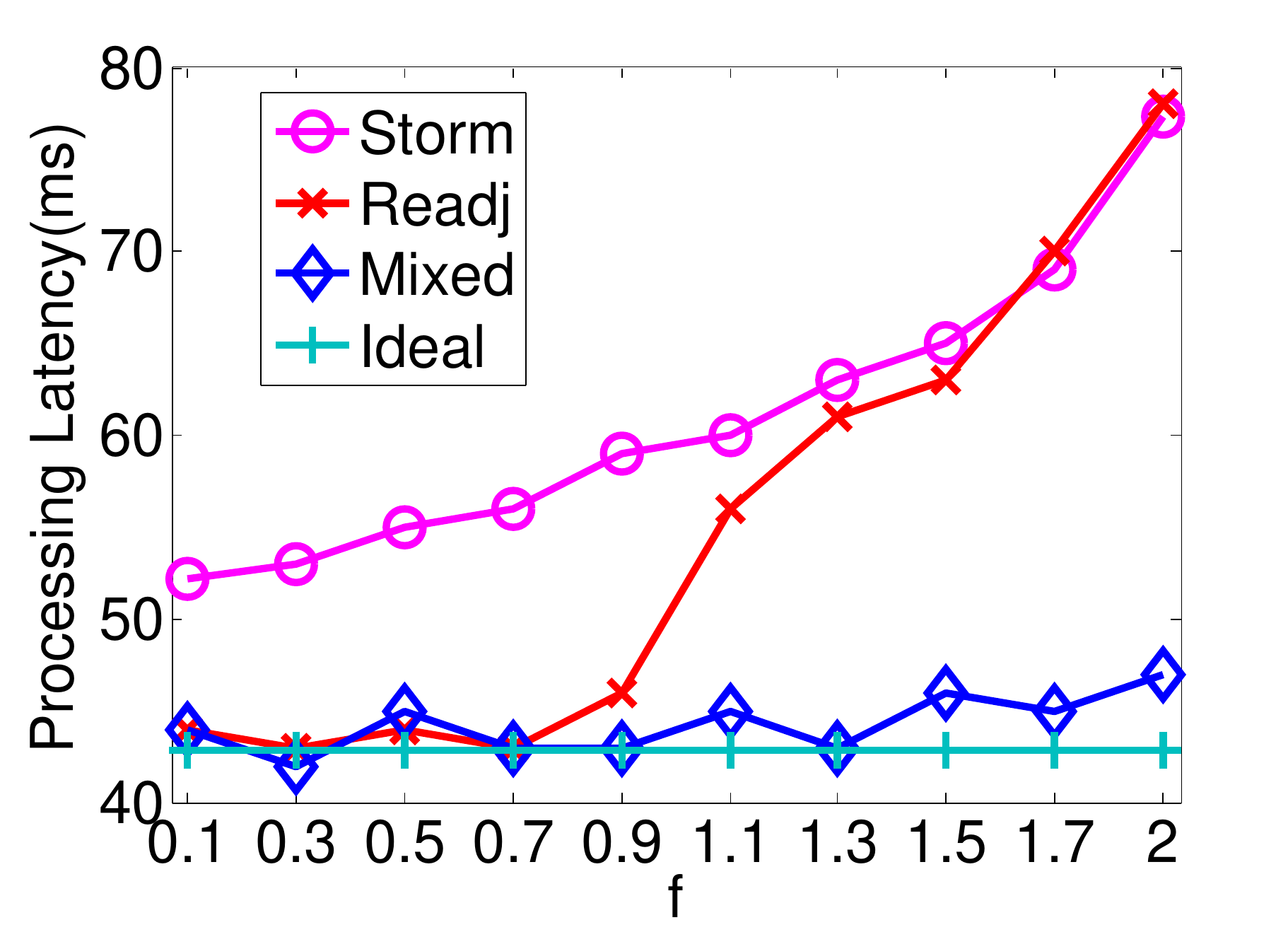}
      \label{fig:exp:stormSyn:b-err}
}
\end{tabular}
 \vspace{-5pt}
 \caption{Throughput and latency with varying distribution change frequency.}
 \label{fig:readJ}
 \vspace{-10pt}
\end{figure}

 On \emph{Social} data, we implement a simple word count topology on Storm, with upstream instances distributing tuples to downstream instances for store and aggregation on keywords. On \emph{Stock} data, a self-join on the data over sliding window is implemented, which maintains the recent tuples based on the size of the window over intervals. The result throughputs are presented in Fig.~\ref{fig:StormReal}. The most important observation is that the best throughput, on both of the workloads, is achieved by running \emph{Mixed} with $\theta_{\max}=0.02$, implying that strict load balancing is beneficial to system performance. \emph{Mixed} also presents huge performance advantage over the other two approaches, with throughput about 2 times better than \emph{Storm} and \emph{Readj} at smaller $\theta_{\max}$ in Fig.~\ref{fig:Stockdata}. The performance of \emph{Readj} improves by relaxing the load balancing condition, catching up with the throughput of \emph{Mixed} at $\theta_{\max}=0.3$ ($\theta_{\max}=0.15$ resp.) on \emph{Social} (\emph{Stock} resp.) This is because \emph{Readj} works only when the system allows fairly imbalance among the computation tasks, for example $\theta_{\max}=0.3$.
\emph{MinTable} does not care about migration cost and then it incurs larger migration volume, which reduces the throughput of system during the process of adjustment.
\emph{PKG} splits keys into smaller granularity and distributes them to different tasks selectively. Therefore, throughput of \emph{PKG} is independent of the choice of $\theta_{\max}$, validated by the results in Fig.~\ref{fig:exp:StormRealTheta}. The throughput of \emph{PKG} is worse than \emph{Mixed}, because its processing involves coordination between two operators. Despite of its excellent performance on load balancing, the overhead of partial result merging leads to additional response time increase and overall processing throughput reduction. Overall, as shown in Fig.~\ref{fig:exp:StormRealTheta}, when $\theta_{max}=0.02$, our method outperform \emph{PKG} on throughput by 10\% and on response latency by 40\%. Moreover, we emphasize that \emph{PKG} cannot be used for complex processing logics, such as join, and therefore is not universally applicable to all stream processing jobs.


\begin{figure}[htp]
\centering
\begin{tabular}[t]{c}
\hspace{-20pt}
   \subfigure[\emph{Social} Data]{
      \includegraphics[height = 3.2cm, width = 4.6cm]{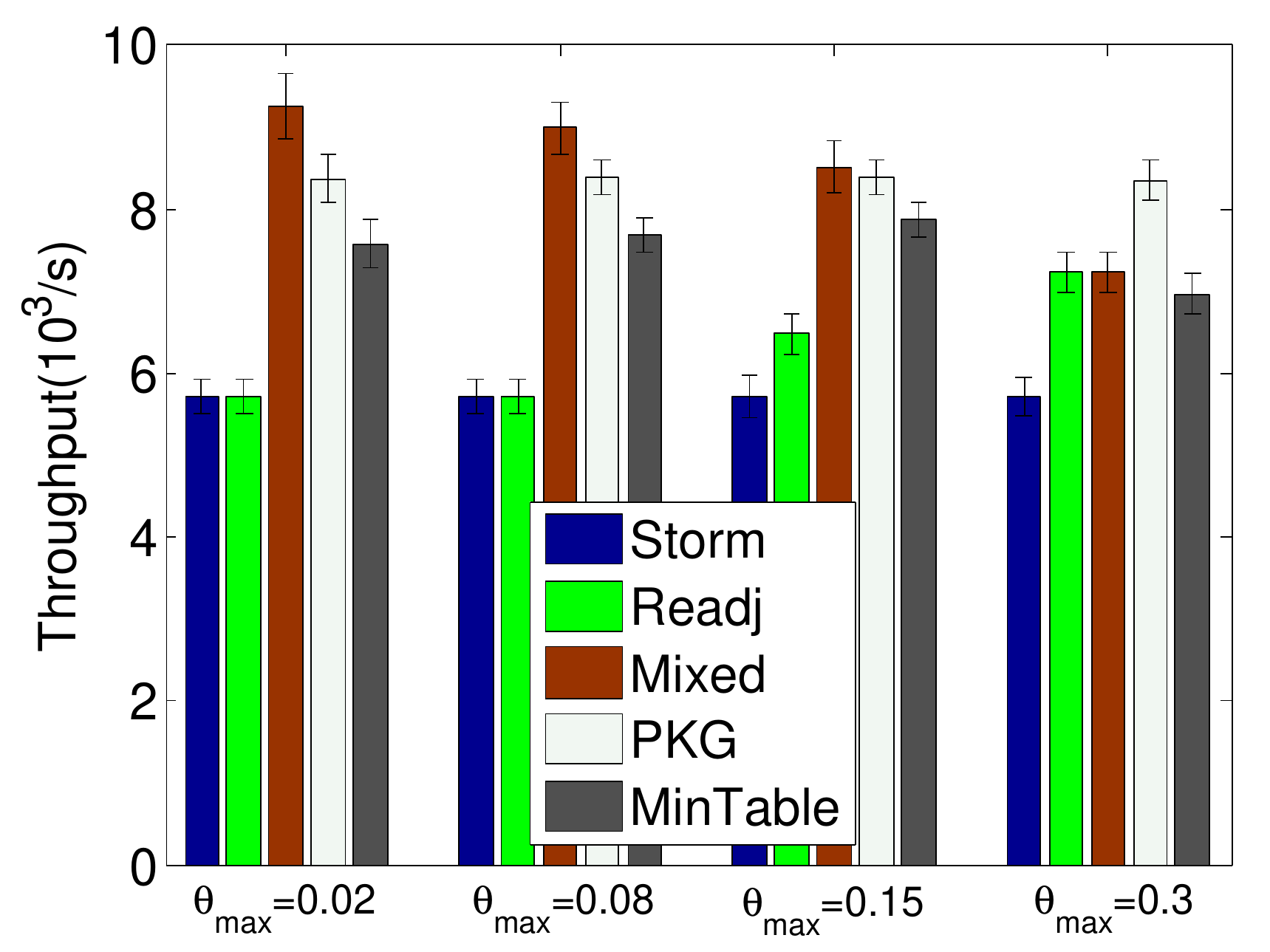}
      \label{fig:exp:StormRealTheta}
}
\hspace{-12pt}
\subfigure[\emph{Stock} Data]{
\includegraphics[height = 3.2cm, width = 4.6cm]{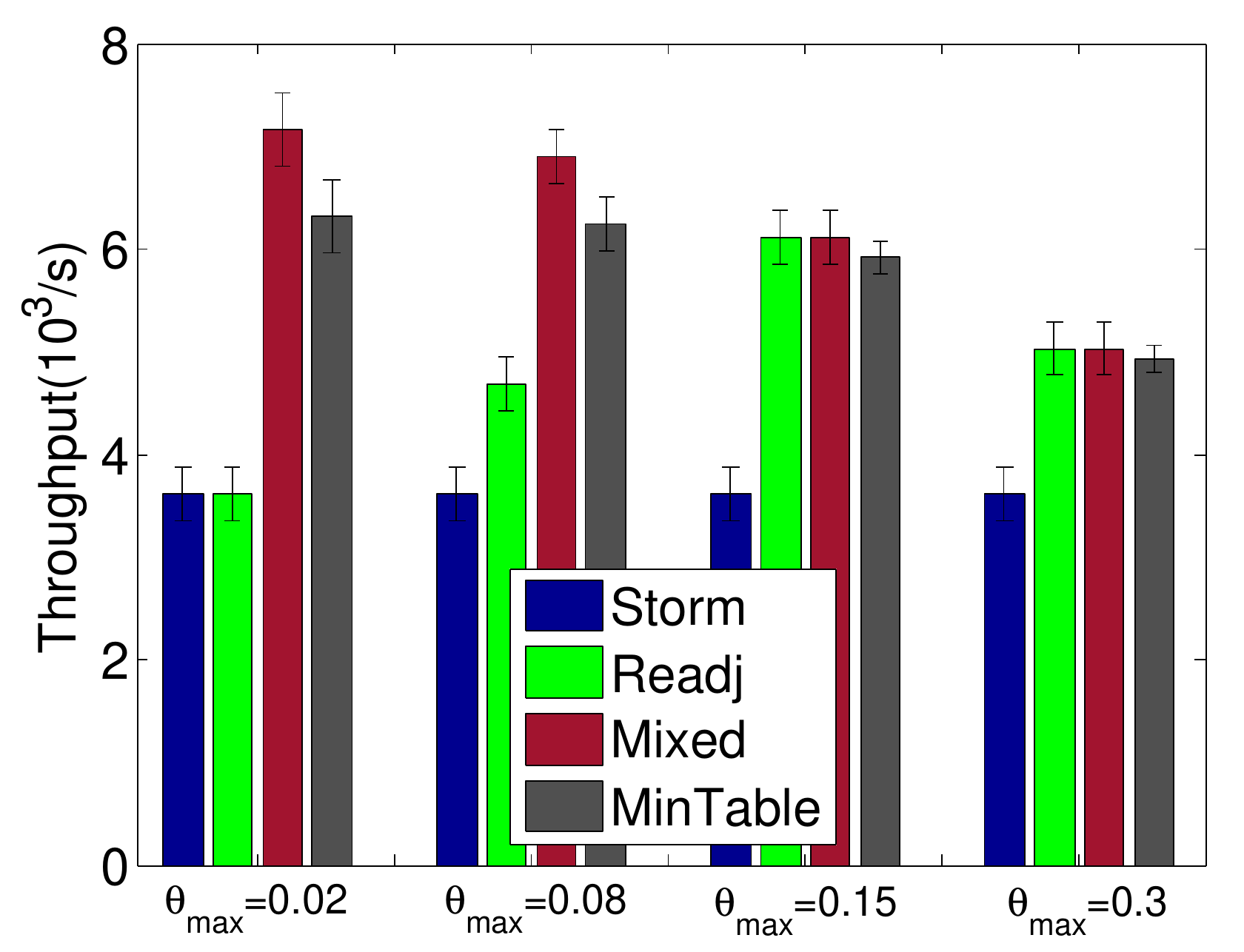}
\label{fig:Stockdata}
 }
\end{tabular}
 \vspace{-7pt}
 \caption{Throughput on real data.}
 \label{fig:StormReal}
 \vspace{-10pt}
\end{figure}

\vspace{3pt}
\noindent\textbf{Scalability on Real Data: }
To better understand the performance of the approaches in action, we present the dynamics of the throughput over time on two real workloads, especially when the system scales out the resource by adding new computation resource to the operator. The results are available in Fig.~\ref{fig:scaleout}. In order to test this kind of scale-out ability of different algorithms, we run the stream system to a balance status, and then add one more working thread (instance) to the system  starting the balance processing algorithms. The results show that our method \emph{Mixed} perfectly rebalances the system within a much shorter response time than that of \emph{Readj}. Though \emph{PKG} is $\theta_{max}$ insensitive, it produces a lower throughput than \emph{Mixed} while $\theta_{max}=0.1$. As the explanation of Fig.~\ref{fig:exp:StormRealTheta}, \emph{PKG} needs to keep track of all the derived data from a spout until it receives ack response and this action exacerbates its processing latency. On \emph{Social\ data} with $\theta_{max}=0.10$, \emph{Readj} takes at least 5 minutes to generate the migration plan for the new thread added to the system. Such a delay leads to huge resource waste, which is definitely undesirable to cloud-based streaming processing applications.  Similar results are also observed on \emph{Stock}. The quick response of \emph{Mixed} makes it a much better option for real systems.


\begin{figure}[htp]
\centering
\vspace{-5pt}
\begin{tabular}[t]{c}
\hspace{-20pt}
\subfigure[Social data]{
\includegraphics[height = 3.2cm, width = 4.6cm]{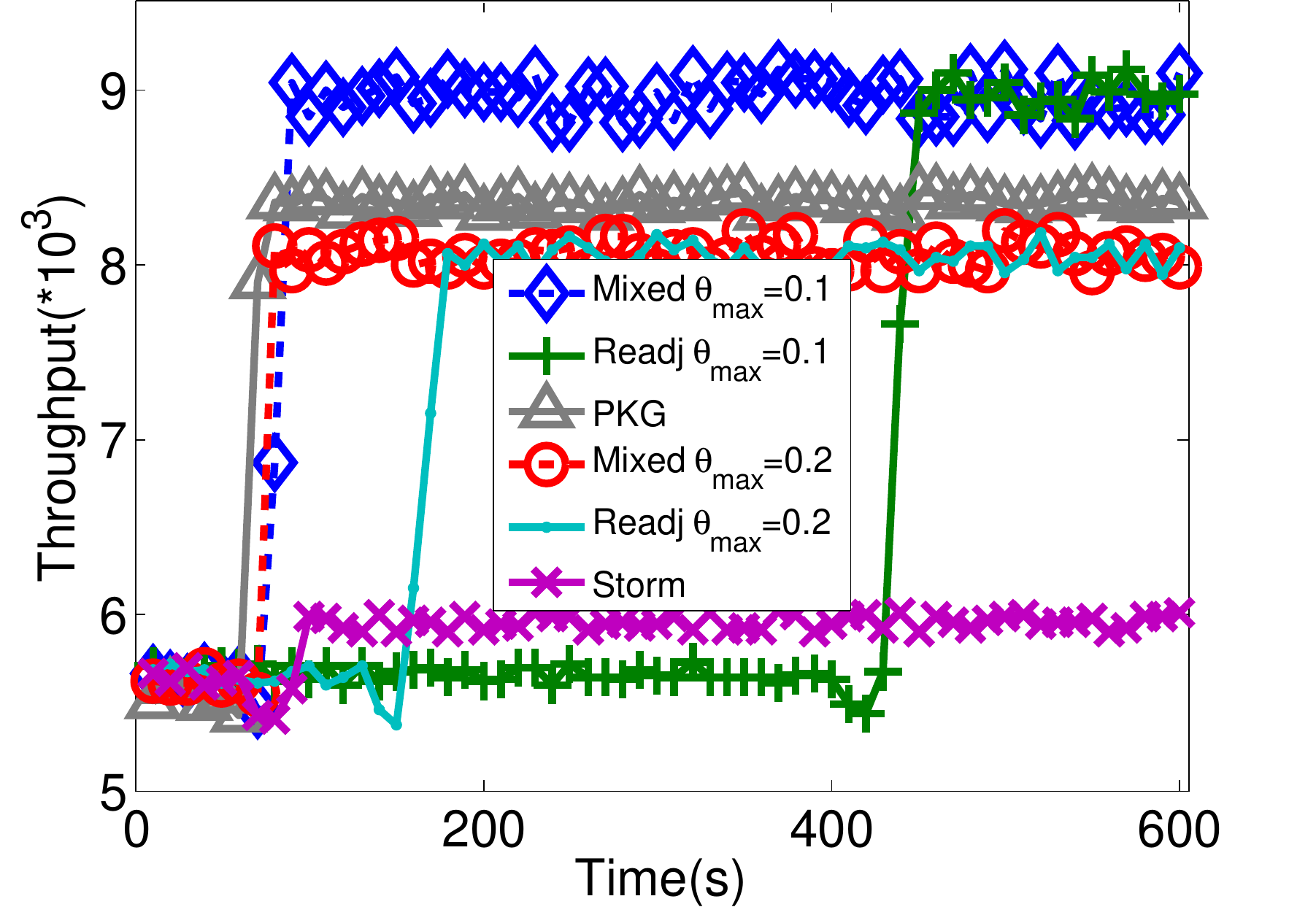}
\label{fig:Social}
 }
 \hspace{-12pt}
\subfigure[Stock data]{
\includegraphics[height = 3.2cm, width = 4.6cm]{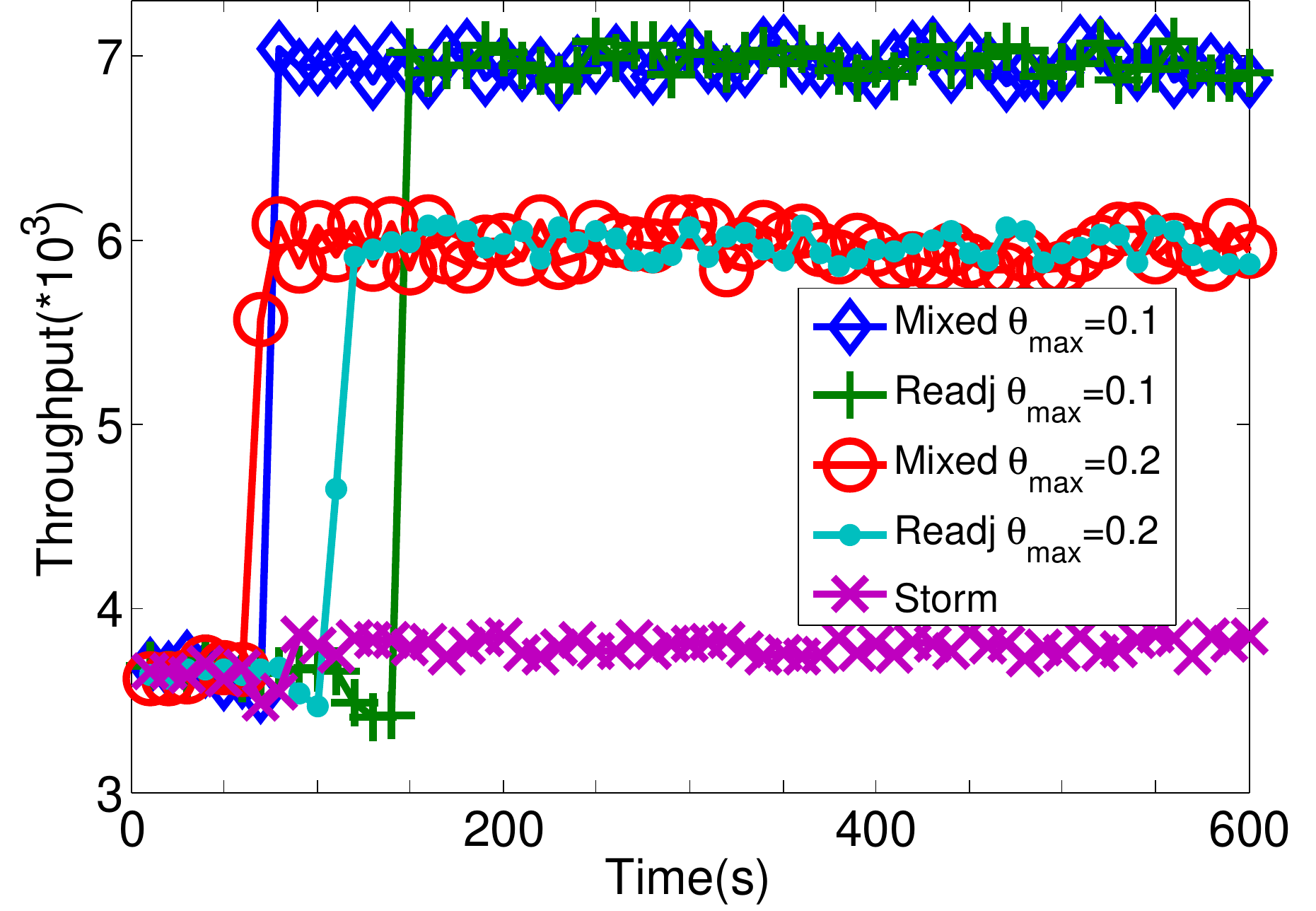}
\label{fig:Stock}
 }
\end{tabular}
\vspace{-8pt}
 \caption{Performance during system scale-out.}
 \label{fig:scaleout}
 \vspace{-10pt}
\end{figure}

%
%

\begin{figure}[htp]
	\centering
	\begin{tabular}[t]{c}
		\hspace{-20pt}
		\subfigure[$\theta_{max}=0.1$]{
			\includegraphics[height = 3.2cm, width = 4.6cm]{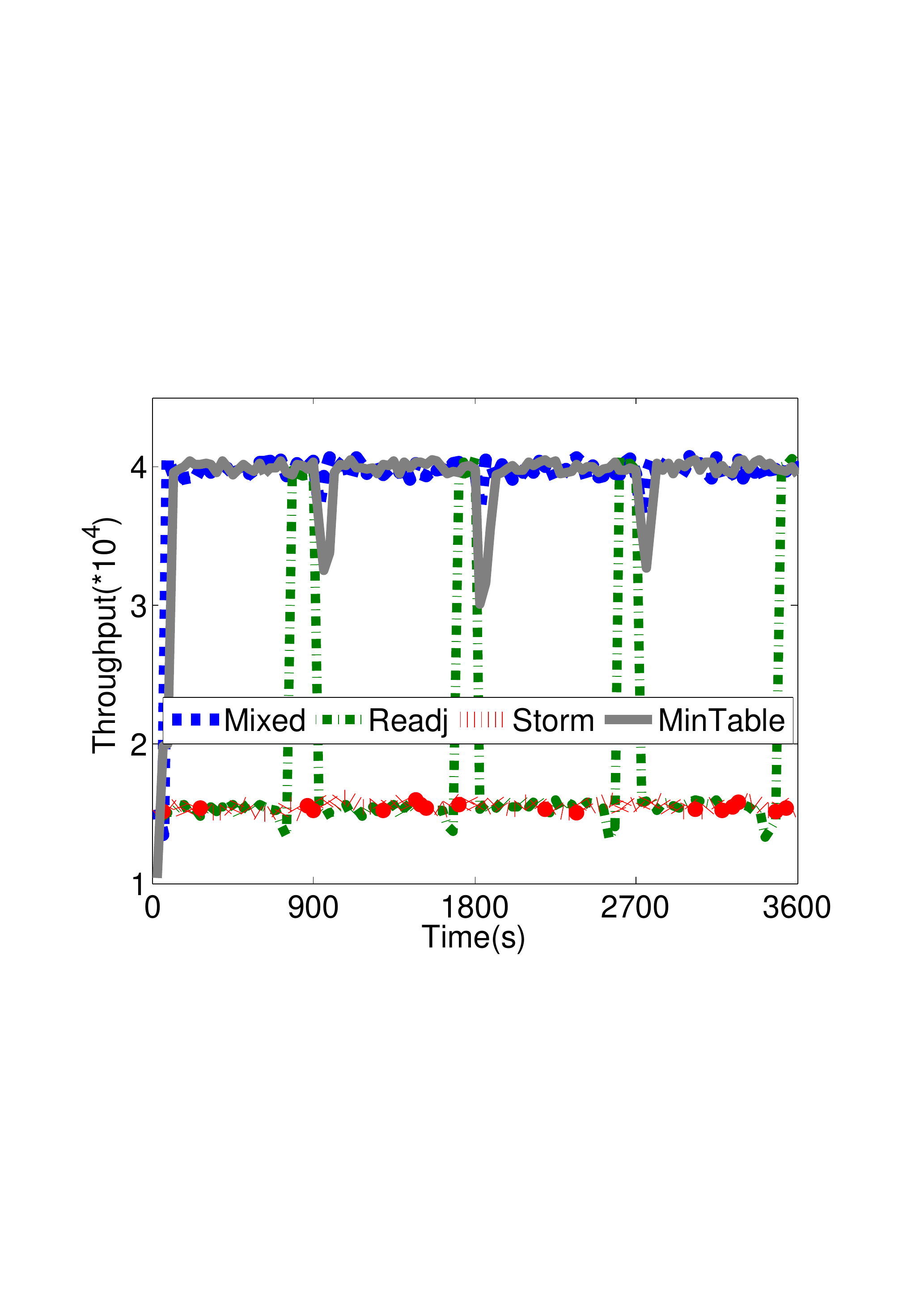}
			\label{fig:tPChThetaten}
		}
		\hspace{-12pt}
		\subfigure[$\theta_{max}=0.2$]{
			\includegraphics[height = 3.2cm, width = 4.6cm]{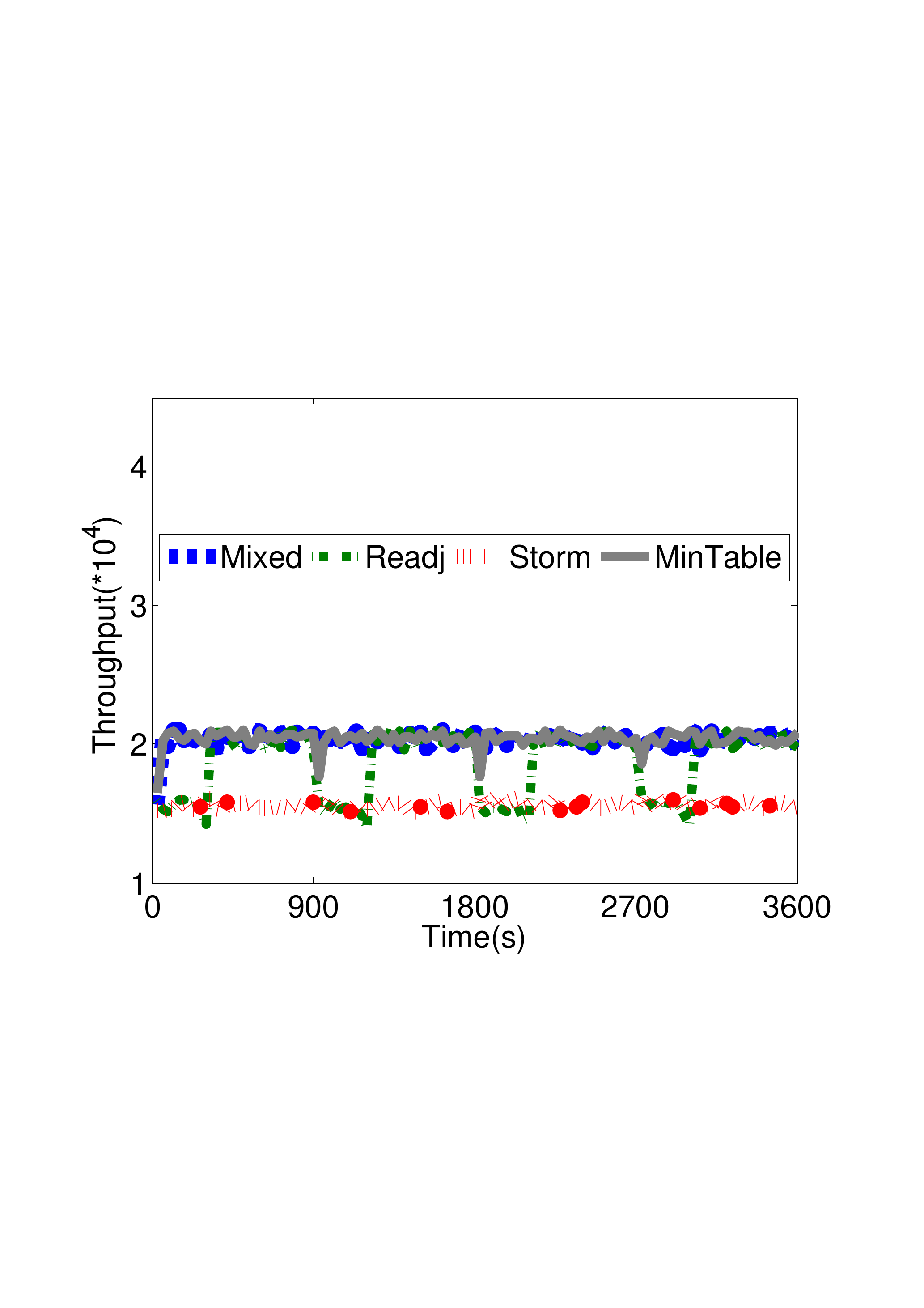}
			\label{fig:tpchThetatwenty}
		}
	\end{tabular}
	\vspace{-8pt}
	\caption{Dynamic adjustment on TPC-H data for $Q5$.}
	\label{fig:tcphchangeload}
	\vspace{-10pt}
\end{figure}
\vspace{3pt}
\noindent\textbf{Dynamics on TPC-H for $Q5$:}
We use $DBGen$
\cite{url:TPCH} to generate $1\ GB$ TPC-H dataset by producing zipf skewness on foreign keys with $z=0.8$. We run $Q5$ on the generated dataset for one hour and set window size as 5 minutes, since the join operations in $Q5$ are implemented by different processing operators. The data imbalance slows down the  previous join operator  (upstream instances) and  suspends the processing on downstream join operators. This bad consequence of such suspension may be amplified with the growing number of task instances. In particular we test the effects by triggering the distribution change in every 15 minutes with $f=1$. The results are shown in Fig.~\ref{fig:tcphchangeload}. Without any balancing strategy, \emph{Storm} presents poor throughputs. \emph{Mixed} is capable of balancing the workload in an efficient manner and achieving the best throughput under any balancing tolerance.  

\section{Related Work}\label{sec:relatedwork}



Different from batch processing and traditional distributed database \cite{dewitt1992parallel, walton1991taxonomy, xu2008handling, kwon2012skewtune, gufler2012load}, the problem of load balancing is more challenging on distributed stream processing systems, because of the needs of continuous optimization and difficulty with high dynamism.
%
%
There are two common classes of strategies to enable load balancing in distributed stream processing systems, namely \textit{operator-based} and \textit{data-based}.


\vspace{6pt}
\noindent\textbf{Operator-based }strategies generally assume the basic computation units are operators. Therefore, load balancing among distributed nodes is achieved by allocating the operators to the nodes. In Borealis~\cite{xing2005dynamic}, for example, the system exploits the correlation and variance of the workloads of the operators, to make more reliable and stable assignments. In~\cite{xing2006providing}, Xing et al. observe that operator movement is too expensive for short-term workload bursts. This observation motivates them to design a new load balance model and corresponding algorithms to support more resilient operator placement.
System S ~\cite{wolf2008soda}, as another example, also generates scheduling decisions for jobs in submission phase and migrates jobs or sub-jobs to less loaded machines on runtime based on complex statistics, including operators workload and the priority of the applications.
Zhou et al. \cite{ZOT05} presents a flow-aware load selection strategy to minimize communication cost with their new dynamic assignment strategy adaptive to the evolving stream workloads.
In order to improve system balance property, \cite{aniello2013adaptive} presents a more flexible mechanism by using both online and offline methods under the objective of network traffic minimization.
A common problem with operator-based load balancing is the lack of flexible workload partitioning. It could lead to difficulty to any operator-based load balancing technique, when an operator is much more overloaded than all other operators. 


\vspace{6pt}
\noindent\textbf{Data-based }strategies allow the system to repartition the workload based on keys of the tuples in the stream, motivated by the huge success of MapReduce system and its variants.


Elastic stream processing is a hot topic in both database and distributed system communities. Such systems attempt to scale out the computation parallelism to address the increasing computation workload, e.g., \cite{GSH+14,WT15}. By applying queuing theory, it is possible to model the workload and expected processing latency, which can be used for better resource scheduling \cite{fu2015drs}. When historical records are available to the system, it is beneficial to generate a long-term workload evolution plan, to schedule the migrations in the future with smaller workload movement overhead \cite{DFM+15}. Note that all these systems and algorithms are designed to handle long-term workload variance. All these solutions are generally too expensive if the workload fluctuation is just a short-term phenomenon. The proposal in this work targets to solve the short-term workload variance problem with minimal cost.


A number of research work focus on load balancing in distributed stream join systems.
\cite{elseidy2014scalable} models the join operation on a square matrix, each side of which represents one join stream. It distributes tuples randomly to cells on each line (or column) in matrix, which produces a potential join output.
To avoid these problems, in~\cite{lin2015scalable}, it proposes a join-biclique model which organizes the clusters as a complete bipartite graph for joining big data streams. In ~\cite{bruno2014advanced}, it classifies skewness on join key granularities and divides the joins into three types, e.g., B-Skew join, E-Skew join and H-Skew join. It proposes to deal with load imbalance by using different join algorithms. In ~\cite{taft2014store}, it designs a two-tiered partitioning method which handles hot tuples separately from cold tuples and distributes keys with heavy keys (big granularities) first. DKG~\cite{rivetti2015efficient} also distinguishes heavy keys from light ones by granularities and applies greedy algorithms for load balance. Although these techniques are effective for stream job applications, they are not directly extensible to general-purpose stream processing.


Flux \cite{shah2003flux} is the widely adopted load balancing strategy, designed for traditional distributed streaming processing systems.   It simply measures the workload of the tasks, and attempts to migrate workload from overloaded nodes to underloaded nodes. One key limitation of Flux is the lack of consideration on the routing overhead. In traditional stream processing systems, the workload of a logical operator is pre-partitioned into tasks, such that each task may handle a huge number of keys but processed by an individual thread at any time. The approach proposed in this paper allows the system to reassign keys in a much more flexible manner. Our approach also takes routing overhead into consideration, because it is unrealistic to fully control the destination for all keys from a large domain.

Nasir et al. ~\cite{nasir2015power, nasir2015two} design a series of randomized routing algorithms to balance the workload of stream processing operators. Their strategy is based on the theoretical model called power-of-two, which evaluates two randomly chosen candidate destinations for each tuples and chooses the one with smaller workload estimation. Their approach is more appropriate for stateless operators in streaming processing, and a subset of stateful operators by introducing an aggregator to combine results of tuples sent to different working threads. A number of stateful operators, such as join, may need all historical tuples with certain keys in order to generate complete and accurate computation results, which cannot be supported by such scheme. The method proposed in this paper, however, does not have such limitation, thus applicable to any stateful operator with perfect load balancing performance.


Readj~\cite{gedik2014partitioning} is proposed to resolve the stateful load balance problem with a small routing table, which is the most similar work to our proposal. It introduces a similar tuple distribution function, consisting of a basic hash function and an explicit hash table.
%
%
However, the workload redistribution mechanism used in Readj is completely different from ours. The algorithm in Readj always tries to move back the keys to their original destination by hash function, followed with migration schedules on keys with relatively larger workload. Their strategy might work well when the workload of the keys are almost uniform. When the workloads of the keys vary dramatically, their approach either fails to find a reasonable load balancing plan, or incurs huge routing overhead by generating a large routing table. The routing algorithms designed in this paper completely tackle this problem, which presents high efficiency as well as good balancing performance in almost all circumstances.
\section{Conclusion and Future Work}\label{sec:conclusion}

This paper presents a new dynamic workload distribution mechanism for intra-operator load balancing in distributed stream processing engines. Our mixed distribution strategy is capable of assigning the workload evenly over task workers of an operator, under short-term workload fluctuations. New optimization techniques are introduced to improve the efficiency of the approach, to enable practical implementation over mainstream stream processing engines. Our testings on Apache Storm platform show excellent performance improvement with a variety of workload from real applications, also present huge advantage over existing solutions on both system throughput and response latency. In the future, we will investigate the theoretical properties of the algorithms to better understand the optimality of the approaches under general assumptions. We will also try to design a new mechanism, to support smooth workload redistribution suitable to both long-term workload shifts and short-term workload fluctuations.	

\bibliographystyle{abbrv}
\bibliography{sigproc}

\begin{thebibliography}{10}

\bibitem{url:Storm}
{Apache Storm}.
\newblock \url{http://storm.apache.org/}.

\bibitem{url:TPCH}
{The TPC-H Benchmark}.
\newblock \url{ http://www.tpc.org/tpch}.

\bibitem{2005designBorealis}
D.~Abadi, Y.~Ahmad, M.~Balazinska, and et~al.
\newblock The design of the borealis stream processing engine.
\newblock In {\em CIDR}, pages 277--289, 2005.

\bibitem{ahmad2004network}
Y.~Ahmad and U.~Cetintemel.
\newblock Network-aware query processing for stream-based applications.
\newblock In {\em VLDB}, pages 456--467, 2004.

\bibitem{aniello2013adaptive}
L.~Aniello, R.~Baldoni, and L.~Querzoni.
\newblock Adaptive online scheduling in storm.
\newblock In {\em DEBS}, pages 207--218, 2013.

\bibitem{chen2005computer}
T.~Chen, H.~Haussecker, A.~Bovyrin, and et~al.
\newblock Computer vision workload analysis: Case study of video surveillance
  systems.
\newblock {\em Intel Technology Journal}, 9(2), 2005.

\bibitem{dewitt1992parallel}
D.~Dewitt and J.~Gray.
\newblock Parallel database systems: The future of high performance database
  systems.
\newblock {\em Communications of the ACM}, 35(6):85--98, 1992.

\bibitem{DFM+15}
J.~Ding, T.~Fu, R.~Ma, M.~Winslett, Y.~Yang, Z.~Zhang, and H.~Chao.
\newblock Optimal operator state migration for elastic data stream processing.
\newblock {\em Mccarthy}, 2015.

\bibitem{elseidy2014scalable}
M.~Elseidy, A.~Elguindy, A.~Vitorovic, and C.~Koch.
\newblock Scalable and adaptive online joins.
\newblock {\em VLDB}, 7(6):441--452, 2014.

\bibitem{fu2015drs}
T.~Fu, J.~Ding, R.~Ma, M.~Winslett, Y.~Yang, and Z.~Zhang.
\newblock Drs: Dynamic resource scheduling for real-time analytics over fast
  streams.
\newblock In {\em ICDCS}, pages 411--420, 2015.

\bibitem{gedik2014partitioning}
B.~Gedik.
\newblock Partitioning functions for stateful data parallelism in stream
  processing.
\newblock {\em VLDBJ}, 23(4):517--539, 2014.

\bibitem{GSH+14}
B.~Gedik, S.~Schneider, M.~Hirzel, and K.~Wu.
\newblock Elastic scaling for data stream processing.
\newblock {\em {IEEE} Trans. Parallel Distrib. Syst.}, 25(6):1447--1463, 2014.

\bibitem{bruno2014advanced}
R.~Huebsch, M.~Garofalakis, J.~Hellerstein, and I.~Stoica.
\newblock Advanced join strategies for large-scale distributed computation.
\newblock {\em VLDB}, 7(13):1484--1495, 2014.

\bibitem{karger1997consistent}
D.~Karger, E.~Lehman, T.~Leighton, R.~Panigrahy, M.~Levine, and D.~Lewin.
\newblock Consistent hashing and random trees: Distributed caching protocols
  for relieving hot spots on the world wide web.
\newblock In {\em STOC}, pages 654--663, 1997.

\bibitem{karmarkar1982efficient}
N.~Karmarkar and R.~M. Karp.
\newblock An efficient approximation scheme for the one-dimensional bin-packing
  problem.
\newblock In {\em Foundations of Computer Science}, pages 312--320, 1982.

\bibitem{khandekar2009cola}
R.~Khandekar, K.~Hildrum, S.~Parekh, D.~Rajan, J.~Wolf, K.~Wu, H.~Andrade, and
  B.~Gedik.
\newblock Cola: Optimizing stream processing applications via graph
  partitioning.
\newblock In {\em Middleware}, pages 308--327. 2009.

\bibitem{kulkarni2015twitter}
S.~Kulkarni, N.~Bhagat, M.~Fu, and et~al.
\newblock Twitter heron: Stream processing at scale.
\newblock In {\em SIGMOD}, pages 239--250, 2015.

\bibitem{kutare2010monalytics}
M.~Kutare, G.~Eisenhauer, C.~Wang, K.~Schwan, V.~Talwar, and M.~Wolf.
\newblock Monalytics: Online monitoring and analytics for managing large scale
  data centers.
\newblock In {\em ICAC}, pages 141--150, 2010.

\bibitem{kwon2012skewtune}
Y.~Kwon, M.~Balazinska, B.~Howe, and J.~Rolia.
\newblock Skewtune: Mitigating skew in mapreduce applications.
\newblock In {\em SIGMOD}, pages 25--36, 2012.

\bibitem{lin2015scalable}
Q.~Lin, B.~C. Ooi, Z.~Wang, and C.~Yu.
\newblock Scalable distributed stream join processing.
\newblock In {\em SIGMOD}, pages 811--825, 2015.

\bibitem{nasir2015power}
M.~Nasir, G.~Morales, D.~Garciasoriano, N.~Kourtellis, and M.~Serafini.
\newblock The power of both choices: Practical load balancing for distributed
  stream processing engines.
\newblock In {\em ICDE}, pages 137--148, 2015.

\bibitem{nasir2015two}
M.~Nasir, A.~U., G.~Morales, N.~Kourtellis, and M.~Serafini.
\newblock When two choices are not enough: Balancing at scale in distributed
  stream processing.
\newblock {\em ICDE}, 2016.

\bibitem{rivetti2015efficient}
N.~Rivetti, L.~Querzoni, E.~Anceaume, Y.~Busnel, and B.~Sericola.
\newblock Efficient key grouping for near-optimal load balancing in stream
  processing systems.
\newblock In {\em DEBS}, pages 80--91, 2015.

\bibitem{shah2003flux}
M.~Shah, J.~Hellerstein, S.~Chandrasekaran, and M.~Franklin.
\newblock Flux: An adaptive partitioning operator for continuous query systems.
\newblock In {\em ICDE}, pages 25--36, 2003.

\bibitem{taft2014store}
R.~Taft, E.~Mansour, M.~Serafini, J.~Duggan, A.~Elmore, A.~Aboulnaga, A.~Pavlo,
  and M.~Stonebraker.
\newblock E-store: Fine-grained elastic partitioning for distributed
  transaction processing systems.
\newblock {\em VLDB}, 8(3):245--256, 2014.

\bibitem{toshniwal2014storm}
A.~Toshniwal, S.~Taneja, A.~Shukla, K.~Ramasamy, J.~Patel, S.~Kulkarni,
  J.~Jackson, K.~Gade, M.~Fu, J.~Donham, and et~al.
\newblock Storm@ twitter.
\newblock In {\em SIGMOD}, pages 147--156, 2014.

\bibitem{gufler2012load}
N.~ufler, B.and~Augsten, A.~Reiser, and A.~Kemper.
\newblock Load balancing in mapreduce based on scalable cardinality estimates.
\newblock In {\em ICDE}, pages 522--533, 2012.

\bibitem{walton1991taxonomy}
C.~Walton, A.~Dale, and R.~Jenevein.
\newblock A taxonomy and performance model of data skew effects in parallel
  joins.
\newblock In {\em VLDB}, pages 537--548, 1991.

\bibitem{wolf2008soda}
J.~Wolf, N.~Bansal, K.~Hildrum, S.~Parekh, D.~Rajan, R.~Wagle, K.~Wu, and
  L.~Fleischer.
\newblock Soda: An optimizing scheduler for large-scale stream-based
  distributed computer systems.
\newblock In {\em Middleware}, pages 306--325. 2008.

\bibitem{WT15}
Y.~Wu and K.~Tan.
\newblock Chronostream: Elastic stateful stream computation in the cloud.
\newblock In {\em ICDE}, pages 723--734, 2015.

\bibitem{xing2006providing}
Y.~Xing, J.~Hwang, U.~Cetintemel, and S.~Zdonik.
\newblock Providing resiliency to load variations in distributed stream
  processing.
\newblock In {\em VLDB}, pages 775--786, 2006.

\bibitem{xing2005dynamic}
Y.~Xing, S.~Zdonik, and J.~Hwang.
\newblock Dynamic load distribution in the borealis stream processor.
\newblock In {\em ICDE}, pages 791--802, 2005.

\bibitem{xu2008handling}
Y.~Xu, P.~Kostamaa, X.~Zhou, and L.~Chen.
\newblock Handling data skew in parallel joins in shared-nothing systems.
\newblock In {\em SIGMOD}, pages 1043--1052, 2008.

\bibitem{zaharia2013discretized}
M.~Zaharia, T.~Das, H.~Li, T.~Hunter, S.~Shenker, and I.~Stoica.
\newblock Discretized streams: Fault-tolerant streaming computation at scale.
\newblock In {\em SOSP}, pages 423--438, 2013.

\bibitem{ZOT05}
Y.~Zhou, B.~Ooi, and K.~Tan.
\newblock Dynamic load management for distributed continuous query systems.
\newblock In {\em ICDE}, pages 322--323, 2005.

\end{thebibliography}

\appendix
	
\section{Proofs of Theorem 1}\label{appsec:proof}
In order to derive theoretic results about the LLFD algorithm, we first look at a more simplified key assignment algorithm, namely the Simple algorithm. We next derive a serious of theoretic results based on the Simple algorithm. Lastly, we show how these results are applicable to the LLFD algorithm.
\begin{algorithm}[t]
	\caption{Simple Algorithm}
	\label{alg:simple}
	\begin{algorithmic}[1]
		\Require task instances in $\mathcal{D}$ 
		\Ensure $A'$
		\State Disassociate keys from all the instances		
		\ForAll{$d$ in $\mathcal{D}$}
		\State $L(d)\leftarrow 0$
		\EndFor
		\State Add all keys to $\mathcal{C}$, i.e., $\mathcal{C}\leftarrow\mathcal{K}$
		\ForAll{$k$ in $\mathcal{C}$ in descending order of $c(k)$}
		\ForAll{$d$ in $\mathcal{D}$ in ascending order of $L(d)$}
		\State Associate key $k$ with instance $d$, i.e.:
		\State L(d) = L(d) + c(k)
		\If {$h(k) \neq d$}
		\State Add entry $(k, d)$ to $A'$
		\EndIf
		\State Remove $k$ from $\mathcal{C}$; 
		\EndFor
		\EndFor
		\State \Return $A'$
	\end{algorithmic}
\end{algorithm}
As described in Algorithm~\ref{alg:simple}, the \emph{Simple} algorithm works in the following way, at first, it disassociates and puts all the keys into the candidate set $\mathcal{C}$ (Lines 1--4). Secondly it sorts these keys in a descending order of the computation cost $c(k)$. Finally it sequentially assigns each key to the instance with the least total workloads so far(Line 5--11). 

\begin{definition}\label{def:perfect_assignment}
	 A perfect assignment is defined as the approach that can assign keys to instances resulting in  $\forall d_i, d_j\in\mathcal{D}, L(d_i) = L(d_j) = \bar{L} = \frac{1}{N_D}\sum_{k\in\mathcal{K}} c(k)$.
\end{definition}

\begin{lemma}\label{lemma1}
	Given the instance set $\mathcal{D}$ of size $N_D$, key set $\mathcal{K}$ of size $K$ and computation cost of each key $c(k)$, where keys are in a non-increasing order of their computation costs, i.e., $c(k_1) \geq c(k_2) \geq \dots \geq c(K)$, if the perfect assignment exists, we have: 
	\begin{equation}\label{equ:lemma}
	c(k_{qN_D + 1}) \leq \frac{1}{q+1}\bar{L},\;\;q = 1, 2, \dots, \lfloor \frac{K-1}{N_D}\rfloor.
	\end{equation}
\end {lemma}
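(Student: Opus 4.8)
The plan is to argue by contradiction, exploiting the fact that in a perfect assignment every instance carries exactly load $\bar L$, so the keys placed on any single instance cannot exceed that budget. Suppose, for some $q \in \{1, 2, \dots, \lfloor (K-1)/N_D \rfloor\}$, we had $c(k_{qN_D+1}) > \frac{1}{q+1}\bar L$. Since the keys are sorted in non-increasing order of cost, this means the first $qN_D + 1$ keys $k_1, \dots, k_{qN_D+1}$ all have computation cost strictly greater than $\frac{1}{q+1}\bar L$. I would then distribute these $qN_D+1$ keys among the $N_D$ instances according to the perfect assignment and invoke the pigeonhole principle: some instance must receive at least $\lceil (qN_D+1)/N_D \rceil = q+1$ of them.

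First I would make the pigeonhole step precise: with $qN_D+1$ items in $N_D$ bins, at least one bin holds $\ge \lceil (qN_D+1)/N_D\rceil$ items, and since $qN_D+1 = qN_D + 1$, this ceiling equals $q+1$. Call that instance $d^{*}$. Then the load contributed to $d^{*}$ by just these $q+1$ heavy keys is strictly greater than $(q+1)\cdot\frac{1}{q+1}\bar L = \bar L$. But in a perfect assignment $L(d^{*}) = \bar L$ exactly, and all computation costs are nonnegative, so $L(d^{*}) \ge (\text{sum of the } q+1 \text{ heavy keys on it}) > \bar L$, a contradiction. Hence the assumed inequality cannot hold, which gives the claim $c(k_{qN_D+1}) \le \frac{1}{q+1}\bar L$ for every admissible $q$.

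I would also double-check the range of $q$: we need $k_{qN_D+1}$ to be an actual key, i.e.\ $qN_D + 1 \le K$, equivalently $q \le (K-1)/N_D$, which is exactly the stated range $q = 1, \dots, \lfloor (K-1)/N_D\rfloor$; this is why the bound on $q$ appears. The only mild subtlety — and the part I would be most careful about — is ensuring the strictness is handled correctly: the contradiction needs a \emph{strict} excess over $\bar L$, which is why I assume $c(k_{qN_D+1}) > \frac{1}{q+1}\bar L$ (strict) as the negation of the $\le$ conclusion, and why the monotonicity of the sort gives all $qN_D+1$ leading keys strictly above the threshold. No step requires any property of the Simple algorithm itself; the lemma is purely a combinatorial necessary condition for the \emph{existence} of a perfect assignment, and it will later be combined with the behavior of Simple (and then LLFD) to bound the imbalance.
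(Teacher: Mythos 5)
Your proposal is correct and follows essentially the same contradiction argument as the paper: assume $c(k_{qN_D+1}) > \frac{1}{q+1}\bar{L}$, note all $qN_D+1$ leading keys then exceed this threshold, and conclude some instance must carry $q+1$ of them and hence load strictly above $\bar{L}$, contradicting perfection. Your pigeonhole phrasing is a slightly more direct packaging of the paper's counting step (the paper first argues each instance holds at most $q$ of the first $qN_D$ keys and then overloads whichever instance receives the $(qN_D+1)$-th), but the underlying argument is identical.
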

\begin{proof}
Assuming $c(k_{qN_D + 1}) > \frac{1}{q+1}\bar{L}$, then we have $c(k_1) \geq c(k_2) \geq \dots \geq c(k_{qN_D + 1}) > \frac{1}{q+1}\bar{L}$. This means that for keys from $k_1$ to $k_{qN_D}$, each instance can at most be associate with $q$ of them. In result, any instance that is associated with the $(qN_D + 1)$-th key will generate workloads larger than $\bar{L}$, which contradicts the assumption of the existence of the perfect assignment. 
\end{proof}

\begin{lemma}\label{lemma2}
	Given the instance set $\mathcal{D}$ of size $N_D$, key set $\mathcal{K}$ of size $K$ and computation cost of each key $c(k)$, where keys are in a non-increasing order of their computation costs, i.e., $c(k_1) \geq c(k_2) \geq \dots \geq c(K)$, if the perfect assignment exists and $c(k_1) < \bar{L}$ (the computation cost of any individual key is smaller than the average workload of task instances), we have $K \geq 2N_D$.
\end {lemma}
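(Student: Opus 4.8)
The plan is to argue by contradiction, using Lemma 1 (the necessary condition for a perfect assignment) together with the hypothesis $c(k_1) < \bar L$. Suppose, for the sake of contradiction, that a perfect assignment exists, that $c(k_1) < \bar L$, and yet $K < 2N_D$, i.e., $K \leq 2N_D - 1$. Under a perfect assignment every instance carries exactly workload $\bar L$, so the total workload is $\sum_{k\in\mathcal K} c(k) = N_D \bar L$. I would first dispose of the trivial case $K \le N_D$: then at least one instance receives at most one key, and since all instances carry $\bar L > 0$, that instance must carry exactly one key of cost $\bar L$, contradicting $c(k_1) < \bar L$ (as $c(k_1)$ is the largest cost, no key can have cost $\ge \bar L$). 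So we may assume $N_D + 1 \le K \le 2N_D - 1$.

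In this remaining range, write $K = N_D + t$ with $1 \le t \le N_D - 1$. The idea is a counting/pigeonhole argument on how the $K$ keys can be packed into $N_D$ bins each of exact capacity $\bar L$. Since $K = N_D + t$, at least $N_D - t$ instances receive exactly one key (pigeonhole: if fewer than $N_D - t$ instances were singletons, the remaining $> t$ instances would absorb $K$ minus at most $N_D - t - 1$ keys, i.e. more than $N_D + t - (N_D - t - 1) = 2t+1$ keys among at most $t+1$ instances — I would tune this counting so that some instance is forced to hold at least two keys while a singleton instance is forced to exist simultaneously). A singleton instance must hold a key of cost exactly $\bar L$, which again contradicts $c(k_1) < \bar L$. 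The cleanest version: in any exact packing with $K < 2N_D$ keys into $N_D$ equal bins, at least one bin contains exactly one key (since $N_D$ bins holding $\ge 2$ keys each would need $\ge 2N_D > K$ keys), and that lone key must equal the bin's full capacity $\bar L$, contradicting $\max_k c(k) = c(k_1) < \bar L$. Hence $K \ge 2N_D$.

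Alternatively, and perhaps more in the spirit of Lemma 1, I would instantiate Lemma 1 with $q = 1$: if $K \ge N_D + 1$ the lemma gives $c(k_{N_D+1}) \le \tfrac12 \bar L$, which by itself is consistent, so I would instead combine the averaging identity $\sum_{k} c(k) = N_D \bar L$ with the bound $c(k) \le c(k_1) < \bar L$ on each of the $K$ terms. This gives $N_D \bar L = \sum_{k\in\mathcal K} c(k) < K \bar L$, hence $K > N_D$ — only half of what we want. To get the factor $2$ I genuinely need the combinatorial packing observation above, not just the crude bound; the averaging argument only rules out $K \le N_D$.

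The main obstacle I anticipate is making the pigeonhole step fully rigorous without hidden assumptions: I must be careful that "perfect assignment" really does force every bin to exact capacity $\bar L$ (true by Definition~\ref{def:perfect_assignment}), and that a bin containing a single key $k$ therefore forces $c(k) = \bar L$ exactly — then the contradiction with $c(k_1) < \bar L$ is immediate because $c(k_1) = \max_k c(k) \ge c(k)$. I would also need to confirm there is no degenerate case where some bin is empty; but an empty bin would carry workload $0 \ne \bar L$ (and $\bar L > 0$ since $c(k_1) > 0$ is implicit), so every bin is nonempty, and the counting "$N_D$ bins each with $\ge 2$ keys need $\ge 2N_D$ keys" is exactly what yields $K \ge 2N_D$.
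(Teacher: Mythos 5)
Your ``cleanest version'' is exactly the paper's argument: since every key costs strictly less than $\bar{L}$ and a perfect assignment forces each instance to carry exactly $\bar{L}$, every instance must hold at least two keys, giving $K \geq 2N_D$. The proposal is correct and essentially identical to the paper's proof (just stated in contrapositive form, with the empty-bin and averaging detours being unnecessary but harmless).
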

\begin{proof}
	This is straight forward given (a) the perfect assignment exists and (b) the computation cost of any individual key is smaller than $\bar{L}$, because for each instance, there must be at least two keys assigned to it. 
\end{proof}

\begin{lemma}\label{lemma3}
	Given the instance set $\mathcal{D}$ of size $N_D$, key set $\mathcal{K}$ of size $K$ and computation cost of each key $c(k)$, where keys are in a non-increasing order of their computation costs, i.e., $c(k_1) \geq c(k_2) \geq \dots \geq c(K)$, if the perfect assignment exists and $c(k_1) < \bar{L}$, we have: 
	\begin{equation}
	\theta_{max} \leq \dfrac{1}{3} \cdot (1-\dfrac{1}{N_D}),
	\end{equation}
	where $\theta_{max} = \max_{d\in\mathcal{D}}(\frac{L(d) - \bar{L}}{\bar{L}})$.
\end {lemma}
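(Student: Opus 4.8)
The plan is to read the \emph{Simple} algorithm (Algorithm~\ref{alg:simple}) as exactly the \emph{longest-processing-time-first} (LPT) list-scheduling rule for minimizing makespan on $N_D$ identical machines: each key $k$ is a job of size $c(k)$, each instance is a machine, and $L(d)$ is a machine load. Because a perfect assignment exists (Definition~\ref{def:perfect_assignment}), the optimal makespan equals $\bar L$ exactly, so the claim $\theta_{\max}\le\frac13(1-\frac1{N_D})$ is precisely Graham's classical $\bigl(\tfrac43-\tfrac1{3N_D}\bigr)$-bound for LPT, specialized to the case where the optimum coincides with the average load. I would reprove that bound in the paper's notation, letting Lemmas~\ref{lemma1} and~\ref{lemma2} carry the combinatorial weight.

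Concretely, I would argue by a minimal counterexample. Among all inputs admitting a perfect assignment with $c(k_1)<\bar L$ for which the Simple algorithm outputs loads violating $\theta_{\max}\le\frac13(1-\frac1{N_D})$, pick one with the fewest keys $K$ (it is cleanest to run this against the true optimum $C^\ast$ and only substitute $C^\ast=\bar L$ at the end, since deletion of keys never increases $C^\ast$). Let $d^\star$ attain $L(d^\star)=L_{\max}$, and let $k_j$ be the last key the algorithm placed on $d^\star$. A deletion step shows $k_j$ is the globally smallest key $k_K$: erasing every key lighter than $k_j$ leaves the algorithm's run on the surviving prefix unchanged, keeps $L_{\max}$ on $d^\star$, and does not raise the optimum, contradicting minimality. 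Since $d^\star$ had minimum load at the moment $k_K$ was placed and by then every other key was already placed, $L_{\max}-c(k_K)\le\frac1{N_D}\bigl(N_D\bar L-c(k_K)\bigr)$, hence $L_{\max}\le\bar L+c(k_K)\bigl(1-\tfrac1{N_D}\bigr)$, i.e. $\theta_{\max}\le\frac{c(k_K)}{\bar L}\bigl(1-\tfrac1{N_D}\bigr)$.

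Now split on $c(k_K)$. If $c(k_K)\le\frac13\bar L$, the displayed inequality already gives $\theta_{\max}\le\frac13(1-\frac1{N_D})$, a contradiction. If $c(k_K)>\frac13\bar L$ then, $k_K$ being smallest, \emph{every} key satisfies $\frac13\bar L<c(k)<\bar L$; together with the perfect assignment this forces each instance of that assignment to hold exactly two keys summing to $\bar L$, so $K=2N_D$ (Lemma~\ref{lemma2} supplies the matching $K\ge 2N_D$) and in fact $c(k)\in(\tfrac13\bar L,\tfrac23\bar L)$ for all $k$. In this regime LPT also finishes with exactly two keys per instance: every two-key machine already carries load exceeding $\tfrac23\bar L$ while every one-key machine carries load below $\tfrac23\bar L$, so each round-two key necessarily lands on a still-one-key machine, in decreasing order of the round-one loads; hence LPT pairs $k_t$ with $k_{2N_D+1-t}$. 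Finally, this end-pairing must be the perfect one, because in any perfect pairing the largest key is forced to accompany the smallest, after which one inducts. Thus $L_{\max}=\bar L$ and $\theta_{\max}=0$, again a contradiction, so no counterexample exists and the lemma follows.

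The step I expect to be the real obstacle is the $c(k_K)>\frac13\bar L$ branch: turning ``every key lies in $(\tfrac13\bar L,\tfrac23\bar L)$ and a perfect pairing exists'' into a fully rigorous statement that LPT's round-by-round choices reproduce exactly the sorted end-pairing, and that this pairing attains $\bar L$, requires the careful interlocking of Lemmas~\ref{lemma1}--\ref{lemma2} with that interval estimate. A secondary subtlety is the bookkeeping in the deletion reduction: one must ensure the perfect-assignment hypothesis is invoked only where it legitimately survives deletion, which is why it is cleaner to prove the bound against $C^\ast$ first and substitute $C^\ast=\bar L$ afterward.
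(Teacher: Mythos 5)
Your proposal is correct in outline but takes a genuinely different route from the paper. You recognize the bound as Graham's $\bigl(\tfrac{4}{3}-\tfrac{1}{3N_D}\bigr)$ LPT guarantee specialized to the case where the optimum makespan equals $\bar L$, and you reprove it by the classical minimal-counterexample/deletion argument: the last key landing on the most loaded instance is the globally smallest, which yields $L_{\max}\le\bar L+c(k_K)(1-\tfrac{1}{N_D})$, and the case split on whether $c(k_K)$ exceeds $\bar L/3$ finishes the job. The paper instead argues by \emph{constructing} a claimed extremal instance directly from Lemmas~\ref{lemma1} and~\ref{lemma2}: it sets $c(k_{2N_D+1})=\bar L/3$ (the cap from Lemma~\ref{lemma1} with $q=2$), makes all later keys negligible, spreads the remaining mass evenly over the first $2N_D$ keys, and evaluates $L_{\max}\le\tfrac{4}{3}\bar L-\tfrac{\bar L}{3N_D}$ on that configuration. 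What your approach buys is rigor and generality: it proves the bound for \emph{every} input rather than asserting (without justification, and with an arithmetic slip in the displayed $c(k_i)$) that one particular configuration is the worst case. What the paper's approach buys is brevity and an explicit picture of the tight instance. The one substantive caveat in your plan is the one you flag yourself: after the deletion step the perfect-assignment hypothesis no longer holds for the surviving prefix, so the $c(k_K)>\bar L/3$ branch cannot lean on ``every optimal machine holds exactly two keys summing to $\bar L$''; you must run that branch against the sub-instance optimum $C^\ast_j\le\bar L$ (each optimal machine holds \emph{at most} two jobs) and supply the standard but fiddly argument that LPT is optimal when all jobs exceed $C^\ast/3$, before substituting $C^\ast=\bar L$ at the end. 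Until that branch is written out in full, your proof has an acknowledged gap precisely where Graham's proof is hardest; the paper's proof, for its part, never closes the extremality gap at all.
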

\begin{proof}
We prove by considering the worst case (in terms of load balance) where (a) the $(2N_D+1)$-th key has the largest possible computation cost $c(k_{2N_D+1}) = \bar{L}/3$, according to Lemma~\ref{lemma1} and Lemma~\ref{lemma2}; (b) Keys after the $(2N_D+1)$-th have equal amount of computation costs, denoted by $\varepsilon$, which are very close to zero; (c) The remaining workloads, i.e., $\sum_{k\in\mathcal{K}} c(k) - \frac{1}{3}\bar{L} - \varepsilon(K-2N_D-1)$, all concentrate on the first $2N_D$ keys and are evenly distributed, summarized as follows:
\begin{eqnarray}
c(k_i) =
	\left\{
	\begin{array}{ccl}
		\frac{N_D\bar{L} - \frac{1}{3}\bar{L} - \varepsilon(K-2N_D-1)}{N_D}
		& \mbox{for} & i = 1, 2, \dots, 2N_D; \nonumber\\
		\frac{1}{3}\bar{L} & \mbox{for} & i = 2N_D + 1;\nonumber\\
		\varepsilon & \mbox{for} & i > 2N_D + 1.
		\end{array}
		\right.
\end{eqnarray}
When $\varepsilon\rightarrow 0$, we have:
\begin{equation}
L_{\max} = \max_{d\in\mathcal{D}} L(d) = c(k_i) + c(k_{2N_D}) \leq \frac{4}{3}\bar{L} - \frac{\bar{L}}{3N_D},\nonumber
\end{equation}
where $i = 1, 2, \dots, 2N_D$. Note $L_{\max} = c(k_i) + c(k_{2N_D})$ is because according to the Simple algorithm, keys $k_i, i > 2N_D + 1$ will never be assigned to the instance with $L_{\max}$. This completes the proof according to our definition of $\theta_{\max}$.
\end{proof}
\begin{restate}\label{theo:theorem3}
Given the instance set $\mathcal{D}$ of size $N_D$, key set $\mathcal{K}$ of size $K$ and computation cost of each key $c(k)$, where keys are in a non-increasing order of their computation costs, i.e., $c(k_1) \geq c(k_2) \geq \dots \geq c(K)$, if the perfect assignment exists and $c(k_1) < \bar{L}$, LLFD always finds a solution resulting with balancing indicator $\theta(d,F)$ no worse than $\frac{1}{3}(1-\frac{1}{N_D})$ for any task instance $d$.
\end{restate}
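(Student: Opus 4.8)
The plan is to prove the bound first for the auxiliary \emph{Simple} algorithm (Algorithm~\ref{alg:simple}) and then transfer it to LLFD (Algorithm~\ref{alg:LLDF}); the first half is exactly what Lemmas~\ref{lemma1}--\ref{lemma3} deliver, so the real work is the transfer. To recap the first half: Lemma~\ref{lemma1} with $q=2$ gives $c(k_{2N_D+1})\le\frac13\bar L$, and together with the hypothesis $c(k_1)<\bar L$ (Lemma~\ref{lemma2} then forces $K\ge 2N_D$, the case $K=2N_D$ being already perfectly balanced) Lemma~\ref{lemma3} pins down the most imbalanced configuration the Simple algorithm can ever produce: the surplus mass concentrated on the $2N_D$ heaviest keys, the $(2N_D+1)$-th key at the extremal value $\frac13\bar L$, all further keys infinitesimal. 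Letting $\varepsilon\to 0$ this bounds the maximum instance load by $\frac43\bar L-\frac{\bar L}{3N_D}$, i.e.\ $\theta_{\max}\le\frac13\bigl(1-\frac1{N_D}\bigr)$ for the Simple algorithm.

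For the transfer I would argue that LLFD, run with target tolerance $\theta_{\max}=\frac13\bigl(1-\frac1{N_D}\bigr)$ so that $L_{\max}=\bigl(\frac43-\frac{1}{3N_D}\bigr)\bar L$, never strands a key and never overshoots $L_{\max}$, so that its output automatically satisfies $\theta(d,F)\le\frac13\bigl(1-\frac1{N_D}\bigr)$ for every $d$. The observation driving this is that LLFD and Simple share the same skeleton — keys in non-increasing cost order, each placed on a currently least-loaded instance — and differ only through the \emph{Adjust} subroutine, which is invoked precisely when the naive greedy placement would exceed $L_{\max}$. In that event Adjust either accepts a placement that still stays below $L_{\max}$, or evicts an exchangeable set $\mathcal E$ of keys strictly lighter than the current key $k$, strictly lowering the load of the target instance while re-queuing only keys lighter than $k$. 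I would phrase this as an inductive invariant stating that the load profile after each placement is dominated by the corresponding profile of a Simple-style run, so that the Lemma~\ref{lemma3} extremal configuration remains the binding one: the instance attaining $L_{\max}$ carries only the cost-$\frac13\bar L$ critical key on top of one further key of cost $\le c(k_{2N_D})$, and no infinitesimal key is ever stacked on top of it. The existence of a valid $\mathcal E$ whenever one is needed is itself a consequence of the Lemma~\ref{lemma1}/Lemma~\ref{lemma3} cost bounds and the slack implied by the perfect-assignment hypothesis.

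The main obstacle is controlling the eviction cascade inside \emph{Adjust}: every eviction pushes lighter keys back into $\mathcal C$, and one must show this process (i) terminates and (ii) cannot, after several rounds — and in particular after a FALSE return re-routes a key away from the absolute least-loaded instance — rebuild a load above $\frac43\bar L-\frac{\bar L}{3N_D}$. For (i) I would use a lexicographic potential on the sorted vector of pending key costs, which strictly decreases at each eviction because the displaced keys are strictly lighter than the key that triggered it. For (ii) I would argue that evictions only move load ``downhill'' — never onto an instance already more heavily loaded relative to $\bar L$ than in the Simple run — so the dominance invariant of the previous paragraph is preserved, and separately that no key is ever stranded (the inner loop never returns FALSE for every instance), again reducing to the perfect-assignment hypothesis guaranteeing enough room for either a direct placement or a valid exchangeable set. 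Once these two points are settled, the Simple bound from Lemma~\ref{lemma3} carries over verbatim to LLFD.
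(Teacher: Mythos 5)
Your decomposition is exactly the paper's: establish the $\frac{1}{3}\bigl(1-\frac{1}{N_D}\bigr)$ bound for the auxiliary \emph{Simple} algorithm via Lemmas~\ref{lemma1}--\ref{lemma3}, then transfer it to LLFD, and your recap of the first half matches the paper's argument step for step. The difference is entirely in the transfer, where the paper offers only a one-sentence justification --- LLFD ``has a larger search space'' than Simple and ``is devoted to finding the assignment with more balanced workloads,'' hence the Lemma~\ref{lemma3} bound applies --- whereas you build an explicit dominance invariant, a lexicographic potential for termination of the eviction cascade, and a no-stranding argument. Those are precisely the points the paper leaves unargued, so relative to its own proof your proposal is more, not less, complete. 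Two cautions: first, your dominance claim is a genuine assertion requiring proof rather than an observation, because LLFD does not start from an empty assignment the way Simple does --- its candidate set $\mathcal{C}$ holds only keys stripped from overloaded instances, the underloaded instances retain arbitrary mixtures of light keys, and so the intermediate load profiles of the two algorithms are not obviously comparable; second, the existence of a valid exchangeable set $\mathcal{E}$ in every \emph{Adjust} call is likewise asserted from ``the slack implied by the perfect-assignment hypothesis'' without a construction. Both gaps are ones the paper's proof also has (and papers over entirely), so as a reconstruction of the paper's argument your proposal is faithful and somewhat stronger; as a self-contained proof it still owes those two inductive steps.
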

\begin{proof}
According to Algorithm~\ref{alg:LLDF}, it has a larger search space than that of the Simple Algorithm, and is devoted to finding the assignment with more balanced workloads among instances, i.e., $\theta(d,F) \leq \theta_{max} \leq \frac{1}{3} \cdot (1-\dfrac{1}{N_D})$, which is proved in Lemma~\ref{lemma3}.
%
%
\end{proof}

\begin{theorem}\label{theo:theorem3}
	Balance status generated by the Mixed represented by $\theta_{Mix}$ is not worse than the
	balance status $\theta_{Sim}$ produced by the Simple algorithm algorithm.
	\end {theorem}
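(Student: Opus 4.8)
The plan is to prove $\theta_{Mix} \le \theta_{Sim}$ by wedging the LLFD assignment in between, i.e.\ establishing $\theta_{Mix} \le \theta_{LLFD} \le \theta_{Sim}$ and concluding by transitivity, where $\theta_{LLFD}$ denotes the imbalance produced by the procedure that first empties the routing table and then applies LLFD (Algorithm~\ref{alg:LLDF}).

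First I would observe that the Simple algorithm of Algorithm~\ref{alg:simple} is precisely the restriction of LLFD in which every instance starts from zero load, the candidate set is the whole key set, and \emph{Adjust} is never triggered (no exchangeable set is ever formed). Hence every assignment Simple can output also lies in the search space explored by LLFD, which in addition may swap out cheaper keys via \emph{Adjust}; since LLFD keeps pushing the imbalance down within this strictly larger space, $\theta_{LLFD} \le \theta_{Sim}$. This is exactly the search-space domination already exploited in the proof of Theorem~\ref{theorem:1} via Lemma~\ref{lemma3}, whose two structural invariants --- ``always place the heaviest remaining key on the least-loaded instance'' and ``exchange only with strictly cheaper keys'' --- are preserved by LLFD even when it starts from a non-zero but feasible load profile, which is what lets the bound carry over.

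Next I would bound $\theta_{Mix}$ by $\theta_{LLFD}$ using the structure of Algorithm~\ref{alg:Mixed}. In the iteration with $n = N_A$, Phase~I of Mixed empties the entire routing table, so Mixed collapses into exactly the above LLFD procedure and cannot do worse; for any smaller $n$ the retained keys are those with the smallest window memory $S_{i-1}(k,w)$, and after Phase~II no instance exceeds $L_{\max} = (1+\theta_{\max})\bar{L}$, so the partial load profile fed to the Phase~III LLFD call is feasible. Applying the same monotonicity remark as above, the Phase~III call on this feasible partial profile lands at an imbalance no larger than a from-scratch LLFD run, giving $\theta_{Mix} \le \theta_{LLFD}$ in general. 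Chaining the two inequalities yields $\theta_{Mix} \le \theta_{Sim}$.

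The main obstacle is the monotonicity claim invoked twice above: that feeding LLFD a feasible, partially balanced load profile together with the residual candidate keys cannot produce an imbalance worse than feeding it the zero profile together with all keys. I would make it precise by checking that the two invariants of LLFD hold independently of the starting profile, so the worst-case bookkeeping of Lemma~\ref{lemma3} --- which caps the final load of any instance by the sum of the two heaviest keys that can land on it --- still goes through for the partial-profile run; together with the $n = N_A$ degeneration of Mixed this closes the argument.
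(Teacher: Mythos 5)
Your route differs from the paper's: the paper proves the claim by contradiction, comparing Mixed directly to Simple --- it assumes $\theta_{Mix} > \theta_{Sim}$, takes $\theta_{Sim}$ as the tolerance, and argues that some key would then be unplaceable on every instance even after exchanging out all strictly cheaper keys, contradicting the feasible placement witnessed by Simple's own output. You instead sandwich Mixed between Simple and a from-scratch LLFD run. The first half of your sandwich, $\theta_{LLFD}\leq\theta_{Sim}$, is the same search-space-domination argument the paper itself uses for Theorem~\ref{theorem:1}, so you are on equal footing with the paper there.

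The second half is where the genuine gap lies. Algorithm~\ref{alg:Mixed} terminates at the first $n$ with $N_{A'}\leq A_{max}$, so the $n=N_A$ degeneration you invoke is in general never reached; the bound must therefore come entirely from your monotonicity claim for intermediate $n$, and that claim is neither proved nor obviously true. A partial profile locks in the retained keys: Phase~II only disassociates keys from overloaded instances, and \emph{Adjust} can only exchange out strictly cheaper keys already sitting on the destination, so a skewed but individually feasible placement of the retained keys can force the Phase~III run into an outcome that a from-scratch LLFD would avoid --- greedy procedures do not inherit pointwise domination merely from satisfying ``the same invariants'' on a larger or smaller starting profile. Moreover, your proposed repair via the bookkeeping of Lemma~\ref{lemma3} can at best deliver the uniform worst-case cap $\frac{1}{3}(1-\frac{1}{N_D})$ for the partial-profile run, whereas the theorem requires the pointwise inequality $\theta_{Mix}\leq\theta_{Sim}$; on inputs where Simple happens to balance well (e.g., $\theta_{Sim}=0$) the worst-case cap says nothing. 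To close the argument you would need either an exchange argument of the paper's type (a single unplaceable key contradicts the feasibility witnessed by Simple) or an actual input-by-input proof that the constrained LLFD call dominates the unconstrained one.
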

	\begin{proof}
		Supposing $\theta_{Mix} > \theta_{Sim}$, now we take $\theta_{Sim}$ as $\theta_{max}$, then the Mixed algorithm overload instance and $\exists cop$ in Mixed's migration process.  Because the Mixed has tried all instances to put $c(k_{cop})$ for without overload as shown in Algorithm~\ref{alg:Mixed}, then $\forall d, d \in \mathcal{D}$, $c(k_{cop})+ L(d) - \sum_{k' \in \lbrace k'' | c(k'') < c(cop) \rbrace}{c(k')}$ is larger than upbound. This is contradicts to begin supposing: $\theta_{Mix} > \theta_{Sim}$. 
	\end{proof}
\section{Routing Table Changing}\label{sec:RoutingTable}

\begin{figure}
	\centering
		\includegraphics[height = 4.6cm, width = 7.9cm]{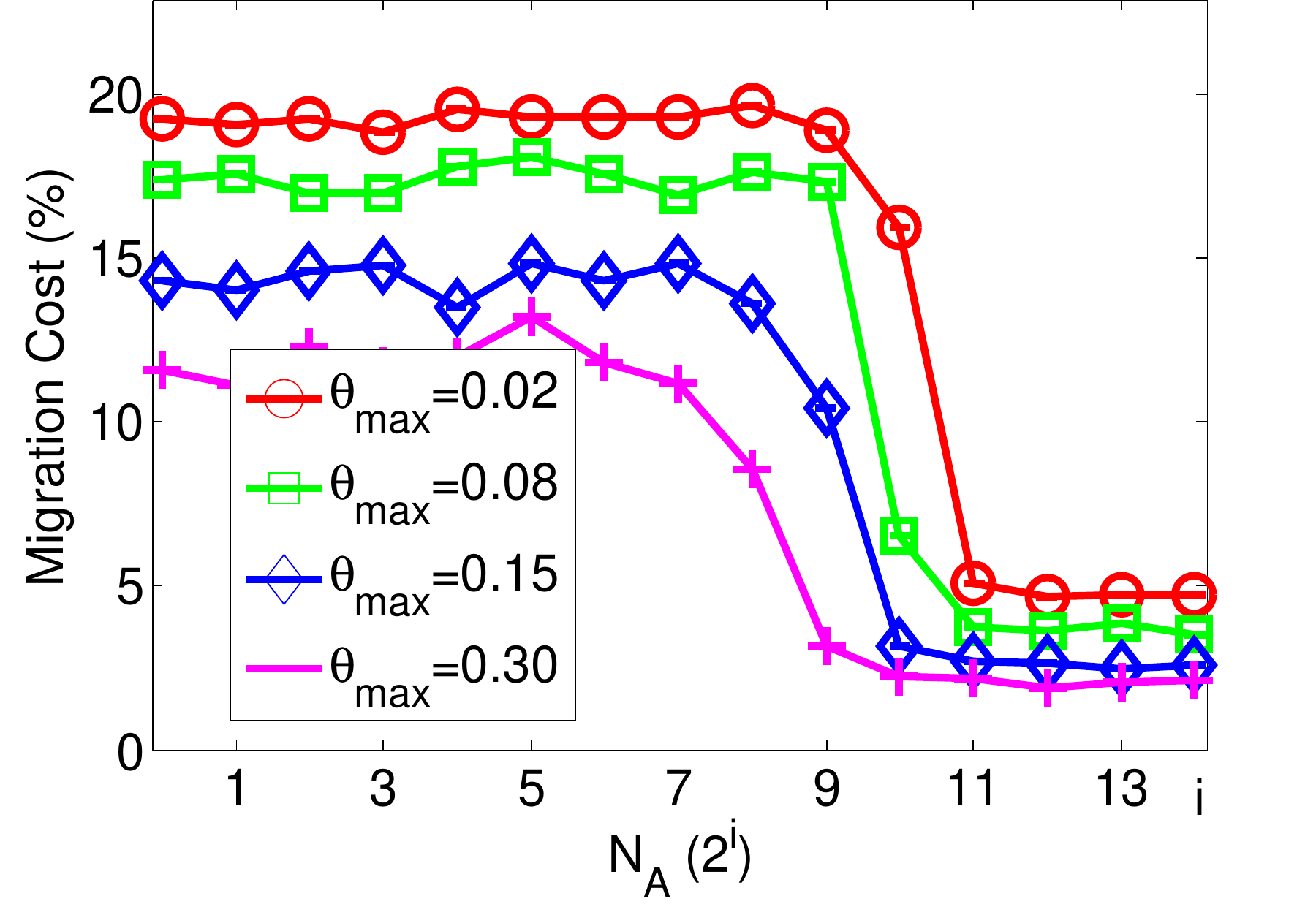}
		
	\centering \caption{Migration cost with different $N_{A}$ by \emph{Mixed}.}
	\label{fig:f2tmc_mutil_theta}
\end{figure}

With the defaulted parameter settings, our \emph{Mixed} algorithm with different numbers of $N_{A}$ has different migration cost  as shown in Fig.~\ref{fig:f2tmc_mutil_theta}. 
When we set $N_{A}$ to the values less than $1000$ (calculated by $2^i$ with $i\leq 10$) for $\theta_{max}=0.08$, our algorithm generates a large migration cost for it acts as algorithm (Alg.~\ref{alg:MinTable}) and leaves alone $N_{A}$ requirement. 
But when $N_{A}$ is relaxed to $2000$ ($i>11$), migration cost decreases greatly, since our algorithm acts according to \emph{Mixed} algorithm which can reduce the routing back actions greatly.
Furthermore, the different degrees of balance status ($\theta_{max}$) require different minimum $N_{A}$, and it will drastically reduce the migration cost shown in Fig.~\ref{fig:f2tmc_mutil_theta}.

Fig.~\ref{fig:T_change} shows the change of routing table size for different numbers of adjustment.
We use algorithm(Alg.~\ref{alg:MinMig}) to do balance and set $K=10^{4}$ to verify the results quickly.
Obviously, the smaller $\theta_{max}$ accelerates the growth of routing table. We also observe that routing table sizes with different $\theta_{max}$s  converge to the same size (around 9350 entries) in Fig.~\ref{fig:T_change}.
This is because \emph{Min Mig} does balancing without considering the constraint for routing table size.
In other words,  a key is paired to the task randomly when the basic assignment function has caused imbalance. Therefore, the probability of a key appears in routing table is $\frac{N_{D}-1}{N_{D}}$, and then, after a long period of load balancing, the routing table size should be $\frac{N_{D}-1}{N_{D}} \cdot K$ with K as the key size.

\begin{figure}
	\centering
	\includegraphics[height = 4.6cm, width = 8.2cm]{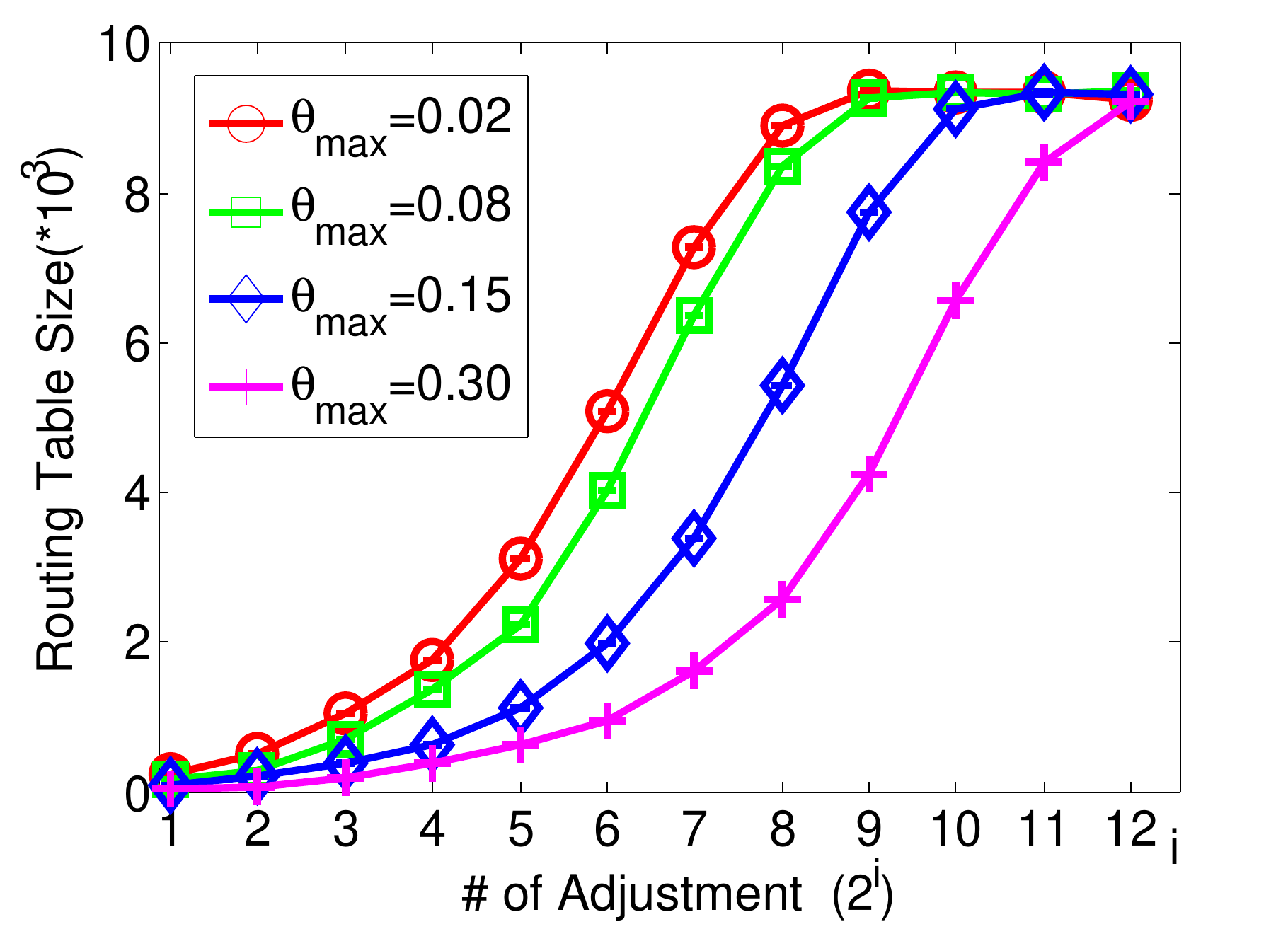}
	 \centering \caption{Routing table changing along with $\#$ of adjustments.}	\label{fig:T_change}
\end{figure}
\section{Additional Parameter Test}\label{appsec:additionalparameter}
\begin{figure}
	\centering
	\includegraphics[height = 4.6cm, width = 8.2cm]{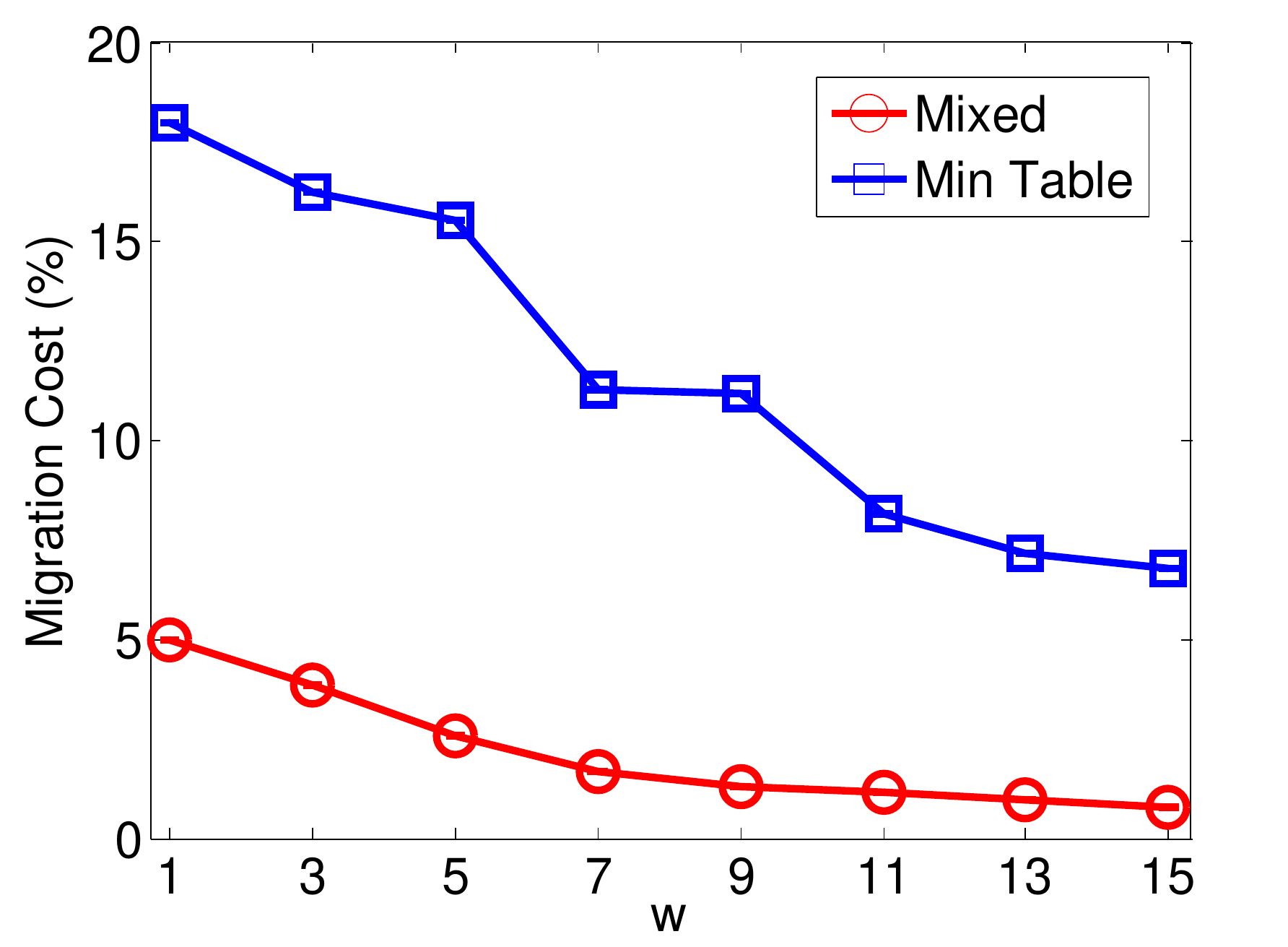}
	\centering \caption{Migration cost \emph{vs} window size.}	\label{fig:exp:WMC}
\end{figure}
We show the migration cost with different window size in Fig.~\ref{fig:exp:WMC}. Since the larger window size provides more chance for finding the appropriate migration keys ($\gamma_i(k, w)$), the migration cost of \emph{Mixed} is smaller than \emph{MinTable}.
\begin{figure}
	\centering
	\includegraphics[height = 4.6cm, width = 8.2cm]{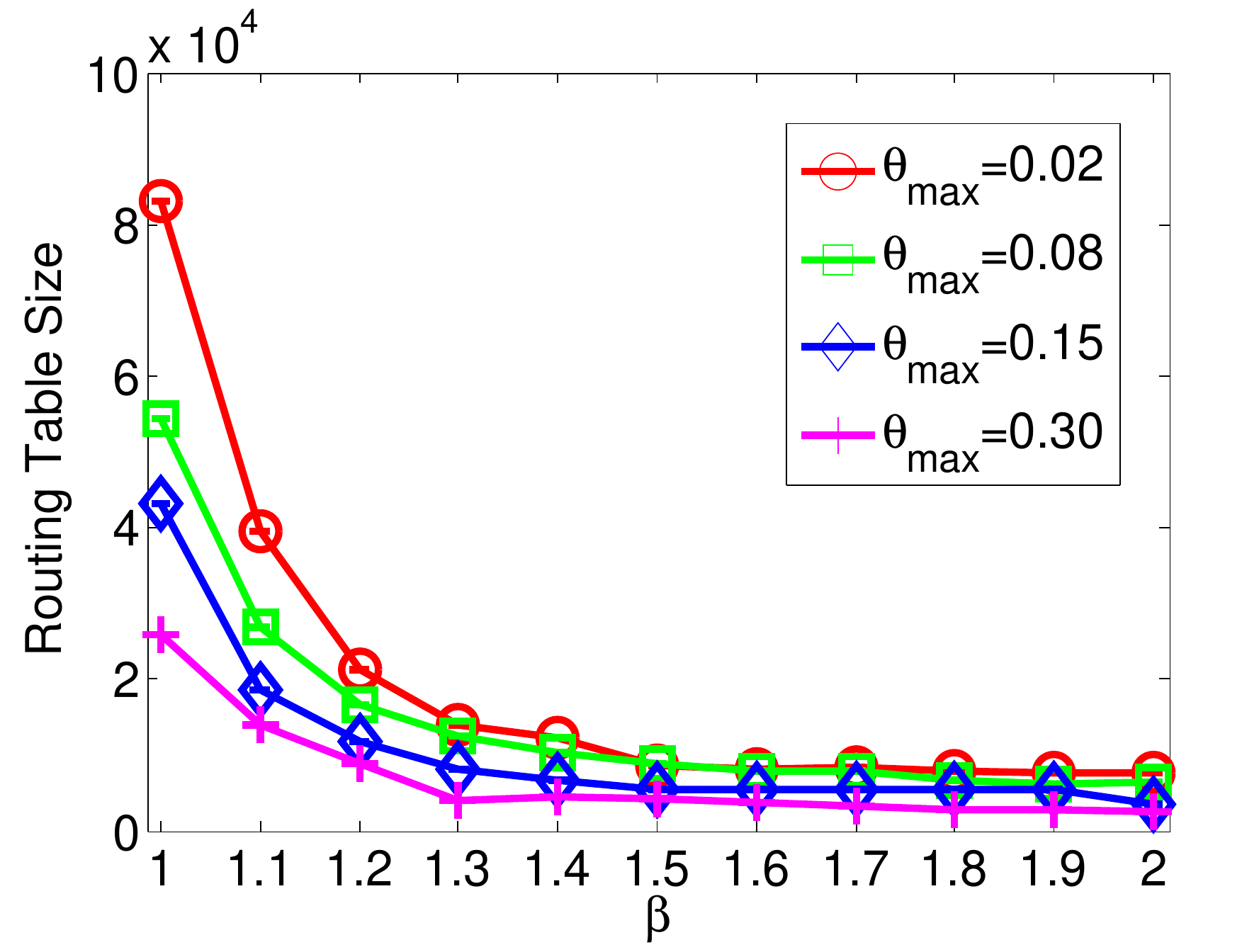}
	\centering \caption{Routing table size in different $\beta$.}	\label{fig:beta_routingtable}
\end{figure}

\begin{figure}
	\centering
	\includegraphics[height = 4.6cm, width = 8.2cm]{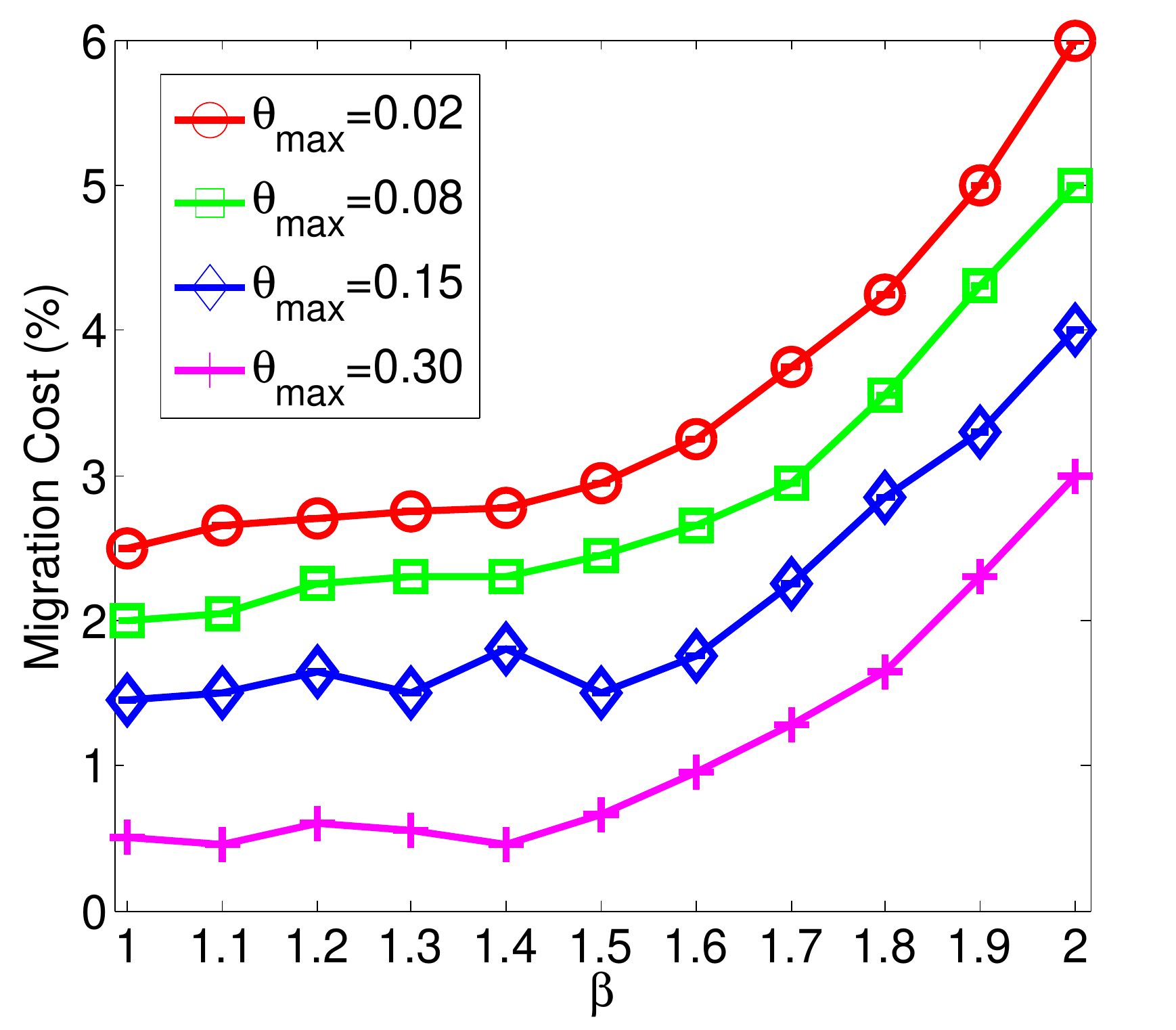}
	\centering \caption{Migration cost in different $\beta$.}	\label{fig:beta_migrationcost}
\end{figure}
To characterize both computation and migration cost, we propose the \textit{migration priority index} for each key, defined as $\gamma_i(k, w) = c_i(k)^\beta S_i(k, w)^{-1}$ as in Sec.~\ref{subsec:MinTableandMinMigHeuristics}. $\beta$ is used to measure the importance of computation cost and the memory consumption which decides the \textit{migration priority} of keys.
In Alg.~\ref{alg:MinMig}, a key with the larger $\gamma_i(k, w)$ has the higher priority to be migrated.
Larger $\beta$ represents that the migration method  concerns more on faster computation rather than on less migration cost (memory consumption).
Furthermore, larger $\beta$ will produce smaller routing table since the migration method preferentially migrates keys with large load.
Fig.~\ref{fig:beta_routingtable} and Fig.~\ref{fig:beta_migrationcost} show the change of routing table size and migration cost with different values of $\beta$.
Those results are produced by the \emph{MinMig} algorithm, and each result is an average value generated by running 10 times of balance adjustments. In Fig.~\ref{fig:beta_routingtable}, 
$\beta=1$ means the migration candidates are evaluated according to the load per unit memory consumption. In this case, keys with smaller load would be selected and a larger routing table  will be generated. As $\beta$ is set larger, migration method gradually tends to move the bigger load keys, then routing table size becomes smaller. Furthermore, when $\beta \in [1.5 , 2]$, routing table size is stable because the migration candidates are almost selected only by the load of keys.
The lines in Fig.~\ref{fig:beta_migrationcost} show results by the influence of parameter $\beta$.
Based on these sets of parameter tests, we select $\beta=1.5$ as the default value in our experiments.

\end{document}